\renewcommand*\@fnsymbol[1]{\the#1}
\theoremstyle{plain}
\newtheorem{theorem}{Theorem}[section]
\newtheorem{lemma}[theorem]{Lemma}
\newtheorem{proposition}[theorem]{Proposition}
\newtheorem{corollary}[theorem]{Corollary}
\theoremstyle{definition}
\newtheorem{definition}[theorem]{Definition}
\theoremstyle{remark}
\newtheorem{remark}[theorem]{Remark}
\newtheorem{example}[theorem]{Example}
\numberwithin{equation}{section}
\newcommand{\elig}{\mathscr{S}}
\newcommand{\extens}{\mathscr{E}_\pi(\elig)}
\newcommand{\Dom}{\mathop{\rm dom}\nolimits}
\newcommand{\Closure}{\mathop{\rm cl}\nolimits}
\newcommand{\Interior}{\mathop{\rm int}\nolimits}
\newcommand{\Core}{\mathop{\rm core}\nolimits}
\newcommand{\e}{\varepsilon}
\newcommand{\VaR}{\mathop {\rm VaR}\nolimits}
\newcommand{\HS}{{\mathcal {H}^+}}
\newcommand{\E}{{\mathbb E}}
\newcommand{\N}{{\mathbb N}}
\newcommand{\R}{{\mathbb R}}
\newcommand{\probp}{\mathbb P}
\newcommand{\probq}{\mathbb Q}
\newcommand{\expected}{\mathbb E}
\newcommand{\cA}{{\mathscr{A}}}
\newcommand{\cB}{{\mathscr{B}}}
\newcommand{\cC}{{\mathscr{C}}}
\newcommand{\cD}{{\mathscr{D}}}
\newcommand{\cF}{{\mathcal{F}}}
\newcommand{\cM}{{\mathscr{M}}}
\newcommand{\cP}{{\mathcal{P}}}
\newcommand{\cS}{{\mathscr{S}}}
\newcommand{\cU}{{\mathscr{U}}}
\newcommand{\cX}{{\mathscr{X}}}
\def\keywords{\vspace{.5em}
{\noindent\textbf{Keywords}:\,\relax%
}}
\def\JELclassification{\vspace{.5em}
{\noindent\textbf{JEL classification}:\,\relax%
}}
\def\MSCclassification{\vspace{.5em}
{\noindent\textbf{MSC}:\,\relax%
}}
\def\@fnsymbol#1{\ensuremath{\ifcase#1\or *\or 1\or 2\or
   3\or 4\or 5\or 6\or 7\or 8\else\@ctrerr\fi}}
\begin{document}

\title{Measuring risk with multiple eligible assets\footnote{Partial support through the SNF project 51NF40-144611 ``Capital adequacy, valuation, and portfolio selection for insurance companies'' is gratefully acknowledged.}}

\author{\sc{Walter Farkas}\thanks{Email: \texttt{walter.farkas@bf.uzh.ch}}\,, \sc{Pablo Koch-Medina}\thanks{Email: \texttt{pablo.koch@bf.uzh.ch}}}
\affil{Department of Banking and Finance, University of Zurich, Switzerland}

\author{\sc{Cosimo Munari}\,\thanks{Email: \texttt{cosimo.munari@math.ethz.ch}}}
\affil{Department of Mathematics, ETH Zurich, Switzerland}

\date{March 3, 2014}

\maketitle

\begin{abstract}
The risk of financial positions is measured by the minimum amount of capital to raise and invest in \textit{eligible} portfolios of traded assets in order to meet a prescribed \textit{acceptability} constraint. We investigate nondegeneracy, finiteness and continuity properties of these risk measures with respect to multiple eligible assets. Our finiteness and continuity results highlight the interplay between the acceptance set and the class of eligible portfolios. We present a simple, alternative approach to the dual representation of convex risk measures by directly applying to the acceptance set the external characterization of closed, convex sets. We prove that risk measures are nondegenerate if and only if the pricing functional admits a positive extension which is a supporting functional for the underlying acceptance set, and provide a characterization of when such extensions exist. Finally, we discuss applications to set-valued risk measures, superhedging with shortfall risk, and optimal risk sharing.
\end{abstract}

\keywords{risk measures, multiple eligible assets, acceptance sets, dual representations, set-valued risk measures, superhedging with shortfall risk, optimal risk sharing}

\MSCclassification{91B30, 46A40, 46A20, 46A22}

\JELclassification{C60, G11, G22}

\parindent 0em \noindent

\section{Introduction}

The objective of this paper is to provide a variety of new results for risk measures with respect to multiple eligible assets as introduced by Scandolo in~\cite{Scandolo2004} and Frittelli and Scandolo in~\cite{FrittelliScandolo2006}, and further investigated by Artzner, Delbaen, and Koch-Medina in~\cite{ArtznerDelbaenKoch2009}.

\medskip

To understand our financial motivation consider an economy with dates $t=0$ and $t=T$, and let $\cX$ be a topological vector space ordered by a convex cone $\cX_+$ and representing the space of \textit{capital positions} of a given financial institution at time $t=T$. The institution is deemed to be adequately capitalized if its position belongs to a pre-specified {\em acceptance set} $\cA\subset\cX$, i.e. a proper set satisfying $\cA+\cX_+\subset\cA$. The institution has to deal with the following capital adequacy problem: {\em Assuming a position is not acceptable, can it be made acceptable through appropriate management actions and, if so, at which cost?} To turn this question into a mathematically tractable problem we need to specify a nonempty set $\cM$ representing the ``eligible'' or ``admissible'' management actions, a ``cost'' function $c:\cM\to\R$ assigning to each eligible management action its cost, and an ``impact'' function $I:\cX\times\cM\to\cX$ assigning to each position $X$ and each admissible action $m$ the new position $I(X,m)$ obtained from $X$ after implementing $m$. The ``minimal'' cost of making $X$ acceptable can be then expressed as
\begin{equation}
\rho(X): = \inf\{\, c(m): \, m\in\cM, \;\, I(X,m)\in\cA \,\}\,.
\end{equation}

The \textit{general capital requirements} defined in Scandolo~\cite{Scandolo2004} and studied in Frittelli and Scandolo~\cite{FrittelliScandolo2006} and Artzner, Delbaen, and Koch-Medina~\cite{ArtznerDelbaenKoch2009} have this form. The corresponding class of eligible actions consists in a subset $\elig$ of $\cX$, interpreted as the set of payoffs of eligible strategies, equipped with a pricing functional $\pi:\elig\to\R$. For $X\in\cX$ and $Z\in\elig$ the impact function has the form $I(X,Z):=X+Z$, leading to
\begin{equation}
\rho_{\cA,\elig,\pi}(X): = \inf\{\,\pi(Z): \, Z\in\elig, \;\, X+Z\in\cA\,\}\,.
\end{equation}

\medskip

In this paper we will refer to the triple $(\cA,\elig,\pi)$ as a {\em risk measurement regime} and to the map $\rho_{\cA,\elig,\pi}:\cX\to\overline{\R}$ as a {\em risk measure with respect to multiple eligible assets}, or {\em multi-asset risk measure} for short. Clearly, this map can be seen as a generalized version of the risk measures introduced in Artzner, Delbaen, Eber, and Heath~\cite{ArtznerDelbaenEberHeath1999} for which the only allowed management action is to raise capital and invest it in a {\em single} eligible asset, typically taken to be cash.

\medskip

Under the assumption that $\elig$ is a linear space and $\pi:\elig\to\R$ is a linear functional, we provide several new results on the finiteness and continuity properties of multi-asset risk measures. In the convex case, continuity properties are particularly important since they ensure the availability of dual representations. With regard to dual representations, we follow an alternative approach based directly on the geometry of acceptance sets rather than on conjugate functions. Finally, we apply our results to obtain dual representations in several relevant situations, including a dual representation for scalarized set-valued risk measures, for the cost of superhedging with shortfall risk, and for infimal convolutions in the context of risk sharing. Incidentally, note that the generality of the possible underlying spaces, together with the link to set-valued risk measures, shows that our approach can be easily applied to the measurement of the risk of multivariate positions.

\medskip

\textit{Finiteness and continuity}. A natural first step when studying multi-asset risk measures of the form $\rho_{\cA,\elig,\pi}$ is to determine conditions under which they are finite-valued and continuous. This is undertaken in Section~\ref{finiteness continuity section}, where we start by discussing general acceptance sets in Proposition~\ref{finiteness and NAA when a portfolio is order unit} and then narrow down our analysis to convex and coherent acceptance sets in Proposition~\ref{convex strictly positive} and Proposition~\ref{finiteness and coherent core}, respectively. In this respect, a key tool is the simple but far-reaching Lemma~\ref{reduction lemma}, the so-called Reduction Lemma. This result shows how to express a multi-asset risk measure as a risk measure with respect to a single asset by properly enlarging the acceptance set. The ``augmented'' acceptance set consists of all positions which are acceptable up to a zero-cost eligible strategy.

\medskip

\textit{Dual representations}. In Theorem~\ref{pointwise repr formula for multiple assets} we provide a dual representation for convex, lower semicontinuous multi-asset risk measures. Instead of using conjugate duality methods, we exploit the particular structure of risk measures and opt for an alternative approach based on a direct analysis of the geometry of the underlying acceptance sets. This approach proves to be very effective when characterizing nondegeneracy or finiteness of risk measures. In particular, it leads naturally to conditions reflecting the interplay of the acceptance set and the eligible space, which are the two defining objects for multi-asset risk measures. In this respect, a key requirement for nondegeneracy is the existence of extensions of the pricing functional that are supporting functionals for the underlying acceptance set. In Theorem~\ref{ext of pi} we provide equivalent conditions for the existence of such extensions. This result is of independent interest and generalizes classical results on the extension of positive linear functionals.

\medskip

{\em Set-valued risk measures}. In Section~\ref{set valued section} we investigate the link between multi-asset risk measures and set-valued risk measures as defined by Hamel, Heyde and Rudloff in~\cite{HamelHeydeRudloff2011}. Set-valued risk measures have been recently study in connection with capital adequacy or margin requirement problems in market models characterized by proportional transaction costs. In Proposition~\ref{from set-valued to multi} we show that the scalarization of a set-valued risk measure by means of a consistent pricing system can be expressed as a multi-asset risk measure. Hence, we can apply our general results and provide new finiteness and continuity results, as well as dual representations, for convex scalarized set-valued risk measures in Proposition~\ref{dual repr for scal set valued}. Finally, we focus on special acceptance sets satisfying the market compatibility condition introduced in~\cite{HamelHeydeRudloff2011} and we specify the dual representation to this case in Corollary~\ref{dual repr scal compatible set valued}.

\medskip

{\em Superhedging with shortfall risk}. In Section~\ref{shortfall section} we focus on the superhedging problem considered by Arai in~\cite{Arai2011}. We extend to arbitrary (convex) classes of admissible trading strategies the key Lemma 5.1 in~\cite{Arai2011} which was established for so-called $W$-admissible strategies. As a result, we are able to provide in Proposition~\ref{general dual repr arai} a sharper dual representation for the corresponding superhedging cost.

\medskip

\textit{Optimal risk sharing}. In Section~\ref{inf conv section} we show that the infimal convolution of single-asset risk measures can always be expressed as a multi-asset risk measure. Since infimal convolutions are important tools when studying optimal risk sharing across several business lines, our results find a natural application in this context. In particular, we provide a dual representation for the infimal convolution of convex risk measures in Proposition~\ref{dual repr inf conv of risk measures}.


\subsubsection*{Related literature}
Risk measures of the form $\rho_{\cA,\elig,\pi}$ seem to have been first considered by Scandolo in~\cite{Scandolo2004}. That paper is devoted to providing a financial motivation for that type of risk measures and only contains basic mathematical results. The investigation of multi-asset risk measures was resumed by Frittelli and Scandolo in~\cite{FrittelliScandolo2006}. The main focus there is on developing a theory of risk measures for processes of bounded random variables. However, when restricted to a one-period economy, their risk measures are not genuinely multi-asset since, as remarked after Definition~4.3 in~\cite{FrittelliScandolo2006}, they turn out to coincide with standard cash-additive risk measures on $L^\infty$. Hence, none of our results on finiteness and (lower semi)continuity is contained in that paper. In the convex case, the authors provide a dual representation for finite-valued risk measures by means of conjugate duality. As already mentioned, in this paper we opt for an alternative approach exploiting the particular structure of risk measures and building on a dual representation of the underlying acceptance set. This approach has the advantage to immediately produce conditions highlighting the interplay between the acceptance set (through its barrier cone) and the eligible space (through the set of positive extensions of the pricing functional).

\smallskip

The investigation of multi-asset risk measures was further pursued by Artzner, Delbaen, and Koch-Medina in~\cite{ArtznerDelbaenKoch2009}. The setting there was that of coherent acceptance sets on the space of random variables defined on a finite sample space, and the main focus was on the compatibility of risk measures defined in different currencies. In that context, finiteness and continuity, as well as dual representations, were not investigated. Finally, we mention the paper by Kountzakis~\cite{Kountzakis2009}. There the author proves a representation theorem for coherent multi-asset risk measures under the assumption that $\cX$ is a reflexive Banach space and that some eligible payoff in $\elig$ is an interior point of the positive cone of $\cX$. In particular, this result does not apply to spaces ordered by cones with empty interior, such as $L^p$ spaces, $1\leq p<\infty$, when the underlying probability space is nonatomic.


\section{Risk measures in a multi-asset setting}

In this introductory section we investigate the basic properties of multi-asset risk measures defined on an ordered topological vector space.


\subsection{Financial positions and acceptance sets}

The space $\cX$ of financial positions is assumed to be an \textit{ordered topological vector space} over $\R$ with positive cone $\cX_+$. In particular, $\cX_+$ is assumed to be a pointed, convex cone. We use the standard notation $Y\geq X$ whenever $Y-X\in\cX_+$. Unless explicitly stated, we do not assume that $\cX_+$ has nonempty interior. By $\cX'$ we denote the topological dual space of $\cX$, which can itself be viewed as an ordered topological vector space whose positive cone $\cX'_+$ consists of all {\em positive} functionals, i.e. of all elements $\psi\in\cX'$ such that $\psi(X)\geq0$ whenever $X\in\cX_+$. We use the following standard terminology suggested by the natural duality pairing between $\cX$ and $\cX^\prime$: An element $X\in\cX_+$ is said to be a {\em strictly positive element} whenever $\psi(X)>0$ for every nonzero $\psi\in\cX^\prime_+$ and a functional $\psi\in\cX'_+$ is said to be a {\em strictly positive functional} whenever $\psi(X)>0$ for every nonzero $X\in\cX_+$.

\smallskip

If $\cA$ is a subset of $\cX$, we denote by $\Interior(\cA)$, $\Closure(\cA)$ and $\partial\cA$ the interior, the closure and the boundary of~$\cA$, respectively. Moreover, by $\Core(\cA)$ we denote the \textit{core} of $\cA$, i.e. the set of all positions $X\in\cA$ such that for each $Y\in\cX$ there exists $\e>0$ with $X+\lambda Y\in\cA$ whenever $\left|\lambda\right|<\e$. Note that $\cA$ is said to be a \textit{cone} if $\lambda\cA\subset\cA$ for every $\lambda\geq0$.

\begin{remark}
\label{remark: order structure standard spaces}
Let $(\Omega,\cF,\probp)$ be probability space and $d\in\N$. The ordered topological vector spaces used in financial mathematics are typically subspaces of the space $L^0_d$ of all $\cF$-measurable functions $X:\Omega\to\R^d$. As usual, elements in $L^0_d$ that coincide almost surely are identified. This space is naturally ordered by the convex cone of all $X\in L^0_d$ such that $X\in\R^d_+$ almost surely, where $\R^d_+$ denotes the standard positive cone in $\R^d$. Standard subspaces of $L^0_d$ we consider are $d$-dimensional $L^p$ spaces, $1\leq p\le\infty$, and Orlicz spaces. These spaces are equipped with the usual norms unless explicitly stated. If $d=1$ we will drop the subscript $d$.
\end{remark}

We start by recalling the notion of an acceptance set.

\begin{definition}
\label{acceptance set definition}
A set $\cA\subset\cX$ is called an {\em acceptance set} whenever the following two conditions are satisfied:
\begin{enumerate}[(i)]
\item $\cA$ is a nonempty, proper subset of $\cX$ (non-triviality);\label{non-triviality axiom}
\item if $X\in\cA$ and $Y\geq X$ then $Y\in\cA$ (monotonicity).\label{monotonicity axiom}
\end{enumerate}
\end{definition}

Later on we will focus on convex acceptance sets and coherent acceptance sets, i.e. acceptance sets that are convex cones. We refer to~\cite{FoellmerSchied2011} for a financial interpretation and examples.

\medskip

We recall a simple but fundamental property of acceptance sets, see Lemma~3.11 in~\cite{FarkasKochMunari2012a}: Any positive halfspace containing an acceptance set must be generated by a positive functional. Here, the {\em (positive) halfspace} generated by a functional $\psi:\cX\to\R$ at a level $\alpha\in\R$ is the set
\begin{equation}
\HS(\psi,\alpha):=\{\,X\in\cX: \, \psi(X)\ge\alpha\,\}\,.
\end{equation}

\begin{lemma}
\label{halfspaces containing acceptance sets}
Let $\cA\subset\cX$ be a nonempty monotone set, and take a linear functional $\psi:\cX\to\R$. If $\cA\subset\HS(\psi,\alpha)$ for some $\alpha\in\R$, then $\psi$ is positive.
\end{lemma}

The above result will be used repeatedly without further mention.


\subsection{Risk measures with multiple eligible assets}

Consider a financial market described by a vector subspace $\cM\subset\cX$. The space $\cM$ is called the \textit{marketed space} and its elements represent the {\em marketed payoffs}. We think of marketed payoffs as the payoffs that can be replicated by executing ``admissible'' trading strategies. We assume the existence of a linear pricing functional $\pi:\cM\to\R$ such that, for every market payoff $Z\in\cM$, the quantity $\pi(Z)$ represents the initial value of the replicating strategy\footnote{For our purposes it suffices to know that there is a linear pricing functional assigning market values to marketed payoffs. For more details on the underlying market models, also for the case of infinite dimensional marketed spaces, we refer to Clark~\cite{Clark1993} and Kreps~\cite{Kreps1981}.}. Moreover, we assume that the market is free of arbitrage opportunities by requiring that $\pi$ is a strictly positive functional, i.e. that $\pi(Z)>0$ for any nonzero positive $Z\in\cM$.

\medskip

If the financial position of a financial institution is not acceptable with respect to a given acceptance set $\cA\subset\cX$, it is natural to ask which management actions can turn it into an acceptable position and at which cost. In this paper we allow financial institutions to modify the acceptability profile of their capital position by raising capital and investing it in admissible portfolios with payoff in a fixed subspace $\elig$ of the marketed space $\cM$. The restriction to strategies with payoffs in $\elig\subset\cM$ is meant to capture situations where a financial institution may only be able to operate in a segment of the full market. The space $\elig$ will be referred to as the {\em eligible space}.

\medskip

\begin{remark}
Note that our market model is general enough to accommodate a variety of common situations. In particular, the eligible space $\elig$ may be finite or infinite dimensional. In the former case, we can always assume that $\elig$ is spanned by a finite number of linearly independent payoffs. Typical examples of infinite-dimensional marketed spaces arise in connection to continuous-time trading between time $t=0$ and $t=T$.
\end{remark}

\medskip

Throughout the paper we will use the following notation. The extended real line $\R\cup\{\pm\infty\}$ will be denoted by $\overline{\R}$. For a subspace $\elig\subset\cM$ and~$m\in\R$, we will write
\begin{equation}
\elig_m:=\{\,Z\in\elig: \, \pi(Z)=m\,\}\,.
\end{equation}

For the triple $(\cA,\cS,\pi)$, where $\cA\subset\cX$ is an arbitrary subset, $\elig$ is a subspace of $\cM$, and $\pi:\cM\to\R$ is the pricing functional, we define the map $\rho_{\cA,\elig,\pi}:\cX\to\overline{\R}$ by setting
\begin{equation}
\rho_{\cA,\elig,\pi}(X):=\inf\{\,\pi(Z): \, Z\in\elig, \;\, X+Z\in\cA\,\}\,.
\end{equation}

We say $\rho_{\cA,\cS,\pi}$ is a risk measure with respect to {\em multiple eligible assets} whenever $\dim(\cS)\ge 2$ and a risk measure with respect to a {\em single eligible asset} if $\dim(\cS)=1$. In the latter case $\elig$ is generated by a nonzero $U\in\cM$ and, when $\pi(U)\neq0$, we set
\begin{equation}
\label{def single asset}
\rho_{\cA,U,\pi}(X):=\rho_{\cA,\elig,\pi}(X)=\inf\Big\{\,m\in\R: \, X+\frac{m}{\pi(U)}U\in\cA\,\Big\}\,.
\end{equation}

The interpretation of the map $\rho_{\cA,\elig,\pi}$ is clear. If $\cA\subset\cX$ is an acceptance set and $\elig$ a subspace of $\cM$, then for a given $X\in\cX$ the quantity $\rho_{\cA,\elig,\pi}(X)$ represents, when finite, the ``minimum'' amount of capital that needs to be raised and invested at time $t=0$ in portfolios with payoff in $\elig$ in order to make the position $X$ acceptable.

\medskip

\begin{remark}
Risk measures of the form~\eqref{def single asset} were originally introduced in~\cite{ArtznerDelbaenEberHeath1999} for financial positions on a finite sample space. Most of the subsequent literature focused on {\em cash-additive} risk measures, i.e. risk measures for which the eligible asset is cash. The case of a general eligible asset has been recently investigated by the authors in~\cite{FarkasKochMunari2012a} for bounded financial positions, and in~\cite{FarkasKochMunari2013b} for positions belonging to a general ordered topological vector space.
\end{remark}

\medskip

For $\rho_{\cA,\elig,\pi}$ to be a sound risk measurement tool, we need to impose additional conditions on the triple $(\cA,\elig,\pi)$.

\medskip

\begin{definition}
The triple $(\cA,\elig,\pi)$ is called a \textit{risk measurement regime} if $\cA$ is an acceptance set and $\elig$ a subspace of $\cM$ satisfying $\elig\cap\cX_+\neq\{0\}$.

We say that $(\cA,\elig,\pi)$ is a convex, respectively coherent, risk measurement regime when $\cA$ is a convex, respectively coherent, acceptance set.
\end{definition}

\medskip

\begin{remark}
Let $(\cA,\elig,\pi)$ be a risk measurement regime. Requiring that $\elig$ contains some nonzero positive payoff $U$ is a natural assumption which is satisfied whenever $U$ is the payoff of a bond or a stock. In this case, since $\pi(U)>0$ by no arbitrage, we can use the payoff $U$ as an effective vehicle to decrease the amount of required capital given that $\rho_{\cA,\elig,\pi}(X+\lambda U)=\rho_{\cA,\elig,\pi}(X)-\lambda\pi(U)<\rho_{\cA,\elig,\pi}(X)$ holds for any $\lambda>0$.
\end{remark}

\medskip

Before stating the basic properties of multi-asset risk measures, we recall some standard terminology. The \textit{effective domain} of a map $\rho:\cX\to\overline{\R}$ is
\begin{equation}
\Dom(\rho):=\{\,X\in\cX: \, \rho(X)<\infty\,\}\,.
\end{equation}
The function $\rho$ is said to be \textit{convex}, \textit{subadditive}, or \textit{positively homogeneous}, whenever its epigraph is convex, closed under addition, or it is a cone, respectively. The map $\rho$ is said to be {decreasing} if $\rho(X)\geq\rho(Y)$ whenever $X\le Y$. Finally, if $\elig$ is a subspace of $\cM$, we say that $\rho$ is {\em $\elig$-additive} whenever
\begin{equation}
\rho(X+Z)=\rho(X)-\pi(Z) \ \ \ \mbox{for all} \ X\in\cX \ \mbox{and} \ Z\in\elig\,.
\end{equation}

The following lemma records a few straightforward properties of maps of the form $\rho_{\cA,\elig,\pi}$ and is stated without proof.

\medskip

\begin{lemma}
\label{multiple assets translation and monotonicity}
Let $\cA\subset\cX$ be an arbitrary nonempty set and $\elig$ a subspace of $\cM$. Then $\rho_{\cA,\elig,\pi}$ satisfies the following properties:
\begin{enumerate}[(i)]
 	\item $\Dom(\rho_{\cA,\elig,\pi})=\cA+\elig$;
 	\item $\rho_{\cA,\elig,\pi}$ is $\elig$-additive;
  \item if $\cA$ is convex, closed under addition, or a cone, then $\rho_{\cA,\cS,\pi}$ is convex, subadditive, or positively homogeneous, respectively;
  \item if $\cA$ is a monotone set, then $\rho_{\cA,\elig,\pi}$ is decreasing;
  \item if $(\cA,\elig,\pi)$ is a risk measurement regime, the set $\{\,\pi(Z): \, Z\in\elig, \;\, X+Z\in\cA\,\}$ is a (possibly empty) interval which is unbounded to the right for any $X\in\cX$.
\end{enumerate}
\end{lemma}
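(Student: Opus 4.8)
The plan is to verify each of the five properties directly from the definition $\rho_{\cA,\elig,\pi}(X)=\inf\{\pi(Z):Z\in\elig,\,X+Z\in\cA\}$, treating them as essentially bookkeeping consequences of linearity of $\pi$ and the structure of $\elig$. For property~(i), I would observe that $\rho_{\cA,\elig,\pi}(X)<\infty$ holds precisely when the feasible set $\{Z\in\elig:X+Z\in\cA\}$ is nonempty, i.e. when there exists $Z\in\elig$ with $X+Z\in\cA$; writing $A=X+Z\in\cA$ gives $X=A-Z=A+(-Z)\in\cA+\elig$ since $\elig$ is a subspace, and conversely every element of $\cA+\elig$ admits such a representation. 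For property~(ii), the $\elig$-additivity, I would fix $Z_0\in\elig$ and use the substitution $Z\mapsto Z+Z_0$, which is a bijection of $\elig$ onto itself; since $(X+Z_0)+Z=X+(Z+Z_0)$ and $\pi(Z)=\pi(Z+Z_0)-\pi(Z_0)$ by linearity, the infimum defining $\rho_{\cA,\elig,\pi}(X+Z_0)$ equals the infimum defining $\rho_{\cA,\elig,\pi}(X)$ shifted by $-\pi(Z_0)$.

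For property~(iii) I would invoke Lemma~\ref{multiple assets translation and monotonicity} is self-referential, so instead I would argue each of the three implications via the epigraph characterizations given just before the lemma. The cleanest route is to recall that $\rho_{\cA,\elig,\pi}$ is, up to the sign convention on $\pi$, the (negative of the) support-type functional associated with the set $\cA$ along the subspace $\elig$; concretely, one checks that $\Epi(\rho_{\cA,\elig,\pi})$ relates to $\cA+\elig$ together with the graph of $\pi$, and that convexity, closure under addition, and the cone property of $\cA$ transfer to the corresponding properties of the epigraph. I would handle this by taking $X_1,X_2$ with near-optimal eligible payoffs $Z_1,Z_2$ and forming the appropriate convex combination (for convexity), sum (for subadditivity), or positive scaling (for homogeneity), using that $\elig$ is a subspace so the combined payoffs remain in $\elig$, and that the corresponding combination of $X_i+Z_i$ lands in $\cA$ by the assumed property of $\cA$. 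Property~(iv) follows from monotonicity of $\cA$: if $X\le Y$ then for any feasible $Z$ for $X$ we have $Y+Z\ge X+Z\in\cA$, so $Y+Z\in\cA$, whence the feasible set for $Y$ contains that for $X$ and the infimum can only decrease, giving $\rho_{\cA,\elig,\pi}(Y)\le\rho_{\cA,\elig,\pi}(X)$.

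The only genuinely substantive point is property~(v), which I expect to be the main (mild) obstacle since it is where the regime hypotheses $\elig\cap\cX_+\neq\{0\}$ and $\cA$ monotone both enter. I would set $V(X):=\{\pi(Z):Z\in\elig,\,X+Z\in\cA\}$ and argue it is an interval that is unbounded above. For the interval property, I would take two values $\pi(Z_1),\pi(Z_2)\in V(X)$ with $\pi(Z_1)<\pi(Z_2)$ and a target level $t\in(\pi(Z_1),\pi(Z_2))$; writing $t=\lambda\pi(Z_1)+(1-\lambda)\pi(Z_2)$ is not enough on its own because $\cA$ need not be convex, so instead I would exploit the nonzero positive eligible payoff $U\in\elig\cap\cX_+$ with $\pi(U)>0$ by no arbitrage: starting from $Z_1$ (the cheaper feasible payoff) I add a suitable multiple $\mu U$ with $\mu\ge 0$ so that $\pi(Z_1+\mu U)=\pi(Z_1)+\mu\pi(U)=t$, and since $X+Z_1+\mu U\ge X+Z_1\in\cA$, monotonicity of $\cA$ forces $X+Z_1+\mu U\in\cA$, so $t\in V(X)$. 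The same construction, letting $\mu\to\infty$, shows $V(X)$ contains arbitrarily large values and is therefore unbounded to the right, completing the proof.
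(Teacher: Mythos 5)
Your proof is correct, and the paper in fact states this lemma without proof precisely because the intended argument is the routine verification you carry out: (i)--(iv) follow directly from the definition of the infimum, linearity of $\pi$, and the relevant structural property of $\cA$, while (v) uses the nonzero positive $U\in\elig$ with $\pi(U)>0$ together with monotonicity of $\cA$ to show the feasible value set is an up-set in $\R$. The only blemish is the garbled self-referential sentence at the start of your treatment of (iii), which you immediately supersede with the correct direct argument via near-optimal eligible payoffs.
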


\medskip

\begin{remark}
Note that $\{\, X\in\cX : \, \rho_{\cA,\elig,\pi}(X)\le 0\, \}$ is a nonempty, monotone set. Proposition~\ref{pointwise semicontinuity} below implies that, if $\rho_{\cA,\elig,\pi}$ is lower semicontinuous, then this set coincides with $\Closure(\cA+\elig_0)$. Hence, in contrast to the single-asset case, we cannot use $\{\, X\in\cX : \, \rho_{\cA,\elig,\pi}(X)\le 0\, \}$ to recover the closure of the original acceptance set $\cA$, but only the closure of the ``augmented'' acceptance set $\cA+\elig_0$.
\end{remark}


\subsection{From several assets to a single asset}

In this brief section we extend and complement the Corollary on page 112 in~\cite{ArtznerDelbaenKoch2009} and show how to convert risk measures with multiple eligible assets into risk measures with respect to a single eligible asset by properly augmenting the underlying acceptance set.

\medskip

\begin{lemma}[Reduction Lemma]
\label{reduction lemma}
Let~$\cA$ be an arbitrary nonempty set in $\cX$ and $\elig$ a subspace of $\cM$. If $U\in\elig$ is a nonzero positive payoff, then
\begin{equation}
\rho_{\cA,\elig,\pi}(X)=\rho_{\cA+\elig_0,U,\pi}(X)
\end{equation}
for every $X\in\cX$.
\end{lemma}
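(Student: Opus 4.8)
The plan is to show that the two infima are taken over sets of real numbers that coincide, by exploiting the direct-sum structure that the payoff $U$ induces on the eligible space. Since $U\in\elig$ is a nonzero positive payoff and $\pi$ is strictly positive, we have $\pi(U)>0$; in particular $\pi(U)\neq 0$, so the single-asset risk measure $\rho_{\cA+\elig_0,U,\pi}$ is well-defined through formula~\eqref{def single asset}. The first observation I would record is that every $Z\in\elig$ admits the decomposition
\begin{equation}
Z=\frac{\pi(Z)}{\pi(U)}\,U+\Big(Z-\frac{\pi(Z)}{\pi(U)}\,U\Big),
\end{equation}
where the second summand lies in $\elig_0$ because applying $\pi$ to it yields $\pi(Z)-\frac{\pi(Z)}{\pi(U)}\pi(U)=0$. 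Together with $\R U\cap\elig_0=\{0\}$ (forced again by $\pi(U)\neq0$), this gives the splitting $\elig=\R U\oplus\elig_0$, which is the structural fact driving the whole argument.

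With this in hand, I would prove the two inequalities by transferring feasible points back and forth. For the inequality $\rho_{\cA,\elig,\pi}(X)\le\rho_{\cA+\elig_0,U,\pi}(X)$, I would take any $m\in\R$ admissible in the single-asset problem, that is $X+\tfrac{m}{\pi(U)}U\in\cA+\elig_0$, write $X+\tfrac{m}{\pi(U)}U=A+Z_0$ with $A\in\cA$ and $Z_0\in\elig_0$, and set $W:=\tfrac{m}{\pi(U)}U-Z_0\in\elig$. Then $X+W=A\in\cA$ while $\pi(W)=m$, so $m$ belongs to the set over which $\rho_{\cA,\elig,\pi}(X)$ is the infimum; passing to the infimum over all such $m$ gives the inequality.

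For the reverse inequality I would argue symmetrically: given $Z\in\elig$ with $X+Z\in\cA$, I put $m:=\pi(Z)$ and $Z_0:=\tfrac{m}{\pi(U)}U-Z$, which lies in $\elig_0$ since $\pi(Z_0)=0$. Then $X+\tfrac{m}{\pi(U)}U=(X+Z)+Z_0\in\cA+\elig_0$, so $m=\pi(Z)$ is admissible in the single-asset problem and $\rho_{\cA+\elig_0,U,\pi}(X)\le\pi(Z)$; taking the infimum over all feasible $Z$ closes the loop. Combining the two inequalities yields the claimed identity for every $X\in\cX$.

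I do not expect any genuine obstacle here: the content is entirely the bijective correspondence $Z\leftrightarrow(\pi(Z),Z_0)$ between the feasible set of the multi-asset problem and that of the augmented single-asset problem, together with the matching of objective values via $m=\pi(Z)$. The only points requiring care are that the decomposition and both constructions rely on $\pi(U)\neq0$, which is exactly where the positivity of $U$ and the strict positivity of $\pi$ enter, and that the identity is understood in $\overline{\R}$: in the degenerate case where no eligible $Z$ makes $X+Z$ acceptable, the correspondence shows the single-asset feasible set is empty as well, so both sides equal $+\infty$.
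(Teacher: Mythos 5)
Your proof is correct and rests on the same key fact as the paper's, namely the decomposition $\elig=\R U\oplus\elig_0$, equivalently $\elig_m=\frac{m}{\pi(U)}U+\elig_0$; the paper simply compresses your two inequalities into a single chain of equalities identifying the two feasible sets. No issues.
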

\begin{proof}
Note that $\elig_m=\frac{m}{\pi(U)}U+\elig_0$ for any $m\in\R$. It follows that for every $X\in\cX$
\begin{eqnarray}
\rho_{\cA,\elig,\pi}(X)
&=&
\inf\{\,m\in\R: \, (X+\elig_m)\cap\cA\neq\emptyset\,\} \\
&=&
\inf\Big\{\,m\in\R: \, X+\frac{m}{\pi(U)}U\in\cA+\elig_0\,\Big\} = \rho_{\cA+\elig_0,U,\pi}(X)\,,
\end{eqnarray}
concluding the proof.
\end{proof}

\medskip

\begin{remark}
\label{when the sum is acceptance set}
The augmented set $\cA+\elig_0$ is always nonempty and monotone. Hence, it is itself an acceptance set whenever it is strictly contained in $\cX$.
\end{remark}


\section{No acceptability arbitrage, finiteness, and continuity}
\label{finiteness continuity section}

In this section we provide a variety of finiteness and continuity results for multi-asset risk measures.

\medskip

We start by showing that $\rho_{\cA,\elig,\pi}(0)>-\infty$ if and only if the pricing functional is bounded from below on $\cA\cap\elig$. If this property is not satisfied, then there exist acceptable positions at arbitrarily negative cost allowing for what could be called ``acceptability arbitrage''. As we will see, the absence of ``acceptability arbitrage'' is typically sufficient to ensure finiteness and continuity of multi-asset risk measures.

\medskip

\begin{lemma}
\label{lemma no acc arbitrage}
Let $(\cA,\elig,\pi)$ be a risk measurement regime. The following statements are equivalent:
\begin{enumerate}[(a)]
	\item $\cA\cap\{\,Z\in\elig: \, \pi(Z)\leq m\,\}=\emptyset$ \,for some $m\in\R$;
	\item $\cA\cap\elig_m=\emptyset$ \,for some $m\in\R$;
	\item $\rho_{\cA,\elig,\pi}(0)>-\infty$.
\end{enumerate}
\end{lemma}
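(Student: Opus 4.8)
The plan is to recognize that (a) and (c) are merely two ways of saying that $\pi$ is bounded below on $\cA\cap\elig$, and then to bridge these with (b) through a monotone-lifting argument that exploits the positive eligible payoff built into the definition of a regime.

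First I would dispose of the equivalence (a) $\Leftrightarrow$ (c). Taking $X=0$ in the definition gives $\rho_{\cA,\elig,\pi}(0)=\inf\{\pi(Z):Z\in\elig,\ Z\in\cA\}$, so (c) asserts exactly that this infimum is finite, i.e. that $\pi$ is bounded below on $\cA\cap\elig$. If (a) holds for some threshold $m$, then every $Z\in\cA\cap\elig$ satisfies $\pi(Z)>m$, whence $\rho_{\cA,\elig,\pi}(0)\geq m>-\infty$; conversely, if $\rho_{\cA,\elig,\pi}(0)=c\in\R$, then any $m<c$ yields $\cA\cap\{Z\in\elig:\pi(Z)\leq m\}=\emptyset$, which is (a).

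The implication (a) $\Rightarrow$ (b) is immediate from the inclusion $\elig_m\subset\{Z\in\elig:\pi(Z)\leq m\}$, so that $\cA\cap\elig_m\subset\cA\cap\{Z\in\elig:\pi(Z)\leq m\}=\emptyset$. The real content lies in (b) $\Rightarrow$ (a), and this is the step I expect to carry all the weight. Suppose $\cA\cap\elig_m=\emptyset$. Because the regime satisfies $\elig\cap\cX_+\neq\{0\}$, I can fix a nonzero positive $U\in\elig$, and no arbitrage forces $\pi(U)>0$. I claim $\cA\cap\elig_{m'}=\emptyset$ for every $m'\leq m$. Indeed, were some $Z\in\cA\cap\elig_{m'}$ to exist, setting $\lambda:=(m-m')/\pi(U)\geq0$ would give $Z+\lambda U\in\elig$ with $\pi(Z+\lambda U)=m'+\lambda\pi(U)=m$, while $Z+\lambda U\geq Z\in\cA$ forces $Z+\lambda U\in\cA$ by monotonicity; thus $Z+\lambda U\in\cA\cap\elig_m$, contradicting (b). Since $\{Z\in\elig:\pi(Z)\leq m\}=\bigcup_{m'\leq m}\elig_{m'}$, this gives $\cA\cap\{Z\in\elig:\pi(Z)\leq m\}=\emptyset$, which is (a) with the same threshold $m$.

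The main obstacle is precisely this passage (b) $\Rightarrow$ (a): moving from emptiness at a single price level to emptiness on the whole lower half-line. It hinges on two features of the regime working in tandem---the positive payoff $U$, which lets me raise the price of an acceptable eligible position while staying inside $\elig$, and the monotonicity of $\cA$, which guarantees that this upward move preserves acceptability. Everything else reduces to restating the boundedness of $\pi$ on $\cA\cap\elig$.
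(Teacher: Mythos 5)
Your proof is correct and takes essentially the same route as the paper: the paper also reduces everything to the single nontrivial implication (b) $\Rightarrow$ (a) and proves it by exactly your lifting argument, adding a positive multiple of a positive eligible payoff $U$ with $\pi(U)>0$ to push an acceptable element of $\elig_{m'}$, $m'<m$, up to price level $m$ while monotonicity of $\cA$ preserves acceptability. The only cosmetic difference is the ordering of implications (the paper chains (a) $\Rightarrow$ (b) $\Leftrightarrow$ (c) and then (b) $\Rightarrow$ (a), whereas you treat (a) $\Leftrightarrow$ (c) directly), which changes nothing of substance.
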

\begin{proof}
Clearly, \textit{(a)} implies \textit{(b)}, which is equivalent to \textit{(c)}. Assume \textit{(b)} holds so that $\cA\cap\elig_m=\emptyset$ for a fixed $m\in\R$, but $Z_k\in\cA\cap\elig_k$ for some $k<m$. Take a positive $U\in\elig$ with $\pi(U)=1$. Then $Z_k+(m-k)U\in\cA\cap\elig_m$, contradicting \textit{(b)}. Hence, \textit{(b)} implies \textit{(a)} concluding the proof.
\end{proof}

\medskip

\begin{remark}
\begin{enumerate}[(i)]
	\item The condition $\rho_{\cA,\elig,\pi}(0)>-\infty$ stated in the preceding lemma generalizes the condition of ``no acceptability arbitrage'' introduced in \cite{ArtznerDelbaenKoch2009} in the context of coherent acceptance sets and finite state spaces. In fact, in that setting the two conditions are equivalent, as implied by Proposition~\ref{finiteness and coherent core} below.
	\item In a pricing context, Jaschke and K\"{u}chler in~\cite{JaschkeKuchler2001} introduced the assumption of absence of {\em good deals} (of the first kind) $\cA\cap\elig_0\setminus\{0\}=\emptyset$. Clearly, this condition implies the ``no acceptability arbitrage'' condition \textit{(b)} in the above lemma.
\end{enumerate}
\end{remark}


\subsection{General acceptance sets}
\label{general acc sets section}

First, we provide a finiteness and continuity result under the assumption that the space $\elig$ contains an order unit. Recall that an element $U\in\cX_+$ is called an \textit{order unit} if $U\in\Core(\cX_+)$. Note that, if $\cX$ admits an order unit, every acceptance set $\cA\subset\cX$ has nonempty core, since $\cA+\cX_+\subset\cA$.

\medskip

\begin{remark}
\label{remark on core for standard spaces}
\begin{enumerate}[(i)]
	\item If the positive cone $\cX_+$ has nonempty interior, then order units are precisely the interior points of $\cX_+$.
	\item Let $(\Omega,\cF)$ be a finite measurable space, and consider the space $\cX$ of all measurable functions $X:\Omega\to\R$. The interior of the positive cone in $\cX$ is nonempty and consists of all $X\in\cX$ such that $X(\omega)>0$ for all $\omega\in\Omega$.
	\item Consider the space $L^\infty$ over a general probability space $(\Omega,\cF,\probp)$. The corresponding positive cone has nonempty interior, consisting of all elements $X$ which are bounded away from zero, i.e. such that $X\geq\e$ almost surely for some $\e>0$.
	\item Let $(\Omega,\cF,\probp)$ be a nonatomic probability space. The positive cone of $L^p$, for $1\leq p<\infty$, has empty core. This is also the case for nontrivial Orlicz hearts $H^\Phi$ with respect to an Orlicz function $\Phi$.
\end{enumerate}
\end{remark}

\medskip

\begin{proposition}
\label{finiteness and NAA when a portfolio is order unit}
Let $(\cA,\elig,\pi)$ be a risk measurement regime. Assume that $\elig$ contains an order unit $U$. Then:
\begin{enumerate}[(i)]
\item $\rho_{\cA,\elig,\pi}(X)<\infty$ \,for every $X\in\cX$.
\item The following statements are equivalent:
\begin{enumerate}[(a)]
  \item $\rho_{\cA,\elig,\pi}(X)>-\infty$ \,for every $X\in\cX$;
	\item $\rho_{\cA,\elig,\pi}(0)>-\infty$;
	\item $\Core(\cA)\cap\elig_m=\emptyset$ \,for some $m\in\R$;
	\item $\cA+\elig_0\neq\cX$.
\end{enumerate}
\end{enumerate}
In particular, $\rho_{\cA,\elig,\pi}$ is finite-valued if and only if $\rho_{\cA,\elig,\pi}(0)>-\infty$. In this case, $\rho_{\cA,\elig,\pi}$ is continuous whenever $U$ is an interior point of $\cX_+$.
\end{proposition}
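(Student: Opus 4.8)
The plan is to reduce everything to the single-asset picture via the Reduction Lemma and to exploit that an order unit $U$ is \emph{order-absorbing}: since $U\in\Core(\cX_+)$, for every $Y\in\cX$ there is $\lambda>0$ with $-\lambda U\le Y\le\lambda U$. Throughout I write $\cB:=\cA+\elig_0$, so that $\rho_{\cA,\elig,\pi}=\rho_{\cB,U,\pi}$ by Lemma~\ref{reduction lemma}, and I use that $\rho_{\cA,\elig,\pi}$ is decreasing and $\elig$-additive by Lemma~\ref{multiple assets translation and monotonicity}.

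For (i), I fix $X$ and pick any $A_0\in\cA$; order-absorption applied to $A_0-X$ yields $\lambda>0$ with $X+\lambda U\ge A_0$, and monotonicity of $\cA$ gives $X+\lambda U\in\cA$, so $\rho_{\cA,\elig,\pi}(X)\le\lambda\pi(U)<\infty$. The implication (a)$\Rightarrow$(b) is trivial, and for (b)$\Rightarrow$(a) I again use order-absorption to find $\lambda>0$ with $X\le\lambda U$; since $\rho_{\cA,\elig,\pi}$ is decreasing and $\elig$-additive, $\rho_{\cA,\elig,\pi}(X)\ge\rho_{\cA,\elig,\pi}(\lambda U)=\rho_{\cA,\elig,\pi}(0)-\lambda\pi(U)>-\infty$.

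For (b)$\Leftrightarrow$(c) and (b)$\Leftrightarrow$(d) I lean on Lemma~\ref{lemma no acc arbitrage}, which already gives (b)$\iff$($\cA\cap\elig_m=\emptyset$ for some $m$). Since $\Core(\cA)\subset\cA$, this immediately yields (c); for the converse I argue contrapositively: if $\cA\cap\elig_m\neq\emptyset$ for every $m$, then for a fixed $m$ I pick $Z_0\in\cA\cap\elig_{m'}$ with $m'<m$ and note that $Z_0+\tfrac{m-m'}{\pi(U)}U$ lies in $\elig_m$ and, via the auxiliary inclusion $\cA+\Core(\cX_+)\subset\Core(\cA)$ (a one-line consequence of monotonicity), in $\Core(\cA)$; hence $\Core(\cA)\cap\elig_m\neq\emptyset$. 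For (d), the Reduction Lemma gives $\rho_{\cA,\elig,\pi}(0)=\inf\{m:\tfrac{m}{\pi(U)}U\in\cB\}$: if $\cB=\cX$ this infimum is $-\infty$, while if $\rho_{\cA,\elig,\pi}(0)=-\infty$ then $tU\in\cB$ for arbitrarily negative $t$, and order-absorption together with monotonicity of $\cB$ forces $Y\ge tU\in\cB$, hence $\cB=\cX$; this proves (b)$\iff$(d). Combining (i) with (a)$\Leftrightarrow$(b) yields the ``in particular'' claim: $\rho_{\cA,\elig,\pi}$ is finite-valued iff $\rho_{\cA,\elig,\pi}(0)>-\infty$.

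For continuity I avoid convexity (unavailable for general acceptance sets) and instead establish a Lipschitz estimate in the order-unit gauge. If $-\lambda U\le X-Y\le\lambda U$, then monotonicity and $\elig$-additivity give $\rho_{\cA,\elig,\pi}(Y)-\lambda\pi(U)\le\rho_{\cA,\elig,\pi}(X)\le\rho_{\cA,\elig,\pi}(Y)+\lambda\pi(U)$, so $|\rho_{\cA,\elig,\pi}(X)-\rho_{\cA,\elig,\pi}(Y)|\le\lambda\pi(U)$. When $U$ is an interior point of $\cX_+$, the set $\{h\in\cX:-\lambda U\le h\le\lambda U\}$ is a neighborhood of the origin for each $\lambda>0$, so this estimate gives continuity directly. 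I expect the main obstacle to be the two core/boundary implications (c)$\Rightarrow$(b) and (d)$\Rightarrow$(b): these are exactly where the order-unit hypothesis is indispensable, and they hinge on the inclusion $\cA+\Core(\cX_+)\subset\Core(\cA)$ and on order-absorption, respectively, rather than on any formal manipulation.
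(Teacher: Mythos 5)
Your proof is correct. It rests on the same two order-theoretic facts as the paper's --- that an order unit $U$ order-absorbs every $Y\in\cX$ (so $-\lambda U\le Y\le\lambda U$ for some $\lambda>0$) and that $\cA+\Core(\cX_+)\subset\Core(\cA)$ --- but it is organized differently and is more self-contained. The paper proves part (ii) as the single cycle $(a)\Rightarrow(b)\Rightarrow(c)\Rightarrow(d)\Rightarrow(a)$ and outsources both the step $(d)\Rightarrow(a)$ and the final continuity claim to Proposition~3.1 of~\cite{FarkasKochMunari2012a}, applied to the augmented set $\cA+\elig_0$ via the Reduction Lemma. You instead prove $(b)\Rightarrow(a)$ directly ($X\le\lambda U$ plus monotonicity and $\elig$-additivity give $\rho_{\cA,\elig,\pi}(X)\ge\rho_{\cA,\elig,\pi}(0)-\lambda\pi(U)$), route $(b)\Leftrightarrow(c)$ through Lemma~\ref{lemma no acc arbitrage}, handle $(b)\Leftrightarrow(d)$ by the elementary observation that $\rho_{\cA,\elig,\pi}(0)=-\infty$ forces $tU\in\cA+\elig_0$ for arbitrarily negative $t$ and hence, by absorption and monotonicity of the augmented set, $\cA+\elig_0=\cX$; and you prove continuity from scratch via the Lipschitz estimate $|\rho_{\cA,\elig,\pi}(X)-\rho_{\cA,\elig,\pi}(Y)|\le\lambda\pi(U)$ on the order interval $-\lambda U\le X-Y\le\lambda U$, which is a neighborhood condition when $U\in\Interior(\cX_+)$. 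What the paper's route buys is brevity by reuse of the single-asset theory; what yours buys is independence from the cited external result and a quantitative (Lipschitz-type) form of the continuity statement. All the individual steps check out, including the auxiliary inclusion $\cA+\Core(\cX_+)\subset\Core(\cA)$, which is exactly what the paper also uses (implicitly) in its $(c)\Rightarrow(d)$ step when it concludes $Y+U\in\Core(\cA)$.
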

\begin{proof}
\textit{(i)} Take $X\in\cX$ and $Y\in\cA$. Since $U$ is an order unit, we have that $Y-X\leq \lambda U$ for some $\lambda>0$. As a result we obtain $\rho_{\cA,\elig,\pi}(X)\leq\rho_{\cA,\elig,\pi}(Y)+\lambda\pi(U)\leq\lambda\pi(U)<\infty$.

\smallskip

\textit{(ii)} It is clear that {\it (a)} implies \textit{(b)}, which in turn implies~\textit{(c)}. Assume now that~{\it (c)} holds for $m\in\R$ but $\cA+\elig_0=\cX$. As a result, given $Z_m\in\elig_m$ we can find $Y\in\cA$ and $Z_0\in\elig_0$ with $Z_m-U=Y+Z_0$. But then $Z_m-Z_0=Y+U\in\Core(\cA)\cap\elig_m$, contradicting~\textit{(c)}. Hence,~\textit{(c)} implies~\textit{(d)}. Finally, if~{\em (d)} holds then $\cA+\elig_0$ is an acceptance set so that~{\em (a)} follows by applying the Reduction Lemma and Proposition~3.1 in~\cite{FarkasKochMunari2012a}.

\smallskip

Finally, assume that $\rho_{\cA,\elig,\pi}(0)>-\infty$ and $U$ is an interior point of $\cX_+$. Since $\cA+\elig_0$ is a proper subset of $\cX$, the continuity of $\rho_{\cA,\elig,\pi}$ follows again from Proposition~3.1 in~\cite{FarkasKochMunari2012a}.
\end{proof}


\subsection{Convex acceptance sets}
\label{convex acc subsection}

In general $L^p$ spaces or Orlicz hearts we cannot apply Proposition~\ref{finiteness and NAA when a portfolio is order unit} to obtain finiteness and continuity results since, by Remark~\ref{remark on core for standard spaces}, the core of the positive cone in these spaces is empty. However, as remarked below, the positive cone in these spaces does possess \textit{strictly positive} elements, i.e. positive elements $U\in\cX_+$ such that $\psi(U)>0$ for every nonzero positive functional $\psi\in\cX'$. This will be enough to provide a sufficient condition for finiteness and continuity in case of convex acceptance sets.

\medskip

\begin{remark}
\label{strictly positive elements for standard spaces}
\begin{enumerate}[(i)]
	\item If the positive cone $\cX_+$ has nonempty interior, the notion of strictly positive elements coincides with that of order units and interior points of $\cX_+$.
	\item Let $(\Omega,\cF,\probp)$ be a probability space. If we equip $L^\infty$ with the weak$^\ast$ topology $\sigma(L^\infty,L^1)$, the set of strictly positive elements is larger than the core of $L^\infty_+$ and consists of all positive $U$ such that $U>0$ almost surely.
	\item Let $(\Omega,\cF,\probp)$ be a probability space. The strictly positive elements in the space $L^p$, $1\leq p<\infty$, are precisely all positive $U$ such that $U>0$ almost surely. The same is true for any nontrivial Orlicz heart $H^\Phi$.
\end{enumerate}
\end{remark}

\medskip

\begin{proposition}
\label{convex strictly positive}
Let $(\cA,\elig,\pi)$ be a convex risk measurement regime such that $\cA$ has nonempty interior. Assume $\elig$ contains a strictly positive element $U$. Then:
\begin{enumerate}[(i)]
 \item $\rho_{\cA,\elig,\pi}(X)<\infty$ \,for every $X\in\cX$.
 \item The following are equivalent:
\begin{enumerate}[(a)]
  \item $\rho_{\cA,\elig,\pi}(X)>-\infty$ \,for every $X\in\cX$;
	\item $\rho_{\cA,\elig,\pi}(0)>-\infty$;
	\item $\Interior(\cA)\cap\elig_m=\emptyset$ \,for some $m\in\R$;
	\item $\cA+\elig_0\neq\cX$.
\end{enumerate}
\end{enumerate}
In particular, $\rho_{\cA,\elig,\pi}$ is finite-valued if and only if $\rho_{\cA,\elig,\pi}(0)>-\infty$. In this case, $\rho_{\cA,\elig,\pi}$ is also continuous.
\end{proposition}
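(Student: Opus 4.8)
The plan is to prove \textit{(i)} by a direct separation argument and to establish the equivalences in \textit{(ii)} by closing the cycle $(a)\Rightarrow(b)\Rightarrow(c)\Rightarrow(d)\Rightarrow(a)$, concentrating the analytic work in $(d)\Rightarrow(a)$ via the Reduction Lemma. For \textit{(i)}, since $\Dom(\rho_{\cA,\elig,\pi})=\cA+\elig\supset\cA+\R U$ by Lemma~\ref{multiple assets translation and monotonicity}, it suffices to show that for every $X\in\cX$ the line $\{X+tU:\ t\in\R\}$ meets $\cA$. If it did not, this line and the convex set $\cA$, which has nonempty interior, would be disjoint convex sets, so the geometric Hahn--Banach theorem would produce a nonzero $\psi\in\cX'$ and $\alpha\in\R$ with $\cA\subset\HS(\psi,\alpha)$ and $\psi(X)+t\psi(U)\le\alpha$ for all $t\in\R$. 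The inequality forces $\psi(U)=0$, whereas Lemma~\ref{halfspaces containing acceptance sets} makes $\psi$ a nonzero positive functional, so that $\psi(U)>0$ by strict positivity of $U$ --- a contradiction. Hence $\cA+\elig=\cX$ and $\rho_{\cA,\elig,\pi}<\infty$ throughout.

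For \textit{(ii)}, the implication $(a)\Rightarrow(b)$ is immediate, and $(b)\Rightarrow(c)$ follows from Lemma~\ref{lemma no acc arbitrage} (which turns $\rho_{\cA,\elig,\pi}(0)>-\infty$ into $\cA\cap\elig_m=\emptyset$ for some $m$) together with $\Interior(\cA)\subset\cA$. The delicate step, which I expect to be the main obstacle, is $(c)\Rightarrow(d)$, and I would argue it by contraposition. Assuming $\cA+\elig_0=\cX$, I would first upgrade this to $\Interior(\cA)+\elig_0=\cX$: the set $\cO:=\Interior(\cA)+\elig_0$ is open and convex, and for any $V\in\cX$, writing $V=a+z_0$ with $a\in\cA$, $z_0\in\elig_0$ and fixing $W\in\Interior(\cA)$, the points $(1-t)a+tW+z_0=(1-t)V+t(W+z_0)$ lie in $\cO$ and converge to $V$ as $t\downarrow 0$; thus $\cO$ is dense, and a dense open convex set must be the whole space. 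Given any $m$, I would then pick $Z\in\elig_m$ and write $Z=W'+z_0'$ with $W'\in\Interior(\cA)$, $z_0'\in\elig_0$, so that $W'=Z-z_0'\in\Interior(\cA)\cap\elig_m$, contradicting (c). This is precisely where convexity and the nonempty-interior hypothesis replace the order-unit argument of Proposition~\ref{finiteness and NAA when a portfolio is order unit}.

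The engine $(d)\Rightarrow(a)$ runs as follows. When $\cA+\elig_0\neq\cX$, it is a convex acceptance set by Remark~\ref{when the sum is acceptance set}, with nonempty interior since it contains $\Interior(\cA)$; the Reduction Lemma then gives $\rho_{\cA,\elig,\pi}=\rho_{\cA+\elig_0,U,\pi}$, a single-asset risk measure with strictly positive eligible asset $U$. A repetition of the separation argument of \textit{(i)} rules out the value $-\infty$: if $\{X+tU:\ t\in\R\}\subset\cA+\elig_0$ for some $X$, then separation together with Lemma~\ref{halfspaces containing acceptance sets} would furnish a nonzero positive functional vanishing on $U$, which is impossible. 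This yields (a) and closes the cycle, and the ``in particular'' finiteness claim follows by combining with \textit{(i)}. Finally, for continuity note that $\rho_{\cA,\elig,\pi}$ is convex by Lemma~\ref{multiple assets translation and monotonicity} and that $\rho_{\cA,\elig,\pi}\le 0$ on $\cA$ (take $Z=0$), hence on the nonempty open set $\Interior(\cA)$; a finite-valued convex function bounded above on a nonempty open set is continuous.
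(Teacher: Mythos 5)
Your proof is correct, and while it shares the paper's overall skeleton --- the separation-plus-strict-positivity argument for \textit{(i)}, the cycle $(a)\Rightarrow(b)\Rightarrow(c)\Rightarrow(d)\Rightarrow(a)$, and the Reduction Lemma to pass to $\cA+\elig_0$ --- it diverges in two places worth noting. For $(c)\Rightarrow(d)$ the paper argues dually: it separates $\elig_m$ from $\Interior(\cA)$ to obtain a positive functional $\psi$ and then shows that $\cA+\elig_0=\cX$ would force $\psi$ to be bounded below on all of $\cX$. You instead argue primally, upgrading $\cA+\elig_0=\cX$ to $\Interior(\cA)+\elig_0=\cX$ via the line-segment principle and the fact that a dense open convex set is the whole space (itself a one-line separation argument), and then reading off a point of $\Interior(\cA)\cap\elig_m$ for every $m$. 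Both are sound; yours makes the geometric role of the nonempty interior more transparent, at the cost of invoking the dense-open-convex fact without proof. For $(d)\Rightarrow(a)$ and for continuity the paper simply cites Corollary~3.14 of~\cite{FarkasKochMunari2012a} through the Reduction Lemma, whereas you give self-contained arguments: ruling out $-\infty$ by separating a point outside $\cA+\elig_0$ from that set and using that a line $\{X+tU\}$ inside it would force $\psi(U)=0$ (you should state explicitly that the separation is of such an exterior point from $\cA+\elig_0$, which is where hypothesis $(d)$ enters), and deducing continuity from the standard fact that a finite convex function bounded above on a nonempty open set is continuous. This makes your write-up longer but independent of the companion paper.
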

\begin{proof}
{\em (i)} If $\rho_{\cA,\elig,\pi}(X)=\infty$ for some $X\in\cX$, then $(X+\elig)\cap\cA=\emptyset$. By separation we find a nonzero positive $\psi\in\cX'$ such that $\psi(X+Z)\leq\psi(A)$ for all $Z\in\elig$ and $A\in\cA$. In particular, $\psi(X)+\lambda\psi(U)\leq\psi(A)$ holds for every $\lambda\in\R$ and $A\in\cA$. This implies $\psi(U)=0$ which is impossible since $U$ is a strictly positive element. In conclusion, $\rho_{\cA,\elig,\pi}(X)<\infty$ must hold for all $X\in\cX$.

\smallskip

\textit{(ii)} Clearly, \textit{(a)} implies \textit{(b)} which implies \textit{(c)}. Assume $\Interior(\cA)\cap\elig_m=\emptyset$ for some $m\in\R$. By separation there exists a nonzero positive $\psi\in\cX'$ such that $\psi(Z_m)\leq\psi(A)$ for all $Z_m\in\elig_m$ and $A\in\cA$. If \textit{(d)} does not hold, for every $X\in\cX$ we find $A\in\cA$ and $Z_0\in\elig_0$ with $X=A+Z_0$. Since
\begin{equation}
\psi\Big(\frac{m}{\pi(U)}U-Z_0\Big)\leq\psi(A)\,,
\end{equation}
it follows that $\frac{m}{\pi(U)}\psi(U)\leq\psi(X)$. But this cannot hold for every $X\in\cX$, hence we must have $\cA+\elig_0\neq\cX$. Thus, \textit{(c)} implies \textit{(d)}.

\smallskip

Finally, assume \textit{(d)} holds so that $\cA+\elig_0$ is an acceptance set. Since $\cA$ has nonempty interior, the same is true for $\cA+\elig_0$. As a result, we obtain that $\rho_{\cA,\elig,\pi}$ does not attain the value $-\infty$ by combining the Reduction Lemma and Corollary 3.14 in~\cite{FarkasKochMunari2012a}, showing that {\em (a)} holds. Moreover, if $\rho_{\cA,\elig,\pi}(0)>-\infty$ then \textit{(d)} holds and, hence, $\rho_{\cA,\elig,\pi}$ is continuous by the same result.
\end{proof}


\subsection{Coherent acceptance sets}

For coherent acceptance sets we can obtain finiteness and continuity results under even weaker conditions on the space $\cS$ than those of Proposition~\ref{convex strictly positive}. In this case, the key assumption is that $\elig$ contains a positive payoff belonging to the core of $\cA$.

\medskip

\begin{remark}
The condition that $\elig$ contains a positive payoff belonging to $\Core(\cA)$ is indeed weaker than requiring that $\cS$ contains a strictly positive element. For example, let $(\Omega,\cF,\probp)$ be a probability space and fix $1\leq p\leq\infty$. Consider the coherent acceptance set based on {\em Tail Value-at-Risk} at level $\alpha\in(0,1)$
\begin{equation}
\cA^\alpha:=\bigg\{\,X\in L^p: \, \frac{1}{\alpha}\int^{\alpha}_{0}\VaR_\beta(X)d\beta\leq0\,\bigg\}\,,
\end{equation}
where
\begin{equation}
\VaR_\beta(X):=\inf\{\,m\in\R: \, \probp[X+m<0]\leq\beta\,\}
\end{equation}
is the {\em Value-at-Risk} of $X$ at level $\beta$. As shown by Lemma 4.3 in~\cite{FarkasKochMunari2012a}, a positive $U\in L^p$ belongs to $\Core(\cA^\alpha)$ if and only if $\probp[U=0]<\alpha$. In particular, $U$ need not be an order unit or a strictly positive element.
\end{remark}

\medskip

\begin{proposition}
\label{finiteness and coherent core}
Let $(\cA,\elig,\pi)$ be a coherent risk measurement regime. Assume there exists a nonzero positive $U\in\elig$ such that $U\in\Core(\cA)$. Then:
\begin{enumerate}[(i)]
 \item $\rho_{\cA,\elig,\pi}(X)<\infty$ \,for every $X\in\cX$.
 \item The following statements are equivalent:
\begin{enumerate}[(a)]
  \item $\rho_{\cA,\elig,\pi}(X)>-\infty$ \,for every $X\in\cX$;
	\item $\rho_{\cA,\elig,\pi}(0)>-\infty$;
	\item $\Core(\cA)\cap\elig_0=\emptyset$;
	\item $\cA+\elig_0\neq\cX$.
\end{enumerate}
\end{enumerate}
In particular, $\rho_{\cA,\elig,\pi}$ is finite-valued if and only if $\rho_{\cA,\elig,\pi}(0)>-\infty$. In this case, if $U\in\Interior(\cA)$ then $\rho_{\cA,\elig,\pi}$ is continuous.
\end{proposition}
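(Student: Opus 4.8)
The plan is to mirror the proofs of Propositions~\ref{finiteness and NAA when a portfolio is order unit} and~\ref{convex strictly positive}, establishing (i) directly and then closing the cycle (a)$\Rightarrow$(b)$\Rightarrow$(c)$\Rightarrow$(d)$\Rightarrow$(a) for the equivalences. The role played there by the core of $\cX_+$ (order-unit case) or by separation (nonempty-interior case) will here be played by the hypothesis $U\in\Core(\cA)$ together with coherence of $\cA$ and the line segment principle for algebraic interiors. Throughout I use that $\pi(U)>0$, which holds since $U$ is a nonzero positive payoff and $\pi$ is strictly positive.

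For part (i), I would fix $X\in\cX$ and apply the core hypothesis to the direction $X$: there is $\e>0$ with $U+\lambda X\in\cA$ for $|\lambda|<\e$. Choosing $\lambda\in(0,\e)$ and using that $\cA$ is a cone gives $\lambda^{-1}(U+\lambda X)=X+\lambda^{-1}U\in\cA$, whence $\rho_{\cA,\elig,\pi}(X)\le\pi(\lambda^{-1}U)=\lambda^{-1}\pi(U)<\infty$. For the equivalences, (a)$\Rightarrow$(b) is immediate. I would prove (b)$\Rightarrow$(c) contrapositively: assuming $W\in\Core(\cA)\cap\elig_0$, the core property applied to the direction $-U$ yields $W-\lambda U\in\cA$ for some $\lambda>0$, with $W-\lambda U\in\elig$ and $\pi(W-\lambda U)=-\lambda\pi(U)<0$; scaling by $t>0$ (cone property) keeps $t(W-\lambda U)\in\cA\cap\elig$ while driving its price $-t\lambda\pi(U)$ to $-\infty$, so $\rho_{\cA,\elig,\pi}(0)=-\infty$.

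The key step is (c)$\Rightarrow$(d), which I would again argue contrapositively. Assuming $\cA+\elig_0=\cX$, decompose $-U=A+Z_0$ with $A\in\cA$ and $Z_0\in\elig_0$, so that $U+A=-Z_0$. The line segment principle for the algebraic interior of the convex set $\cA$, applied to $U\in\Core(\cA)$ and $A\in\cA$, gives $\tfrac12 U+\tfrac12 A\in\Core(\cA)$; but $\tfrac12 U+\tfrac12 A=\tfrac12(U+A)=-\tfrac12 Z_0$, which lies in $\elig_0$. Hence $-\tfrac12 Z_0\in\Core(\cA)\cap\elig_0$, contradicting (c). Finally, to close the cycle with (d)$\Rightarrow$(a), I would note that under (d) the set $\cA+\elig_0$ is a proper, monotone convex cone, hence a coherent acceptance set (Remark~\ref{when the sum is acceptance set}), and that $\cA\subseteq\cA+\elig_0$ forces $U\in\Core(\cA)\subseteq\Core(\cA+\elig_0)$. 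By the Reduction Lemma~\ref{reduction lemma}, $\rho_{\cA,\elig,\pi}=\rho_{\cA+\elig_0,U,\pi}$, and the corresponding single-asset result (Proposition~3.1 in~\cite{FarkasKochMunari2012a}) then gives $\rho_{\cA,\elig,\pi}>-\infty$ everywhere. The ``in particular'' clause follows by combining (i) with (a)$\Leftrightarrow$(b), and the continuity assertion follows once more through the Reduction Lemma: when $U\in\Interior(\cA)\subseteq\Interior(\cA+\elig_0)$ and $\rho_{\cA,\elig,\pi}(0)>-\infty$, the single-asset continuity result (Corollary~3.14 in~\cite{FarkasKochMunari2012a}) applies to $\rho_{\cA+\elig_0,U,\pi}$.

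The main obstacle is the (c)$\Rightarrow$(d) direction: identifying the right convex combination of the core element $U$ and the decomposition point $A$ so that the line segment principle produces a core element that simultaneously falls in $\elig_0$. The algebraic collapse $\tfrac12(U+A)=-\tfrac12 Z_0$ is what makes this work without ever evaluating $\pi$ on the acceptance-set point $A$ (on which $\pi$ need not be defined). A secondary point requiring care is the transfer of the membership facts $U\in\Core(\cA)$ and $U\in\Interior(\cA)$ to the augmented set $\cA+\elig_0$, which is needed to legitimately invoke the single-asset theorems.
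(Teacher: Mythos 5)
Your proof is correct, and parts (i), (a)$\Rightarrow$(b), and (b)$\Rightarrow$(c) match the paper's in substance (the paper's (b)$\Rightarrow$(c) produces a point of $\cA\cap\elig_m$ for every $m$ rather than scaling a single negatively priced acceptable payoff, but the two arguments are interchangeable). Where you genuinely diverge is the step (c)$\Rightarrow$(d): the paper invokes algebraic separation (Theorem~5.61 in~\cite{AliprantisBorder2006}) to produce a nonzero positive functional $\psi$ with $\psi(A+Z_0)\ge 0$ for all $A\in\cA$, $Z_0\in\elig_0$, and then observes that $\cA+\elig_0=\cX$ would force $\psi\ge 0$ on all of $\cX$. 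Your contrapositive argument --- decompose $-U=A+Z_0$ and use the line segment principle $\tfrac12 U+\tfrac12 A\in\Core(\cA)$ to land a core point $-\tfrac12 Z_0$ in $\elig_0$ --- is more elementary (no Hahn--Banach needed, only convexity of $\cA$ and the standard core line-segment lemma) and is a nice simplification; the separation route has the advantage of producing the positive functional explicitly, which is the object reused later in the dual-representation sections. One minor caveat: for (d)$\Rightarrow$(a) and for the continuity claim the paper cites Theorem~3.16 of~\cite{FarkasKochMunari2012a}, the single-asset result tailored to coherent acceptance sets with $U\in\Core(\cA)$, whereas you cite Proposition~3.1 (stated for order units) and Corollary~3.14 (stated for convex sets with nonempty interior); those hypotheses are not available here, so you should point at the coherent single-asset theorem instead --- the surrounding logic (pass to $\cA+\elig_0$, check $U\in\Core(\cA)\subset\Core(\cA+\elig_0)$ and $U\in\Interior(\cA)\subset\Interior(\cA+\elig_0)$, apply the Reduction Lemma) is exactly the paper's.
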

\begin{proof}
{\em (i)} Since $U\in\Core(\cA)$, for every $X\in\cX$ we find $\lambda>0$ such that $U+\lambda X\in\cA$ or, equivalently, $X+\frac{1}{\lambda}U\in\cA$. Hence, $\rho_{\cA,\elig,\pi}(X)<\infty$.

\smallskip

{\em (ii)} It is clear that {\it (a)} implies \textit{(b)}. Assume that \textit{(c)} does not hold, so that there exists $Z_0\in\Core(\cA)\cap\elig_0$. Take $m\in\R$. For any $Z_m\in\elig_m$ we can find $\lambda>0$ such that $Z_0+\lambda Z_m\in\cA$, hence $Z_m+\frac{1}{\lambda}Z_0\in\cA\cap\elig_m$. As a result, $\rho_{\cA,\elig,\pi}(0)=-\infty$. Hence, it follows that \textit{(b)} implies \textit{(c)}.

\smallskip

Now assume \textit{(c)} holds. By the algebraic separation in Theorem 5.61 in~\cite{AliprantisBorder2006}, we find a nonzero positive linear functional $\psi:\cX\to\R$ such that $\psi(Z_0)\leq\psi(A)$ or, equivalently, $\psi(A +Z_0)\geq0$ for all $Z_0\in\elig_0$ and $A\in\cA$. As a result, we must have $\cA+\elig_0\neq\cX$ since otherwise $\psi(X)\geq0$ would hold for every $X\in\cX$. It follows that \textit{(d)} holds.

\smallskip

Finally, assume that {\it (d)} holds. Since $\cA+\elig_0$ is then a coherent acceptance set, we can apply Theorem 3.16 in~\cite{FarkasKochMunari2012a} combined with the Reduction Lemma to obtain \textit{(a)}. To prove continuity, assume $\rho_{\cA,\elig,\pi}$ is finite-valued and $U\in\Interior(\cA)$. Then $U\in\Interior(\cA+\elig_0)$ and therefore $\rho_{\cA,\elig,\pi}$ is continuous by the Reduction Lemma and Theorem 3.16 in~\cite{FarkasKochMunari2012a}.
\end{proof}


\subsection{Semicontinuity properties}
\label{semicontinuity subsection}

As is well known, lower semicontinuity plays a key role in the dual representation of convex maps. Before dealing with dual representations of multi-asset risk measures in the next section, we complement the continuity results established above by briefly discussing conditions for a risk measure $\rho_{\cA,\elig,\pi}$ to be lower (or upper) semicontinuous.

\medskip

Recall that a function $\rho:\cX\to\overline{\R}$ is said to be \textit{lower semicontinuous} at a point $X\in\cX$ if for every $\e>0$ there exists a neighborhood $\cU$ of $X$ such that $\rho(Y)\geq\rho(X)-\e$ for all $Y\in\cU$. We say that $\rho$ is \textit{(globally) lower semicontinuous} if it is lower semicontinuous at every $X\in\cX$. The corresponding {\em upper semicontinuity} properties for $\rho$ are obtained by requiring lower semicontinuity of $-\rho$.

\medskip

The following characterization of semicontinuity easily follows from Lemma~2.5 in~\cite{FarkasKochMunari2012a} by means of the Reduction Lemma.

\medskip

\begin{proposition}
\label{pointwise semicontinuity}
Let $(\cA,\elig,\pi)$ be a risk measurement regime, and take $X\in\cX$. The following statements hold:
\begin{enumerate}[(i)]
\item $\rho_{\cA,\elig,\pi}$ is lower semicontinuous at $X$ if and only if $X+Z\notin\Closure(\cA+\elig_0)$ for any $Z\in\elig$ with $\pi(Z)<\rho_{\cA,\elig,\pi}(X)$;
\item $\rho_{\cA,\elig,\pi}$ is upper semicontinuous at $X$ if and only if $X+Z\in\Interior(\cA+\elig_0)$ for any $Z\in\cM(\elig)$ with $\pi(Z)>\rho_{\cA,\elig,\pi}(X)$.
\end{enumerate}

In particular, if $\cA+\elig_0$ is closed, respectively open, then $\rho_{\cA,\elig,\pi}$ is lower, respectively upper, semicontinuous on $\cX$.
\end{proposition}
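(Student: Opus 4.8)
The plan is to reduce both assertions to the single-asset case and then quote the pointwise semicontinuity characterization recorded in Lemma~2.5 of~\cite{FarkasKochMunari2012a}. Since $(\cA,\elig,\pi)$ is a risk measurement regime, the eligible space contains a nonzero positive payoff, which after rescaling we may take to satisfy $\pi(U)=1$. By the Reduction Lemma we have $\rho_{\cA,\elig,\pi}=\rho_{\cB,U,\pi}$, where $\cB:=\cA+\elig_0$ is the augmented set, which is nonempty and monotone by Remark~\ref{when the sum is acceptance set}; first I would check that $\cB$ and $U$ meet the hypotheses of Lemma~2.5. With the normalization in force, the single-asset characterization reads: $\rho_{\cB,U,\pi}$ is lower semicontinuous at $X$ if and only if $X+mU\notin\Closure(\cB)$ for every $m<\rho_{\cB,U,\pi}(X)$, with the symmetric statement (replacing $\Closure$ by $\Interior$ and reversing the inequality) for upper semicontinuity.

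The key—and essentially the only nontrivial—step is to match these single-asset conditions, which involve only the multiples $mU$ of the chosen payoff, with the conditions in the statement, which range over all $Z\in\elig$. The observation that makes this work is that $\elig_0$ is a linear subspace, so that $\cB+\elig_0=\cA+\elig_0+\elig_0=\cB$; since translation by an element of $\elig_0$ is a homeomorphism of $\cX$, both $\Closure(\cB)$ and $\Interior(\cB)$ are invariant under translations by $\elig_0$. Given any $Z\in\elig$ with $\pi(Z)=m$, I would write $Z=mU+Z_0$ with $Z_0\in\elig_0$, so that $X+Z\in\Closure(\cB)$ if and only if $X+mU\in\Closure(\cB)$, and likewise for the interior. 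Feeding this equivalence into the single-asset characterization turns the conditions on the multiples $mU$ into the stated conditions on arbitrary $Z\in\elig$, proving \textit{(i)} and \textit{(ii)}. I expect this translation-invariance argument to be the main obstacle.

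Finally, the ``in particular'' assertions follow immediately. If $\cB=\cA+\elig_0$ is closed, then $\Closure(\cB)=\cB$, and I would argue that whenever $X+Z\in\cB$ with $\pi(Z)=m$ one necessarily has $\rho_{\cA,\elig,\pi}(X)\le m$, because $X+Z\in\cA+\elig_0$ means $X+Z'\in\cA$ for some $Z'\in\elig$ with $\pi(Z')=m$. Contrapositively, no $Z$ with $\pi(Z)<\rho_{\cA,\elig,\pi}(X)$ can satisfy $X+Z\in\Closure(\cB)$, so the condition in \textit{(i)} holds at every $X$ and $\rho_{\cA,\elig,\pi}$ is globally lower semicontinuous. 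The open case is dual: if $\cB$ is open then $\Interior(\cB)=\cB$, and using monotonicity of $\cA$ together with positivity of $U$ one checks that $\pi(Z)>\rho_{\cA,\elig,\pi}(X)$ forces $X+Z\in\cA+\elig_0=\Interior(\cB)$, so the condition in \textit{(ii)} holds everywhere and $\rho_{\cA,\elig,\pi}$ is globally upper semicontinuous.
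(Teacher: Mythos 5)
Your proposal is correct and follows exactly the route the paper intends: the paper gives no detailed proof, stating only that the result ``easily follows from Lemma~2.5 in~\cite{FarkasKochMunari2012a} by means of the Reduction Lemma,'' which is precisely your reduction to $\rho_{\cA+\elig_0,U,\pi}$ followed by the single-asset characterization. Your translation-invariance observation ($\cA+\elig_0+\elig_0=\cA+\elig_0$, hence closure and interior are invariant under shifts by $\elig_0$) is the right way to pass from the condition on multiples $mU$ to the condition on arbitrary $Z\in\elig$, and your verification of the ``in particular'' clauses is sound.
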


\medskip

In case $\elig$ is one dimensional we clearly have $\elig_0=\{0\}$, hence for single-asset risk measures the closedeness of $\cA$ implies lower semicontinuity as a consequence of the result above. Unfortunately, as the following example shows, closedeness of $\cA$ no longer suffices to ensure lower semicontinuity in the multi-asset case.

\medskip

\begin{example}
\label{remark on semicont}
Let $(\Omega,\cF,\probp)$ be an infinite probability space and take $A\subset\Omega$ with $0<\probp(A)<1$. Let $\cX:=L^1$ and $\cA:=L^1_+$ and assume $U\in L^1_+$ is unbounded from above on $A$ and equal to $1$ on $A^c$. Taking $\elig$ to be the vector space spanned by $U$ and $V:=U+1_A$, define a linear functional $\pi:\elig\to\R$ by setting $\pi(U)=\pi(V):=1$. It is easy to see that $\cA+\elig_0$ consists of all positions $X$ that are bounded from below on $A$ and that are nonnegative on $A^c$ so that $\Closure(\cA+\elig_0)=\{\,X\in L^1: \, X1_{A^c}\geq0\,\}$. Taking $X:=-U1_{A}$ one checks that $\rho_{\cA,\elig,\pi}(X)=1$ while $\rho_{\Closure(\cA+\elig_0),U,\pi}(X)=0$, hence $\rho_{\cA,\elig,\pi}$ is not lower semicontinuous at $X$ as a consequence of the Reduction Lemma and Proposition~\ref{pointwise semicontinuity}.
\end{example}

\medskip

The next result establishes a sufficient condition for $\cA+\elig_0$ to be closed in case $\cA$ is also closed. This allows to apply the previous proposition to obtain lower semicontinuity of the corresponding multi-asset risk measure.

\medskip

\begin{proposition}
\label{closedeness of a sum}
Let $\cA\subset\cX$ be a closed acceptance set with $0\in\cA$. Assume that $\cA$ is either a cone or convex and let $\elig$ be a finite dimensional subspace of $\cM$. If $\cA\cap\elig_0=\{0\}$, then $\cA+\elig_0$ is closed.
\end{proposition}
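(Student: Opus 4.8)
The plan is to establish closedness directly by a net argument, exploiting the finite-dimensionality of $\elig_0:=\{Z\in\elig:\pi(Z)=0\}$ to extract convergent subnets. First I would record the structural facts I intend to use. Since $\pi$ is linear, $\elig_0=\ker(\pi|_\elig)$ is a linear subspace, and as a finite-dimensional subspace of the (Hausdorff) space $\cX$ it is linearly homeomorphic to a Euclidean space; I therefore fix an auxiliary norm $\|\cdot\|$ inducing its subspace topology, so that closed bounded subsets of $\elig_0$, and in particular its unit sphere, are compact. To prove that $\cA+\elig_0$ is closed, I would take an arbitrary net $x_\gamma=a_\gamma+z_\gamma\to x$ with $a_\gamma\in\cA$ and $z_\gamma\in\elig_0$, and show $x\in\cA+\elig_0$, distinguishing whether the net $(z_\gamma)$ is bounded in $\elig_0$.

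If $(z_\gamma)$ is bounded, compactness yields a subnet with $z_\gamma\to z\in\elig_0$; then $a_\gamma=x_\gamma-z_\gamma\to x-z$, and since $\cA$ is closed we get $x-z\in\cA$, whence $x=(x-z)+z\in\cA+\elig_0$. Note that this case uses only the closedness of $\cA$. It therefore remains to rule out that $(z_\gamma)$ is unbounded, and this is where both the transversality hypothesis $\cA\cap\elig_0=\{0\}$ and the cone/convexity assumption enter.

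Assume for contradiction that $(z_\gamma)$ is unbounded, and pass to a subnet with $\|z_\gamma\|\to\infty$. Normalizing $w_\gamma:=z_\gamma/\|z_\gamma\|$ and using compactness of the unit sphere of $\elig_0$, I may assume $w_\gamma\to w$ with $w\in\elig_0$ and $\|w\|=1$. The key observation is that $a_\gamma/\|z_\gamma\|\in\cA$ eventually: if $\cA$ is a cone this is immediate since $1/\|z_\gamma\|\ge0$, while if $\cA$ is convex with $0\in\cA$ then $a_\gamma/\|z_\gamma\|=\tfrac{1}{\|z_\gamma\|}a_\gamma+(1-\tfrac{1}{\|z_\gamma\|})\,0\in\cA$ once $\|z_\gamma\|\ge1$. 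Writing $a_\gamma/\|z_\gamma\|=x_\gamma/\|z_\gamma\|-w_\gamma$ and invoking joint continuity of scalar multiplication (so that $x_\gamma/\|z_\gamma\|\to0$ because $x_\gamma\to x$ and $1/\|z_\gamma\|\to0$), I obtain $a_\gamma/\|z_\gamma\|\to -w$. Closedness of $\cA$ then forces $-w\in\cA$; but $-w\in\elig_0\setminus\{0\}$, contradicting $\cA\cap\elig_0=\{0\}$. Hence $(z_\gamma)$ is bounded after all and the previous case applies, completing the proof.

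The main obstacle I anticipate is purely topological: because $\cX$ need not be metrizable (for instance $L^\infty$ under $\sigma(L^\infty,L^1)$), I must argue with nets rather than sequences, and the only leverage for extracting convergent subnets is the finite-dimensionality of $\elig_0$; joint continuity of scalar multiplication is precisely what converts $\|z_\gamma\|\to\infty$ into the decisive limit $a_\gamma/\|z_\gamma\|\to -w$. It is worth stressing that the hypothesis $\cA\cap\elig_0=\{0\}$ is indispensable rather than merely convenient: in Example~\ref{remark on semicont} one has $\elig_0=\Span(U-V)$ with $\cA=L^1_+$, so that $\cA\cap\elig_0\neq\{0\}$, and there $\cA+\elig_0$ is indeed not closed.
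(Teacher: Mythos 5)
Your argument is correct and is essentially the paper's own proof: both rule out an unbounded $\elig_0$-component of the approximating net by rescaling $a_\gamma$ by $1/\|z_\gamma\|$ (legitimate precisely because $\cA$ is a cone, or convex with $0\in\cA$) and then letting closedness of $\cA$ produce a nonzero element of $\cA\cap\elig_0$. The only differences are organizational: you normalize directly in the finite-dimensional space $\elig_0$ and use compactness of its unit sphere, whereas the paper treats the case $\dim\elig_0=1$ and then inducts on the dimension; and you share with the paper the harmless imprecision that an unbounded \emph{net} need not admit a subnet with $\|z_\gamma\|\to\infty$ --- the clean dichotomy, obtained from the one-point compactification of $\elig_0$, is between a subnet of $(z_\gamma)$ converging in $\elig_0$ and one with $\|z_\gamma\|\to\infty$, and your two cases handle exactly these.
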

\begin{proof}
If $\elig_0=\{0\}$ the assertion is trivial. Assume $\elig_0$ is generated by a nonzero payoff $Z\in\cM$. Let $(A_\alpha+\lambda_\alpha Z)$ be a net in $\cA+\elig_0$ converging to some $X\in\cX$. If $(\lambda_\alpha)$ is unbounded, then we can find a subnet $(\lambda_\beta)$ diverging either to $\infty$ or to $-\infty$. Without loss of generality we may assume $\left|\lambda_\beta\right|>1$. It follows that $(A_\beta/\lambda_\beta)$ has limit $-Z$, respectively $Z$. Since $A_\beta/\lambda_\beta$ belongs to $\cA$, respectively $-\cA$, by the closedeness of $\cA$ we conclude that $\cA\cap\elig_0$ contains a nonzero element, contradicting the assumption. Therefore $(\lambda_\alpha)$ must be bounded. Passing to a convergent subnet, it is easy to show that the limit $X$ lies in $\cA$, implying that $\cA+\elig_0$ is closed. We can conclude by induction on the dimension of $\elig_0$.
\end{proof}

\medskip

\begin{remark}
For a convex set $\cA$, the statement in the above proposition also follows from the well-known theorem by Dieudonn\'{e}, see Theorem 1.1.8 in \cite{Zalinescu2002}. Note that we do not require convexity in case $\cA$ is a cone. Note also that the condition $\cA\cap\elig_0=\{0\}$ is equivalent to the absence of good deals (of the first kind) introduced by Jaschke and K\"{u}chler~\cite{JaschkeKuchler2001} in a pricing context.
\end{remark}


\section{Dual representations}
\label{dual repr section}

In this section we provide dual representation results for convex multi-asset risk measures. In order to apply separation techniques, we assume throughout that the space of financial positions $\cX$ is a {\em locally convex} ordered topological vector space. Instead of using conjugate duality, we opt for an alternative approach which exploits the particular structure of risk measures and is based on a dual representation of the underlying acceptance set.


\subsection{Dual representation of the augmented acceptance set}

By the Reduction Lemma a multi-asset risk measure of the form $\rho_{\cA,\elig,\pi}$ can be represented as a single asset risk measure with respect to the augmented acceptance set $\cA+\elig_0$. We first provide a dual representation for $\cA+\elig_0$ and derive, in a second step, a dual representation for $\rho_{\cA,\elig,\pi}$.

\medskip

We start by recalling a useful equivalent formulation of the Hahn-Banach theorem based on the notion of a support function. Here, the \textit{(lower) support function} of a set $\cA\subset\cX$ is the subadditive and positively homogenous map $\sigma_\cA:\cX'\to\R\,\cup\{-\infty\}$ defined by
\begin{equation}
\sigma_\cA(\psi):=\inf\limits_{A\in\cA}\psi(A)\,.
\end{equation}
Note that $\sigma_\cA=\sigma_{\Closure(\cA)}$ holds. The domain of finiteness of the support function is called the \textit{barrier cone} of $\cA$ and denoted by $B(\cA)$, i.e.
\begin{equation}
B(\cA):=\{\,\psi\in\cX': \, \sigma_\cA(\psi)>-\infty\,\}\,.
\end{equation}
A succinct and useful equivalent formulation of the Hahn-Banach theorem (see~\cite{AubinEkeland1984}) reads as follows: For every closed, convex subset $\cA$ of $\cX$, we have
\begin{equation}
\label{hahn-banach}
\cA = \bigcap\limits_{\psi\in\cX'} \HS(\psi,\sigma_\cA(\psi)) = \bigcap\limits_{\psi\in B(\cA)} \HS(\psi,\sigma_\cA(\psi))\,.
\end{equation}

Note that if $\cA$ is a cone we have
\begin{equation}
B(\cA)=\{\,\psi\in\cX': \, \sigma_\cA(\psi)=0\,\}=\{\,\psi\in\cX': \, \psi(A)\geq0, \;\, \forall \ A\in\cA\,\}\,.
\end{equation}
Moreover, if $\cA$ is an acceptance set, then Lemma~\ref{halfspaces containing acceptance sets} implies that
\begin{equation}
B(\cA)\subset\cX'_+\,.
\end{equation}

We now describe the support function of the augmented acceptance set $\cA+\elig_0$. By $\elig^\bot$ we denote the \textit{annihilator} of a subspace $\elig\subset\cX$, i.e. $\elig^\bot:=\{\,\psi\in\cX': \, \psi(X)=0, \;\, \forall \ X\in\elig\,\}$.

\medskip

\begin{lemma}
\label{basic remarks support function}
Let $(\cA,\elig,\pi)$ be a risk measurement regime. Then
\begin{equation}
B(\cA+\elig_0)=B(\cA)\cap\elig_0^\bot
\end{equation}
and for any $\psi\in\cX'$
\begin{equation}
\sigma_{\cA+\elig_0}(\psi)=\left\{
\begin{array}{l l}
\sigma_\cA(\psi) & \ \ \ \mbox{\rm if} \ \ \psi\in\elig_0^\bot\,,\\
-\infty & \ \ \ \mbox{\rm otherwise}\,.
\end{array}
\right.
\end{equation}
\end{lemma}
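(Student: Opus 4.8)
The plan is to prove the support function formula first, since the barrier cone identity then follows immediately by intersecting with the domain of finiteness. The starting observation is that $\elig_0=\{Z\in\elig:\pi(Z)=0\}$ is a linear subspace of $\cX$: being the kernel of the restriction $\pi|_\elig$, it is closed under addition and scalar multiplication. This linearity is what drives the entire dichotomy.

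First I would unfold the definition of the support function of the sum. For any $\psi\in\cX'$, linearity of $\psi$ gives
\begin{equation}
\sigma_{\cA+\elig_0}(\psi)=\inf_{A\in\cA,\,Z\in\elig_0}\psi(A+Z)=\inf_{A\in\cA,\,Z\in\elig_0}\big(\psi(A)+\psi(Z)\big)\,.
\end{equation}
Since $A$ and $Z$ range over their respective sets independently, the infimum of the sum factors as the sum of the infima,
\begin{equation}
\sigma_{\cA+\elig_0}(\psi)=\inf_{A\in\cA}\psi(A)+\inf_{Z\in\elig_0}\psi(Z)=\sigma_\cA(\psi)+\inf_{Z\in\elig_0}\psi(Z)\,,
\end{equation}
with the usual convention that any sum involving $-\infty$ equals $-\infty$.

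Next I would evaluate $\inf_{Z\in\elig_0}\psi(Z)$ using the subspace structure. If $\psi\in\elig_0^\bot$, then $\psi(Z)=0$ for every $Z\in\elig_0$, so this infimum is $0$ and $\sigma_{\cA+\elig_0}(\psi)=\sigma_\cA(\psi)$. If instead $\psi\notin\elig_0^\bot$, I pick $Z_0\in\elig_0$ with $\psi(Z_0)\neq0$; then $\lambda Z_0\in\elig_0$ for every $\lambda\in\R$ and $\psi(\lambda Z_0)=\lambda\psi(Z_0)$ attains arbitrarily negative values, whence $\inf_{Z\in\elig_0}\psi(Z)=-\infty$ and therefore $\sigma_{\cA+\elig_0}(\psi)=-\infty$. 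This is precisely the claimed case distinction for $\sigma_{\cA+\elig_0}$.

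Finally, the barrier cone identity drops out by definition: $B(\cA+\elig_0)=\{\psi\in\cX':\sigma_{\cA+\elig_0}(\psi)>-\infty\}$, and by the formula just proved this finiteness holds exactly when $\psi\in\elig_0^\bot$ and $\sigma_\cA(\psi)>-\infty$, that is, when $\psi\in B(\cA)\cap\elig_0^\bot$. I do not expect a genuine obstacle here; the only point requiring a little care is the bookkeeping with the value $-\infty$, in particular verifying that the factorization of the infimum and the final intersection remain valid even when $\sigma_\cA(\psi)=-\infty$, which the conventions above handle automatically.
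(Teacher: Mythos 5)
Your proof is correct, and it is the standard computation the paper leaves implicit (the lemma is stated there without proof): split $\sigma_{\cA+\elig_0}$ as $\sigma_\cA(\psi)+\inf_{Z\in\elig_0}\psi(Z)$ and use that $\elig_0$ is a linear subspace to force the second infimum to be $0$ or $-\infty$. The handling of the $-\infty$ conventions and the nonemptiness of both sets (so the infimum of the sum genuinely splits) is the only delicate point, and you address it adequately.
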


\medskip

Given a subspace $\elig$ of $\cM$, we denote by $\extens$ the set of all positive, continuous, linear extensions of the pricing functional $\pi:\elig\to\R$ to the whole space $\cX$, i.e.
\begin{equation}
\extens:=\{\,\psi\in\cX'_+: \, \psi(Z)=\pi(Z), \;\, \forall \ Z\in\elig\,\}\,.
\end{equation}
Note that, for any positive $U\in\elig$ with $\pi(U)=1$, we have
\begin{equation}
\label{equivalent extensions}
\extens=\{\,\psi\in\elig_0^\bot: \, \psi(U)=1\,\}\,.
\end{equation}
Note also that we are considering extensions of the pricing functional $\pi$ restricted to $\cS$. In particular, the elements in $\extens$ need not be extensions of $\pi:\cM\to\R$.

\medskip

\begin{remark}
Assume the market satisfies the ``No Free Lunch'' condition $\cX_+\cap\Closure(\elig_0-\cX_+)=\{0\}$. Then, using a separation argument, it is not difficult to show that $\extens$ is nonempty. For background on this condition we refer to Kreps~\cite{Kreps1981} and the discussion in Clark~\cite{Clark1993}.
\end{remark}

\medskip

\begin{theorem}
\label{key corollary}
Let $(\cA,\elig,\pi)$ be a convex risk measurement regime. The following statements hold:
\begin{enumerate}[(i)]
\item if $B(\cA)\cap\extens$ is nonempty, then
  \begin{equation}
  \label{acceptance and multiple 2}
  \Closure(\cA+\elig_0)
  =\bigcap\limits_{\psi\in B(\cA)\cap\,\extens} \HS(\psi,\sigma_\cA(\psi))\,;
  \end{equation}
\item if $B(\cA)\cap\extens$ is empty, then
  \begin{equation}
  \label{acceptance and multiple}
  \Closure(\cA+\elig_0)
  =\bigcap\limits_{\psi\in B(\cA)\cap\,\elig^\bot} \HS(\psi,\sigma_\cA(\psi))\,.
  \end{equation}
\end{enumerate}
\end{theorem}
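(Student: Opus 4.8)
The plan is to derive both statements from a single master identity obtained by applying the Hahn--Banach representation \eqref{hahn-banach} to the closed, convex set $\Closure(\cA+\elig_0)$. Since $\cA$ is convex and $\elig_0$ is a linear subspace, $\cA+\elig_0$ is convex, hence $\Closure(\cA+\elig_0)$ is closed and convex; moreover $\sigma_{\cA+\elig_0}=\sigma_{\Closure(\cA+\elig_0)}$, so the two sets share the same barrier cone. Feeding the expressions for $B(\cA+\elig_0)$ and $\sigma_{\cA+\elig_0}$ supplied by Lemma~\ref{basic remarks support function} into \eqref{hahn-banach} then yields
\[
\Closure(\cA+\elig_0)=\bigcap_{\psi\in B(\cA)\cap\elig_0^\bot}\HS(\psi,\sigma_\cA(\psi)).
\]
Everything afterwards is a comparison of the index set $B(\cA)\cap\elig_0^\bot$ with $B(\cA)\cap\extens$ and with $B(\cA)\cap\elig^\bot$.

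I would fix a positive $U\in\elig$ with $\pi(U)=1$, so that $\elig=\elig_0\oplus\R U$ and, by \eqref{equivalent extensions}, $\extens=\{\psi\in\elig_0^\bot:\psi(U)=1\}$. For $\psi\in B(\cA)\cap\elig_0^\bot$, positivity (recall $B(\cA)\subset\cX'_+$) together with $U\geq0$ forces $\psi(U)\geq0$, and exactly one of two things occurs: either $\psi(U)=0$, which for such $\psi$ is equivalent to $\psi\in\elig^\bot$, or $\psi(U)>0$, in which case $\psi/\psi(U)\in\extens$ and, by positive homogeneity of $\sigma_\cA$, also $\psi/\psi(U)\in B(\cA)$. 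Combined with the scaling invariance $\HS(c\psi,\sigma_\cA(c\psi))=\HS(\psi,\sigma_\cA(\psi))$ for $c>0$, this dichotomy is the engine for both cases.

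Case (ii) is then immediate: if $B(\cA)\cap\extens=\emptyset$ the alternative $\psi(U)>0$ can never occur, so $B(\cA)\cap\elig_0^\bot=B(\cA)\cap\elig^\bot$ and the master identity is exactly \eqref{acceptance and multiple}. In case (i) the inclusion $\Closure(\cA+\elig_0)\subset\bigcap_{\psi\in B(\cA)\cap\extens}\HS(\psi,\sigma_\cA(\psi))$ is free, since $\extens\subset\elig_0^\bot$ only shrinks the index set. For the reverse inclusion I would take $X$ in the right-hand intersection and verify $\psi(X)\geq\sigma_\cA(\psi)$ for every $\psi\in B(\cA)\cap\elig_0^\bot$; the subcase $\psi(U)>0$ reduces at once to membership of $X$ in the halfspace of the normalized functional $\psi/\psi(U)\in B(\cA)\cap\extens$ via scaling invariance.

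The one genuinely substantive step, which I expect to be the main obstacle, is the subcase $\psi(U)=0$, i.e. $\psi\in\elig^\bot$, where $\psi$ is not itself (a positive multiple of) an eligible extension. Here I would exploit the nonemptiness hypothesis by fixing $\phi\in B(\cA)\cap\extens$ and perturbing: for $\lambda\geq0$ one has $(\phi+\lambda\psi)(U)=1$ and, directly from the definition of $\sigma_\cA$ as an infimum, $\sigma_\cA(\phi+\lambda\psi)=\inf_{A\in\cA}\bigl(\phi(A)+\lambda\psi(A)\bigr)\geq\sigma_\cA(\phi)+\lambda\sigma_\cA(\psi)>-\infty$, so that $\phi+\lambda\psi\in B(\cA)\cap\extens$ for all $\lambda\geq0$. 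Evaluating the membership of $X$ at $\phi+\lambda\psi$ gives $\phi(X)+\lambda\psi(X)\geq\sigma_\cA(\phi)+\lambda\sigma_\cA(\psi)$, that is $\lambda\bigl(\psi(X)-\sigma_\cA(\psi)\bigr)\geq\sigma_\cA(\phi)-\phi(X)$ with a fixed finite right-hand side; letting $\lambda\to\infty$ forces $\psi(X)\geq\sigma_\cA(\psi)$, completing the reverse inclusion and hence \eqref{acceptance and multiple 2}.
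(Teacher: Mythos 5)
Your proposal is correct and follows essentially the same route as the paper's proof: the Hahn--Banach identity combined with Lemma~\ref{basic remarks support function} to reduce to the index set $B(\cA)\cap\elig_0^\bot$, the dichotomy $\psi(U)>0$ versus $\psi(U)=0$ handled by rescaling and by the perturbation $\phi+\lambda\psi$ with the superadditivity of $\sigma_\cA$, and the index-set identity $B(\cA)\cap\elig_0^\bot=B(\cA)\cap\elig^\bot$ in the empty case. The only cosmetic difference is that the paper uses the sequence $\varphi_n=\varphi+n\psi$ and divides by $n$, whereas you keep the parameter $\lambda$ and let it tend to infinity; the two are identical in substance.
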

\begin{proof}
Since the augmented acceptance set $\cA+\elig_0$ is convex, we can apply Hahn-Banach in the form~\eqref{hahn-banach} and use Lemma~\ref{basic remarks support function} to obtain
\begin{equation}
\label{key corollary: equation}
\Closure(\cA+\elig_0) = \bigcap\limits_{\psi\in B(\cA)\cap\,\elig_0^\bot} \HS(\psi,\sigma_\cA(\psi))\,.
\end{equation}

To prove~\textit{(i)} assume that $B(\cA)\cap\extens$ is nonempty. Take $X\in\cX$ such that $\varphi(X)\geq\sigma_\cA(\varphi)$ for all $\varphi\in B(\cA)\cap\extens$, and note that, to conclude the proof, it is sufficient to show that $\psi(X)\geq\sigma_\cA(\psi)$ for every $\psi\in B(\cA)\cap\elig_0^\bot$.

\smallskip

To show this, fix $\psi\in B(\cA)\cap\elig_0^\bot$. Note that, since $(\cA,\elig,\pi)$ is a risk measurement regime, there exists a positive payoff $U\in\elig$ such that $\pi(U)=1$. Assume first that $\psi(U)\neq0$. Then $\psi(U)>0$ and, up to a scaling by $\psi(U)$, the functional $\psi$ belongs to $B(\cA)\cap\extens$ so that $\psi(X)\geq\sigma_\cA(\psi)$. Otherwise, let $\psi(U)=0$ so that $\psi$ annihilates the whole $\elig$. In this case, take $\varphi\in B(\cA)\cap\extens$ and set $\varphi_n:=\varphi+n\psi$ for $n\in\N$. As a consequence of the superlinearity of $\sigma_\cA$, the functional $\varphi_n$ also belongs to $B(\cA)\cap\extens$. Hence,
\begin{equation}
\frac{1}{n}\varphi(X)+\psi(X)=\frac{1}{n}\varphi_n(X)\geq\frac{1}{n}\sigma_\cA(\varphi_n)
\geq\frac{1}{n}\sigma_\cA(\varphi)+\sigma_\cA(\psi)\,.
\end{equation}
Letting $n\to\infty$ we obtain $\psi(X)\geq\sigma_\cA(\psi)$, concluding the proof of~\textit{(i)}.

\smallskip

To prove~\textit{(ii)}, assume $B(\cA)\cap\extens$ is empty. Then we have
\begin{equation}
B(\cA)\cap\elig_0^\bot = B(\cA)\cap\elig^\bot
\end{equation}
and the claim follows immediately from~\eqref{key corollary: equation}, concluding the proof.
\end{proof}

\medskip

\begin{remark}
Note that if $\elig$ is one dimensional, then $\elig_0=\{0\}$. Hence, the above result provides a dual representation for the closure of a  convex acceptance set.
\end{remark}


\subsection{Extending the pricing functional}
\label{extend pi section}

Consider a convex risk measurement regime $(\cA,\elig,\pi)$. As a consequence of Theorem~\ref{key corollary}, it is important to investigate the existence of positive, continuous linear extensions of the pricing functional $\pi$ that belong to the barrier cone $B(\cA)$. In the following theorem we provide equivalent conditions for such extensions to exist. Since the acceptance set $\cA$ is only required to be convex, this result is also of independent interest as it provides a generalization of classical extension results for positive functionals established by Namioka in Theorems~2.2 and~4.4 in~\cite{Namioka1957}, by Bauer in Theorem~2 in~\cite{Bauer1957}, and by Hustad in Theorem~2 in~\cite{Hustad1960}.

\medskip

\begin{theorem}
\label{ext of pi}
Let $(\cA,\elig,\pi)$ be a convex risk measurement regime and assume that $\cA\cap\elig\neq\emptyset$. The following statements are equivalent:
\begin{enumerate}[(a)]
	\item $B(\cA)\cap\extens$ is nonempty;
	\item $\pi$ is bounded from below on $\Closure(\cA+\elig_0)\cap\elig$;
	\item $\pi$ is bounded from below on $(\cA+\cU)\cap\elig$ for some neighborhood of zero $\cU$.
\end{enumerate}
\end{theorem}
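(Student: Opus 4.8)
The plan is to prove the three statements equivalent cyclically, along the route (a) $\Rightarrow$ (c) $\Rightarrow$ (b) $\Rightarrow$ (a). The implications (a) $\Rightarrow$ (c) and (c) $\Rightarrow$ (b) are soft and rely only on continuity and on the definition of the closure; the entire weight of the statement — the actual production of a supporting extension of $\pi$ — is concentrated in (b) $\Rightarrow$ (a). Rather than run a fresh separation argument there, I would recycle the external representation already obtained in Theorem~\ref{key corollary}, which turns the existence question into a bookkeeping statement about the barrier cone.

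For (a) $\Rightarrow$ (c): pick $\psi\in B(\cA)\cap\extens$; since $\psi$ is continuous, $\cU:=\{\,x:|\psi(x)|<1\,\}$ is a neighborhood of $0$, and for $Z=A+u\in(\cA+\cU)\cap\elig$ one has $\pi(Z)=\psi(Z)=\psi(A)+\psi(u)\geq\sigma_\cA(\psi)-1>-\infty$, using $\psi|_\elig=\pi$ and $\psi\in B(\cA)$. For (c) $\Rightarrow$ (b): given $Z\in\Closure(\cA+\elig_0)\cap\elig$, take a net $A_\alpha+Z^0_\alpha\to Z$ with $A_\alpha\in\cA$, $Z^0_\alpha\in\elig_0$, and set $u_\alpha:=A_\alpha+Z^0_\alpha-Z\to 0$. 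The point $W_\alpha:=Z-Z^0_\alpha=A_\alpha-u_\alpha$ lies in $\elig$ and, since $-u_\alpha\to0$ eventually enters $\cU$, it lies in $(\cA+\cU)\cap\elig$; as $\pi(W_\alpha)=\pi(Z)$ (because $\pi$ kills $\elig_0$), the lower bound from (c) transfers to $\pi(Z)$. Both steps use $\cA\cap\elig\neq\emptyset$ only to guarantee the relevant sets are nonempty.

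The crux is (b) $\Rightarrow$ (a), which I would argue by contraposition. Assume $B(\cA)\cap\extens=\emptyset$. Then Theorem~\ref{key corollary}(ii) applies and gives $\Closure(\cA+\elig_0)=\bigcap_{\psi\in B(\cA)\cap\elig^\bot}\HS(\psi,\sigma_\cA(\psi))$. Here the hypothesis $\cA\cap\elig\neq\emptyset$ does the decisive work: fixing $A_0\in\cA\cap\elig$, every $\psi\in B(\cA)\cap\elig^\bot$ satisfies $\sigma_\cA(\psi)\leq\psi(A_0)=0$, since $\psi$ annihilates $\elig\ni A_0$. Consequently any $Z\in\elig$ meets $\psi(Z)=0\geq\sigma_\cA(\psi)$ for every index $\psi$, so $\elig\subset\Closure(\cA+\elig_0)$ and hence $\Closure(\cA+\elig_0)\cap\elig=\elig$. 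Since the regime provides a positive $U\in\elig$ with $\pi(U)=1$, the functional $\pi$ is unbounded below on $\elig$ (e.g. along $-tU$ as $t\to\infty$), so (b) fails, closing the cycle.

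The main obstacle is exactly this last step, and the hypothesis $\cA\cap\elig\neq\emptyset$ cannot be dropped: without it the bound $\sigma_\cA(\psi)\leq0$ may fail and $\elig$ need not be swallowed by $\Closure(\cA+\elig_0)$. If one preferred a self-contained proof of (b) $\Rightarrow$ (a) not invoking Theorem~\ref{key corollary}, the alternative is a direct separation establishing (c) $\Rightarrow$ (a): fattening $\cA$ by an open convex $\cU$ makes $\cA+\cU+\elig_0$ open, so it can be separated from the disjoint affine slice $\{\,Z\in\elig:\pi(Z)=-c-1\,\}$; the separating functional lies in $\elig_0^\bot$, and one verifies through $\cA\cap\elig\neq\emptyset$ that it does not vanish on $U$, so after rescaling it extends $\pi$ and sits in $B(\cA)$. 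The delicate point in that alternative is precisely checking $\psi(U)\neq0$ and pinning down its sign, which is again where $\cA\cap\elig\neq\emptyset$ re-enters.
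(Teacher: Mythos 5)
Your proof is correct, and it closes the cycle of implications along a slightly different route than the paper. The paper proves (a) $\Leftrightarrow$ (b) first (with (b) $\Rightarrow$ (a) exactly as in your contrapositive argument via Theorem~\ref{key corollary}(ii) and the bound $\sigma_\cA(\psi)\leq\psi(A_0)=0$), then proves (a) $\Rightarrow$ (c) as you do, and finally closes the loop with a \emph{direct separation argument} for (c) $\Rightarrow$ (a): it fattens $\cA$ by an open convex $\cU$, separates the open acceptance set $\cA+\cU$ from the set $\{Z\in\elig:\pi(Z)\leq m\}$, and then carefully checks that the separating functional annihilates $\elig_0$, satisfies $\psi(U)>0$ (this is where $\cA\cap\elig\neq\emptyset$ re-enters), and lies in $B(\cA)$ after rescaling --- precisely the ``alternative'' you sketch in your last paragraph. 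You instead replace that second Hahn--Banach application with the elementary implication (c) $\Rightarrow$ (b): given $Z\in\Closure(\cA+\elig_0)\cap\elig$, approximating nets $A_\alpha+Z^0_\alpha\to Z$ yield points $W_\alpha=Z-Z^0_\alpha=A_\alpha-u_\alpha$ of $(\cA+\cU)\cap\elig$ with $\pi(W_\alpha)=\pi(Z)$, which transfers the lower bound. This step is sound (the net argument is the right tool in a general topological vector space, and $\pi$ does vanish on $\elig_0$), and it buys you a proof in which the only separation-theoretic input is the single external representation of Theorem~\ref{key corollary}; the paper's route is more self-contained in that its (c) $\Rightarrow$ (a) explicitly constructs the extension, which is arguably more informative but requires the delicate verification $\psi(U)>0$ that your main argument avoids entirely.
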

\begin{proof}
We first prove that~\textit{(a)} and~\textit{(b)} are equivalent. Assume~\textit{(a)} holds and take $\psi\in\extens\cap B(\cA)$. Then, by Theorem~\ref{key corollary}, we have $\pi(Z)=\psi(Z)\geq\sigma_\cA(\psi)>-\infty$ for all $Z\in\Closure(\cA+\elig_0)\cap\elig$, implying that~\textit{(b)} holds.

\smallskip

Assume now that~\textit{(b)} holds but $\extens\cap B(\cA)$ is empty. Then $\elig\subset\Closure(\cA+\elig_0)$. Indeed, taking $W\in\cA\cap\elig\subset\Closure(\cA+\elig_0)$ we have $\sigma_\cA(\psi)\leq\psi(W)=0$ for all $\psi\in B(\cA)\cap\elig^\bot$. Consequently, $\sigma_\cA(\psi) \le 0=\psi(Z)$ for all $Z\in\elig$ and $\psi\in B(\cA)\cap\elig^\bot$, yielding $\elig\subset\Closure(\cA+\elig_0)$ by Theorem~\ref{key corollary}. However, this implies $\Closure(\cA+\elig_0)\cap\elig=\elig$, contradicting~\textit{(b)}. It follows that~\textit{(a)} and~\textit{(b)} are equivalent.

\smallskip

To see that \textit{(a)} implies \textit{(c)}, take $\psi\in\extens\cap B(\cA)$. If $\cU:=\{\,X\in\cX: \, \psi(X)>-1\,\}$, then $\pi(Z)=\psi(Z)>\sigma_\cA(\psi)-1>-\infty$ for all $Z\in(\cA+\cU)\cap\elig$.

\smallskip

We conclude by proving that \textit{(c)} implies \textit{(a)}. Take a positive $U\in\elig$ with $\pi(U)=1$. Then, $\elig=\R U+\elig_0$. Without loss of generality we can take $\cU$ to be open and convex so that $\cA+\cU$ is an open and convex acceptance set. Note that $(\cA+\cU)\cap\{\,Z\in\elig: \, \pi(Z)\leq m\,\}$ is empty for some $m<0$ and that $\{\,Z\in\elig: \, \pi(Z)\leq m\,\}=\{\,\lambda U+Z_0: \, \lambda\le m, \;\, Z_0\in\elig_0\}$. Hence, we find by separation and Lemma~\ref{halfspaces containing acceptance sets} a nonzero positive $\psi\in\cX'$ such that
\begin{equation}
\label{extension aux}
\lambda\psi(U)+\psi(Z_0)\leq\psi(A+X)
\end{equation}
for all $A\in\cA$, $X\in\cU$, $\lambda\le m$ and $Z_0\in\elig_0$. Since $\elig_0$ is a subspace,~\eqref{extension aux} implies that $\psi\in\elig_0^\bot$. Furthermore $\psi(U)>0$. Indeed, since $\psi(U)\ge 0$, we would otherwise have $\psi(U)=0$ and, hence, $\psi\in\elig^\bot$. But then, taking $A\in\cA\cap\elig$ we would obtain from~\eqref{extension aux}~that $0\leq\psi(X)$ for $X\in\cU$ which is impossible since $\cU$ is a neighborhood of zero and $\psi$ is nonzero. Rescaling $\psi$ to satisfy $\psi(U)=\pi(U)$ we have $\psi\in\extens$. Finally,~\eqref{extension aux} also implies that $\inf_{A\in\cA}\psi(A)>-\infty$ so that $\psi\in B(\cA)$.
\end{proof}

\medskip

\begin{remark}
\label{remark on acceptable elig payoffs}
Let $(\cA,\elig,\pi)$ be a convex risk measurement regime.
\begin{enumerate}[(i)]
	\item It is easy to see that, if $\cA+\elig_0$ is closed, the conditions in the previous theorem are equivalent to the ``no acceptability arbitrage'' condition stated in Lemma~\ref{lemma no acc arbitrage}, i.e. $\cA\cap\{\,Z\in\elig: \, \pi(Z)\leq m\,\}=\emptyset$ for some $m\in\R$. In particular, they are equivalent to $\rho_{\cA,\elig,\pi}(0)>-\infty$. A general criterion for $\cA+\elig_0$ to be closed is provided in Proposition~\ref{closedeness of a sum} above.
	\item Requiring that $\cA\cap\elig\neq\emptyset$ is equivalent to requiring $\rho_{\cA,\elig,\pi}(0)<\infty$. This is reasonable since the zero position should either be acceptable in the first place or capable of being made acceptable by some eligible strategy. Moreover, if $\cA\cap\elig=\emptyset$ then $\rho_{\cA,\elig,\pi}(Z)=\infty$ for all $Z\in\elig$.
\end{enumerate}
\end{remark}


\subsection{Dual representation of convex multi-asset risk measures}

In this section we derive dual representation theorems for convex, lower semicontinuous multi-asset risk measures. We refer to Section~\ref{finiteness continuity section} for conditions ensuring lower semicontinuity or continuity under various assumptions on the risk measurement regime $(\cA,\elig,\pi)$.

\medskip

We start by showing that the condition $\extens\cap B(\cA)\neq\emptyset$ characterizes the nondegeneracy of convex multi-asset risk measures.

\medskip

\begin{proposition}
\label{sound theory}
Let $(\cA,\elig,\pi)$ be a convex risk measurement regime. If $\extens\cap B(\cA)=\emptyset$, then $\rho_{\cA,\elig,\pi}$ cannot take finite values at any point $X\in\cX$ of lower semicontinuity.
\end{proposition}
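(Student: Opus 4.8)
The plan is to argue by contradiction, exploiting that when $\extens\cap B(\cA)$ is empty the closed augmented acceptance set $\Closure(\cA+\elig_0)$ becomes invariant under translation by the entire eligible space $\elig$, a property that is incompatible with lower semicontinuity at any point of finiteness.

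First I would suppose, towards a contradiction, that $\rho_{\cA,\elig,\pi}$ is lower semicontinuous at some $X\in\cX$ with $r:=\rho_{\cA,\elig,\pi}(X)$ finite. The hypothesis $\extens\cap B(\cA)=\emptyset$ lets me invoke part \textit{(ii)} of Theorem~\ref{key corollary}, which gives
\[
\Closure(\cA+\elig_0)=\bigcap_{\psi\in B(\cA)\cap\elig^\bot}\HS(\psi,\sigma_\cA(\psi)).
\]
The key observation is that every $\psi$ appearing here annihilates all of $\elig$, so $\psi(Y+Z)=\psi(Y)$ for all $Y\in\cX$ and $Z\in\elig$. Consequently each halfspace $\HS(\psi,\sigma_\cA(\psi))$, and hence their intersection, is invariant under adding arbitrary elements of $\elig$; that is, $\Closure(\cA+\elig_0)+\elig=\Closure(\cA+\elig_0)$.

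Next I would use finiteness to place $X$ inside this invariant set and then slide it down in price. Since $r<\infty$, the defining infimum is taken over a nonempty set, so there is some $Z\in\elig$ with $X+Z\in\cA\subset\Closure(\cA+\elig_0)$. By the $\elig$-invariance just established, $X+Z'\in\Closure(\cA+\elig_0)$ for \emph{every} $Z'\in\elig$. Because $(\cA,\elig,\pi)$ is a risk measurement regime, $\elig$ contains a positive payoff $U$, and no arbitrage forces $\pi(U)>0$; choosing $Z':=Z-\lambda U$ with $\lambda>0$ large enough yields $\pi(Z')=\pi(Z)-\lambda\pi(U)<r$ while still $X+Z'\in\Closure(\cA+\elig_0)$. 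This contradicts the characterization of lower semicontinuity at $X$ in Proposition~\ref{pointwise semicontinuity}\textit{(i)}, which requires $X+Z'\notin\Closure(\cA+\elig_0)$ for all $Z'\in\elig$ with $\pi(Z')<r$. The contradiction forces $\rho_{\cA,\elig,\pi}(X)\notin\R$.

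I expect the only delicate point to be the translation-invariance step: one must correctly read off from Theorem~\ref{key corollary}\textit{(ii)} that the representing functionals lie in $\elig^\bot$ rather than merely in $\elig_0^\bot$, which is precisely the content of case \textit{(ii)} and is exactly what fails when $\extens\cap B(\cA)$ is nonempty. The remaining steps—extracting one acceptable eligible shift from finiteness and depressing its price below $r$ via the positive payoff $U$—are routine once invariance is in hand. A minor case to note is $B(\cA)\cap\elig^\bot=\emptyset$, where the intersection is all of $\cX$; then $\Closure(\cA+\elig_0)=\cX$ is trivially $\elig$-invariant and the same contradiction goes through.
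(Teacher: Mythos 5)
Your proposal is correct and follows essentially the same route as the paper: both arguments rest on Theorem~\ref{key corollary}\textit{(ii)} implying that $\Closure(\cA+\elig_0)$ is invariant under translation by $\elig$ (since the representing functionals lie in $\elig^\bot$), combined with Proposition~\ref{pointwise semicontinuity}. The only cosmetic difference is that you argue by contradiction at a hypothetical point of finiteness, whereas the paper directly exhibits the dichotomy $\rho_{\cA,\elig,\pi}(X)\in\{-\infty,+\infty\}$ according to whether $X$ lies in $\Closure(\cA+\elig_0)$ or not.
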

\begin{proof}
Let $X$ be a point of lower semicontinuity for $\rho_{\cA,\elig,\pi}$. By Proposition~\ref{pointwise semicontinuity} we can assume that $\cA+\elig_0$ is closed. Take a positive $U\in\elig$ with $\pi(U)=1$. As a consequence of Theorem~\ref{key corollary}, we have for any $m\in\R$ that $X+mU\in\cA+\elig_0$ if and only if $X\in\cA+\elig_0$. Hence, the Reduction Lemma implies that $\rho_{\cA,\elig,\pi}(X)=-\infty$ if $X\in\cA+\elig_0$ and $\rho_{\cA,\elig,\pi}(X)=\infty$ if $X\notin\cA+\elig_0$.
\end{proof}

\medskip

As a corollary of the above result we obtain a characterization of when a convex, (globally) lower semicontinuous multi-asset risk measure never takes the value $-\infty$. In particular, this is the case whenever there is no ``acceptability arbitrage'' as defined in Section~\ref{finiteness continuity section}.

\medskip

\begin{corollary}
\label{corollary on pointwise representation}
Let $(\cA,\elig,\pi)$ be a convex risk measurement regime with $\cA\cap\elig\neq\emptyset$. Assume $\rho_{\cA,\elig,\pi}$ is lower semicontinuous. Then the following statements are equivalent:
\begin{enumerate}[(a)]
  \item $\rho_{\cA,\elig,\pi}(X)\in\R$ for some $X\in\cX$;
  \item $\extens\cap B(\cA)$ is nonempty;
  \item $\rho_{\cA,\elig,\pi}(X)>-\infty$ for every $X\in\cX$;
  \item $\rho_{\cA,\elig,\pi}(0)>-\infty$.
\end{enumerate}
\end{corollary}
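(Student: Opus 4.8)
The plan is to prove the cyclic chain of implications $(a)\Rightarrow(b)\Rightarrow(c)\Rightarrow(d)\Rightarrow(a)$. Two of these links are immediate from results already at hand, one is a short direct computation, and the last rests on the standing hypothesis $\cA\cap\elig\neq\emptyset$. Global lower semicontinuity of $\rho_{\cA,\elig,\pi}$ is used only in the step $(a)\Rightarrow(b)$.

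For $(a)\Rightarrow(b)$ I would argue by contraposition using Proposition~\ref{sound theory}. Suppose $\extens\cap B(\cA)=\emptyset$. Since $\rho_{\cA,\elig,\pi}$ is lower semicontinuous at every point, every $X\in\cX$ is a point of lower semicontinuity, so Proposition~\ref{sound theory} forces $\rho_{\cA,\elig,\pi}(X)\notin\R$ for all $X$. This is exactly the negation of $(a)$, so $(a)$ implies $(b)$.

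The step $(b)\Rightarrow(c)$ is the one genuinely computational link, and it is the germ of the dual lower bound for the risk measure. Fix any $\psi\in\extens\cap B(\cA)$. For every $X\in\cX$ and every $Z\in\elig$ with $X+Z\in\cA$ I would combine $\psi(A)\geq\sigma_\cA(\psi)$ for all $A\in\cA$ with the extension property $\psi(Z)=\pi(Z)$ on $\elig$ to write
\begin{equation}
\pi(Z)=\psi(Z)=\psi(X+Z)-\psi(X)\geq\sigma_\cA(\psi)-\psi(X)\,.
\end{equation}
Taking the infimum over all admissible $Z$ yields $\rho_{\cA,\elig,\pi}(X)\geq\sigma_\cA(\psi)-\psi(X)$, and the right-hand side is finite because $\psi\in B(\cA)$ means $\sigma_\cA(\psi)>-\infty$. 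Hence $(c)$ holds. Note that this link uses neither lower semicontinuity nor any structure beyond membership of $\psi$ in $\extens$ and in $B(\cA)$.

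Finally, $(c)\Rightarrow(d)$ is trivial upon specializing to $X=0$, and for $(d)\Rightarrow(a)$ I would invoke the hypothesis $\cA\cap\elig\neq\emptyset$: by Remark~\ref{remark on acceptable elig payoffs}(ii) this is equivalent to $\rho_{\cA,\elig,\pi}(0)<\infty$, so combined with $(d)$, namely $\rho_{\cA,\elig,\pi}(0)>-\infty$, one obtains $\rho_{\cA,\elig,\pi}(0)\in\R$, which is $(a)$ with $X=0$. I do not anticipate a real obstacle here, since the substantive content has already been absorbed into Proposition~\ref{sound theory} and Theorem~\ref{ext of pi}; the only place demanding mild care is the finiteness bookkeeping in $(b)\Rightarrow(c)$, ensuring that the lower bound $\sigma_\cA(\psi)-\psi(X)$ is genuinely real for every $X$.
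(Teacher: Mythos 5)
Your proof is correct, and it follows the same cyclic scheme $(a)\Rightarrow(b)\Rightarrow(c)\Rightarrow(d)\Rightarrow(a)$ as the paper, with three of the four links identical: $(a)\Rightarrow(b)$ via Proposition~\ref{sound theory}, the trivial $(c)\Rightarrow(d)$, and $(d)\Rightarrow(a)$ via $\cA\cap\elig\neq\emptyset$. The one place you genuinely diverge is $(b)\Rightarrow(c)$. The paper first invokes Proposition~\ref{pointwise semicontinuity} to replace $\cA+\elig_0$ by its closure, then applies Theorem~\ref{key corollary} to see that $\psi(X+mU)\geq\sigma_\cA(\psi)$ fails for $m$ sufficiently negative, hence $X+mU\notin\cA+\elig_0$, and concludes via the Reduction Lemma. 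You instead prove the weak-duality inequality $\rho_{\cA,\elig,\pi}(X)\geq\sigma_\cA(\psi)-\psi(X)$ directly from the definition, using only $\psi\in\extens\cap B(\cA)$. Your route is more elementary and, as you note, establishes $(b)\Rightarrow(c)$ without any semicontinuity hypothesis and without needing $\cA+\elig_0$ to be closed; it is essentially the easy half of the dual representation in Theorem~\ref{pointwise repr formula for multiple assets}. What the paper's detour buys is consistency with its geometric methodology (everything is read off from the external description of the closed augmented acceptance set), but for this particular implication nothing is lost by your shortcut. Both arguments are sound.
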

\begin{proof}
By Proposition~\ref{pointwise semicontinuity} we can assume that $\cA+\elig_0$ is closed. Clearly \textit{(a)} implies \textit{(b)} by Proposition~\ref{sound theory}. Assume \textit{(b)} holds and take $X\in\cX$ and $\psi\in\extens\cap B(\cA)$. Take a positive $U\in\elig$ with $\pi(U)=1$. Since $\psi(X)+m=\psi(X+mU)\geq\sigma_\cA(\psi)$ cannot hold for every $m\in\R$, Theorem~\ref{key corollary} implies that $X+mU\notin\cA+\elig_0$ for some $m\in\R$. Hence, $\rho_{\cA,\elig,\pi}(X)>-\infty$ by the Reduction Lemma, proving \textit{(c)}. Clearly, \textit{(c)} implies \textit{(d)} which implies \textit{(a)} because we have $\rho_{\cA,\elig,\pi}(0)<\infty$ as a consequence of the assumption $\cA\cap\elig\neq\emptyset$.
\end{proof}

\medskip

\begin{remark}
This corollary can be seen as a sharper version, for the case of risk measures, of a well-known result stating that, on a locally convex topological vector space, lower semicontinuous convex functions which are not identical to $\infty$ take some finite value if and only if they never assume the value $-\infty$; see for instance Proposition~2.4 in~\cite{EkelandTemam1976}. For a risk measure with risk measurement regime $(\cA,\elig,\pi)$ satisfying $\cA\cap\elig\neq\emptyset$, this is the case if and only if it does not assume the value $-\infty$ at $0$.
\end{remark}

\medskip

We are now ready to prove the following version of the dual representation for convex multi-asset risk measures. As already mentioned, the proof draws on the dual representation of the augmented acceptance set obtained in Theorem~\ref{key corollary}.

\medskip

\begin{theorem}
\label{pointwise repr formula for multiple assets}
Let $(\cA,\elig,\pi)$ be a convex risk measurement regime and assume $\rho_{\cA,\elig,\pi}$ is lower semicontinuous at $X\in\cX$. If $\extens\cap B(\cA)$ is nonempty, then
\begin{equation}
\label{representation formula eqn 2 multiple}
\rho_{\cA,\elig,\pi}(X)=\sup_{\psi\in\extens}\left\{\sigma_\cA(\psi)-\psi(X)\right\}\,.
\end{equation}
In particular, if $\elig$ is generated by a nonzero $U\in\cX_+$ and $\psi(U)>0$ for some $\psi\in B(\cA)$, then
\begin{equation}
\label{representation formula eqn 2 multiple, single}
\rho_{\cA,U,\pi}(X)=\sup_{\psi\in\cX'_+, \,\psi(U)=\pi(U)}\left\{\sigma_\cA(\psi)-\psi(X)\right\}\,.
\end{equation}
\end{theorem}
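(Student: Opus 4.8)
The plan is to reduce to the single-asset case via the Reduction Lemma and then read off the representation from the dual description of the augmented acceptance set furnished by Theorem~\ref{key corollary}. First I would fix a positive $U\in\elig$ with $\pi(U)=1$, which exists because $(\cA,\elig,\pi)$ is a risk measurement regime and $\pi$ is strictly positive. The Reduction Lemma then gives $\rho_{\cA,\elig,\pi}(X)=\rho_{\cA+\elig_0,U,\pi}(X)=\inf\{m\in\R:X+mU\in\cA+\elig_0\}$.

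The first genuine step is to replace $\cA+\elig_0$ by its closure. Since $\rho_{\cA,\elig,\pi}$ is lower semicontinuous at $X$, Proposition~\ref{pointwise semicontinuity} shows that $X+mU\notin\Closure(\cA+\elig_0)$ whenever $m<\rho_{\cA,\elig,\pi}(X)$ (apply it with $Z=mU$, so that $\pi(Z)=m$), whence $\rho_{\Closure(\cA+\elig_0),U,\pi}(X)\geq\rho_{\cA,\elig,\pi}(X)$; the reverse inequality is immediate from $\cA+\elig_0\subset\Closure(\cA+\elig_0)$. Thus $\rho_{\cA,\elig,\pi}(X)=\inf\{m\in\R:X+mU\in\Closure(\cA+\elig_0)\}$.

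Next I would invoke Theorem~\ref{key corollary}(i), which applies since $\extens\cap B(\cA)$ is nonempty, to write $\Closure(\cA+\elig_0)=\bigcap_{\psi\in B(\cA)\cap\extens}\HS(\psi,\sigma_\cA(\psi))$. The membership $X+mU\in\Closure(\cA+\elig_0)$ then unfolds into the requirement $\psi(X)+m\psi(U)\geq\sigma_\cA(\psi)$ for every $\psi\in B(\cA)\cap\extens$; since each such $\psi$ satisfies $\psi(U)=\pi(U)=1$, this reads $m\geq\sigma_\cA(\psi)-\psi(X)$. Taking the infimum over admissible $m$ converts it into a supremum over $\psi$, giving $\rho_{\cA,\elig,\pi}(X)=\sup_{\psi\in B(\cA)\cap\extens}\{\sigma_\cA(\psi)-\psi(X)\}$. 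Extending the supremum from $B(\cA)\cap\extens$ to all of $\extens$ changes nothing, because any $\psi\in\extens\setminus B(\cA)$ has $\sigma_\cA(\psi)=-\infty$ by definition of the barrier cone and hence contributes $-\infty$; this yields~\eqref{representation formula eqn 2 multiple}.

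For the single-asset specialization I would note that $\elig=\R U$ forces $\elig_0=\{0\}$ and $\extens=\{\psi\in\cX'_+:\psi(U)=\pi(U)\}$. The hypothesis that $\psi(U)>0$ for some $\psi\in B(\cA)$, together with the facts that $B(\cA)$ is a cone and $B(\cA)\subset\cX'_+$, lets me rescale that functional to one lying in $B(\cA)\cap\extens$, so the nonemptiness hypothesis of the general formula is met and~\eqref{representation formula eqn 2 multiple, single} follows immediately. I expect the one point needing care to be the passage to the closure in the second paragraph: it is precisely there that lower semicontinuity at $X$ enters, and it is what licenses applying the Hahn--Banach representation to $\Closure(\cA+\elig_0)$ rather than to $\cA+\elig_0$ itself.
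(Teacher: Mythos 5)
Your proof is correct and follows essentially the same route as the paper's: reduce to a single-asset risk measure via the Reduction Lemma, pass to the closed augmented acceptance set using lower semicontinuity at $X$ and Proposition~\ref{pointwise semicontinuity}, and then unwind the dual description of $\Closure(\cA+\elig_0)$ from Theorem~\ref{key corollary}. You merely spell out two details the paper leaves implicit, namely the justification for replacing $\cA+\elig_0$ by its closure and the rescaling argument showing $B(\cA)\cap\extens\neq\emptyset$ in the single-asset case; both are handled correctly.
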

\begin{proof}
Note that the augmented acceptance set $\cA+\elig_0$ is convex and that, by Proposition~\ref{pointwise semicontinuity}, we may assume that it is closed. Fix now a positive $U\in\elig$ with $\pi(U)=1$. Then the Reduction Lemma and Theorem~\ref{key corollary} imply
\begin{eqnarray}
\rho_{\cA,\elig,\pi}(X)
&=&
\inf\{\,m\in\R: \, X+mU\in\cA+\elig_0\,\} \\
&=&
\inf\Big\{\,m\in\R: \, \psi(X)+m\,\psi(U)\geq\sigma_{\cA}(\psi), \;\, \forall \ \psi\in\extens\,\Big\}\,.
\end{eqnarray}
Since $\psi(U)=\pi(U)=1$ for all $\psi\in\extens$, the representation~\eqref{representation formula eqn 2 multiple} immediately follows. Finally, if $\elig$ is spanned by a nonzero, positive $U\in\cX$, then $\psi\in\extens$ is equivalent to $\psi\in\cX'_+$ and $\psi(U)=\pi(U)$, and the corresponding dual representation follows from~\eqref{representation formula eqn 2 multiple}.
\end{proof}

\medskip

In case of coherent risk measurement regimes we obtain a simplified representation as a consequence of the properties of the corresponding support function.

\medskip

\begin{corollary}
\label{representation of coherent}
Assume $(\cA,\elig,\pi)$ is a coherent risk measurement regime and $\rho_{\cA,\elig,\pi}$ is lower semicontinuous at $X\in\cX$. If $\extens\cap B(\cA)$ is nonempty, then
\begin{equation}
\rho_{\cA,\elig,\pi}(X)=\sup\{\,\psi(-X): \, \psi\in\extens, \;\, \psi(A)\geq0, \ \forall \ A\in\cA\}\,.
\end{equation}
In particular, if $\elig$ is generated by a nonzero $U\in\cX_+$ and $\psi(U)>0$ for some $\psi\in B(\cA)$, then
\begin{equation}
\rho_{\cA,U,\pi}(X)=\sup\{\,\psi(-X): \, \psi\in\cX'_+, \;\, \psi(U)=\pi(U) \textrm{ and } \psi(A)\geq0, \ \forall \ A\in\cA\,\}\,.
\end{equation}
\end{corollary}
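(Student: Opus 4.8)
The plan is to read off this corollary directly from the dual representation in Theorem~\ref{pointwise repr formula for multiple assets}, exploiting the special form the support function takes when the acceptance set is a cone. Since $(\cA,\elig,\pi)$ is coherent and $\extens\cap B(\cA)$ is nonempty, the hypotheses of Theorem~\ref{pointwise repr formula for multiple assets} are met, so that formula applies verbatim and gives
$$\rho_{\cA,\elig,\pi}(X)=\sup_{\psi\in\extens}\{\sigma_\cA(\psi)-\psi(X)\}.$$
The only work remaining is to simplify the right-hand side using coherence.

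First I would invoke the description of the barrier cone of a cone recorded earlier in the excerpt: for a coherent (hence conic) $\cA$ one has $B(\cA)=\{\psi\in\cX':\sigma_\cA(\psi)=0\}=\{\psi\in\cX':\psi(A)\geq0,\ \forall A\in\cA\}$. This holds because $0\in\cA$ forces $\sigma_\cA(\psi)\leq0$, while any single $A$ with $\psi(A)<0$ drives $\sigma_\cA(\psi)$ to $-\infty$ through positive scaling. In particular $\sigma_\cA(\psi)\in\{0,-\infty\}$ for every $\psi\in\cX'$, which is the structural fact that makes the simplification possible.

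Next I would split the supremum according to whether $\psi$ lies in $B(\cA)$. For $\psi\in\extens\setminus B(\cA)$ the summand equals $-\infty-\psi(X)=-\infty$ and contributes nothing; because $\extens\cap B(\cA)\neq\emptyset$, these terms can be discarded without changing the value of the supremum. For $\psi\in\extens\cap B(\cA)$ we have $\sigma_\cA(\psi)=0$, so the summand reduces to $-\psi(X)=\psi(-X)$. Combining the two cases and re-expressing $B(\cA)$ by its conic description yields
$$\rho_{\cA,\elig,\pi}(X)=\sup_{\psi\in\extens\cap B(\cA)}\psi(-X)=\sup\{\psi(-X):\psi\in\extens,\ \psi(A)\geq0,\ \forall A\in\cA\},$$
which is the asserted multi-asset representation.

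Finally, the single-asset formula follows by specialization, exactly along the lines of the closing paragraph of the proof of Theorem~\ref{pointwise repr formula for multiple assets}: when $\elig$ is spanned by a nonzero positive $U$, membership $\psi\in\extens$ is equivalent to $\psi\in\cX'_+$ together with $\psi(U)=\pi(U)$, and the assumption that $\psi(U)>0$ for some $\psi\in B(\cA)$ guarantees $\extens\cap B(\cA)\neq\emptyset$, so the representation applies. Substituting this characterization of $\extens$ into the formula above gives the stated single-asset representation. I do not expect a genuine obstacle here; the only point demanding care is checking that the $-\infty$ contributions are harmless, and that is precisely where the nonemptiness of $\extens\cap B(\cA)$ enters.
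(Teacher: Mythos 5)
Your proof is correct and follows exactly the route the paper intends: the corollary is stated there as an immediate consequence of Theorem~\ref{pointwise repr formula for multiple assets} together with the fact, recorded earlier in the text, that for a cone $B(\cA)=\{\psi\in\cX':\sigma_\cA(\psi)=0\}=\{\psi\in\cX':\psi(A)\geq0,\ \forall A\in\cA\}$, so that $\sigma_\cA$ only takes the values $0$ and $-\infty$. Your handling of the $-\infty$ terms via the nonemptiness of $\extens\cap B(\cA)$ and the specialization to the single-asset case are both exactly as the paper would have it.
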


\medskip

\begin{remark}
\label{separation of roles in repr formula}
The above dual representation~\eqref{representation formula eqn 2 multiple} highlights the different roles played by $\cA$, $\elig$ and $\pi$ in the determination of the risk measure $\rho_{\cA,\elig,\pi}$. The acceptance set $\cA$ determines, through the support function $\sigma_\cA$, the \textit{objective function} and the space $\elig$ together with the pricing functional $\pi$ determines the \textit{optimization domain} $\extens$. In particular, modifying the eligible space while maintaining the acceptance set, only requires changing the optimization domain, but not the objective function, simplifying the practical implementation when alternative choices for the eligible space need to be considered.
\end{remark}

\medskip

\begin{remark}
In case of finite-valued risk measures, the representation~\eqref{representation formula eqn 2 multiple} is equivalent to that obtained by Frittelli and Scandolo in~\cite{FrittelliScandolo2006} by means of conjugate duality methods. In particular, it is not difficult to show that
\begin{equation}
\sigma_\cA(\psi)=-\rho_{\cA,\elig,\pi}^\ast(-\psi)=
-\sup_{X\in\cX}\{-\psi(X)-\rho_{\cA,\elig,\pi}(X)\} \ \ \ \mbox{for all} \ \psi\in\extens\,,
\end{equation}
where $\rho_{\cA,\elig,\pi}^\ast$ denotes the standard conjugate function of $\rho_{\cA,\elig,\pi}$.

\smallskip

Note that the condition $B(\cA)\cap\extens\neq\emptyset$, which is key to ensure nontrivial representations, appears naturally when investigating the structure of acceptance sets, which was the starting point of our analysis of dual representations. In this sense, our approach is more geometrical in character and seems to lead in a very natural way to consider the interplay between the acceptance set (via its barrier cone) and the eligible space (via the set of positive extensions). Additional evidence of the advantages of a more geometrical approach is provided in Section~\ref{shortfall section}, see in particular Remark~\ref{advantage geometrical approach with arai}.
\end{remark}

\medskip

\begin{remark}
\label{remark: attainability}
A natural and important question to ask is when the supremum in the representation formula~\eqref{representation formula eqn 2 multiple} is attained at a point $X\in\cX$. This is always the case if $\rho_{\cA,\elig,\pi}$ is finite and continuous at $X$, as can be derived from Theorem~7.12 in~\cite{AliprantisBorder2006}. This can also be shown directly  exploiting the particular structure of $\rho_{\cA,\elig,\pi}$. Indeed, let $X$ be a point of finiteness and continuity for $\rho_{\cA,\elig,\pi}$, and take a positive payoff $U\in\elig$ with $\pi(U)=1$. Then Proposition~\ref{pointwise semicontinuity} implies that $\Interior(\cA+\elig_0)\neq\emptyset$ and $X+mU\in\Interior(\cA+\elig_0)$ for every $m>\rho_{\cA,\elig,\pi}(X)$. Fix such an $m$. Since $X+\rho_{\cA,\elig,\pi}(X)U$ belongs to the boundary of $\cA+\elig_0$, it follows from Lemma~7.7 in~\cite{AliprantisBorder2006} that it is also a support point of $\cA+\elig_0$. Let $\psi\in\cX'$ be the corresponding supporting functional which, by Lemma~\ref{halfspaces containing acceptance sets}, must be positive. Since $\psi(X+\rho_{\cA,\elig,\pi}(X)U)=\sigma_{\cA+\elig_0}(\psi)$, we must have $\psi\in\elig_0^\bot$. Moreover $\psi(U)>0$, as otherwise $\psi(X+mU)=\psi(X+\rho_{\cA,\elig,\pi}(X)U)=\sigma_{\cA+\elig_0}(\psi)$ contradicting $X+mU\in\Interior(\cA+\elig_0)$. Hence we can assume that $\psi(U)=1$, concluding that $\rho_{\cA,\elig,\pi}(X)=\sigma_\cA(\psi)-\psi(X)$. This shows that the supremum in~\eqref{representation formula eqn 2 multiple} is attained at $X$.
\end{remark}

\section{Applications}
\label{fields of application}

In this final section we apply our previous results to the conical market model adopted by Hamel, Heyde, and Rudloff in~\cite{HamelHeydeRudloff2011} as the underlying setting for their set-valued risk measures, and to the market model considered by Arai in~\cite{Arai2011} in connection to superhedging problems. Moreover, we show that multi-asset risk
measures appear naturally in the context of optimal risk sharing across different business lines.


\subsection{Risk measures on conical market models}
\label{set valued section}

In this section we investigate the link between multi-asset risk measures and set-valued risk measures as introduced by Hamel, Heyde and Rudloff in~\cite{HamelHeydeRudloff2011}. We start by briefly recalling the setting of that paper.

\medskip

We consider a one-period economy with dates $t=0$ and $t=T$ where uncertainty is captured by a probability space $(\Omega,\cF,\probp)$. We will consider risk measures for random portfolios of $d$ traded assets. Random portfolios are described by random vectors $X=(X_1,\dots,X_d)^t\in L^p_d$ for some $1\leq p\leq \infty$, where $X_i\in L^p$ represents the (random) number of units of asset $i$ held at time $T$. Note that the space $L^p_d$ inherits the order structure of $\R^d$ in the almost surely sense, see also Remark~\ref{remark: order structure standard spaces}. The topology on $L^p_d$ is the usual norm topology if $p<\infty$ and the $\sigma(L^\infty_d,L^1_d)$ topology if $p=\infty$.

\medskip

Let $A\subset L^p_d$ be an acceptance set. A random portfolio can be made acceptable by adding ``eligible'' deterministic portfolios at time $0$. The space of eligible portfolios is represented by a subspace $M\subset\R^d$, which is identified with the subspace of $L^p_d$ given by $\{\,(u_1 1_\Omega,\dots,u_d 1_\Omega)^t: \, u\in M\,\}$. Following~\cite{HamelHeydeRudloff2011}, we assume that $M$ contains some nonzero positive element so that $(A,M)$ is a risk measurement regime. The associated {\em set-valued risk measure} is the set-valued map $R_A$ defined by
\begin{equation}
R_A(X):=\{\,u\in M: \, X+u\in A\,\} \ \ \ \mbox{for} \ X\in L^p_d\,.
\end{equation}
For set-valued risk measures to be useful as capital requirements it is necessary to specify a procedure through which, roughly speaking, an optimal element in $R_A(X)$ can be chosen. Such a procedure is called {\em scalarization} and is explained in Section~5 of~\cite{HamelHeydeRudloff2011} and, in greater detail, in~\cite{HamelHeyde2010}. To illustrate this procedure, we recall the notion of the solvency cone.

\medskip

Let $(\pi_{ij}^0)$ be a $d\times d$ \textit{bid-ask matrix} as defined by Schachermayer~\cite{Schachermayer2004}, i.e. $\pi_{ij}^0$ represents the number of units of asset $i$ that are required to purchase one unit of asset $j$ at time $0$. The \textit{solvency cone} $K_0$ at time $0$ is the convex cone in $\R^d$ consisting of all \textit{solvent} deterministic portfolios, i.e. all portfolios $u\in\R^d$ admitting a $d\times d$ matrix $(\alpha_{ij})$ with $\alpha_{ij}\geq0$ and $\alpha_{ii}=0$ for $i,j=1,\dots,d$ such that
\begin{equation}
\label{solvency cone eq}
u_i+\sum^{d}_{j=1}\alpha_{ij}\pi_{ij}^0-\sum^{d}_{j=1}\alpha_{ji}\geq0 \ \ \ \mbox{for all} \ i=1,\dots,d\,,
\end{equation}
where the coefficient $\alpha_{ij}$ is to be interpreted as the number of units of asset $j$ that are used to modify the position $i$ in the portfolio. Hence, the solvency cone $K_0$ contains all portfolios which can be converted at time $0$ into portfolios with nonnegative components.

\medskip

The elements in the dual cone
\begin{equation}
K_0^+:=\Big\{\,\xi\in\R^d: \, \sum^{d}_{i=1}\xi_iu_i\geq0, \;\, \forall \ u\in K_0\,\Big\}
\end{equation}
are called \textit{consistent pricing systems}, and are easily seen to have strictly positive components when nonzero. The scalarization of a set-valued risk measure $R_A$ at $X\in L^p_d$ by means of $\xi\in K_0^+$ is then defined as
\begin{equation}
\varphi_{R_A,\xi}(X):=\inf\Big\{\,\sum^{d}_{i=1}\xi_iu_i: \, u\in R_A(X)\,\Big\}\,.
\end{equation}

We start by showing the intimate link between scalarized set-valued risk measures and multi-asset risk measures. This allows for a fruitful exchange of results between the theory developed in~\cite{HamelHeydeRudloff2011} and our multi-asset framework. In particular, we show that scalarized set-valued risk measures appear naturally as multi-asset risk measures. Hence, the results in this paper can be used to prove finiteness and continuity properties and to provide dual representations for scalarized set-valued risk measures. Note that, in this respect, only basic results are provided in~\cite{HamelHeydeRudloff2011}.

\medskip

\begin{proposition}
\label{from set-valued to multi}
For every $\xi\in K_0^+$ there exists a linear functional $\pi:M\to\R$ such that
\begin{equation}
\label{equation: from set valued to multi}
\varphi_{R_A,\xi}(X)=\rho_{A,M,\pi}(X) \ \ \ \mbox{for every} \ X\in L^p_d\,.
\end{equation}
\end{proposition}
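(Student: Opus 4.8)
The plan is to exhibit the required functional explicitly. I would define $\pi\colon M\to\R$ by
\begin{equation*}
\pi(u):=\sum_{i=1}^{d}\xi_i u_i\,,
\end{equation*}
that is, $\pi$ is the linear functional on $M$ induced by the fixed consistent pricing system $\xi\in K_0^+$ via the Euclidean pairing. Linearity of $\pi$ is immediate, so the only real task is to check that this choice makes the two sides of the claimed identity coincide for every $X\in L^p_d$.

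First I would simply unwind the definition of the scalarization. By definition $R_A(X)=\{\,u\in M:\,X+u\in A\,\}$, so the feasible set over which $\varphi_{R_A,\xi}(X)$ optimizes is precisely $\{\,u\in M:\,X+u\in A\,\}$, and the objective $\sum_{i=1}^{d}\xi_i u_i$ is by construction equal to $\pi(u)$. Hence
\begin{equation*}
\varphi_{R_A,\xi}(X)=\inf\Big\{\,\textstyle\sum_{i=1}^{d}\xi_i u_i:\,u\in M,\;X+u\in A\,\Big\}=\inf\{\,\pi(u):\,u\in M,\;X+u\in A\,\}=\rho_{A,M,\pi}(X)\,,
\end{equation*}
which is exactly the asserted equality. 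In this sense the statement is essentially a matching of definitions once the correct candidate for $\pi$ has been identified, and there is no genuine analytic obstacle; the substance lies entirely in recognizing that $\xi$ already is the pricing functional in disguise.

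For the identity to be useful downstream — namely so that $(A,M,\pi)$ qualifies as a risk measurement regime and the finiteness, continuity and duality results of the earlier sections become applicable — I would additionally verify that $\pi$ is strictly positive on $M$ whenever $\xi\neq0$. The point is that the positive orthant $\R^d_+$ is contained in the solvency cone $K_0$ (a nonnegative portfolio is trivially solvent, taking the conversion matrix $(\alpha_{ij})$ to be zero), so $\xi\in K_0^+$ forces $\xi_i\geq0$ for every $i$; combined with the fact recorded above that a nonzero consistent pricing system has strictly positive components, this yields $\xi_i>0$ for all $i$ and hence $\pi(u)=\sum_i\xi_i u_i>0$ for every nonzero $u\in M\cap\R^d_+$. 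Since $M$ was assumed to contain a nonzero positive element, $(A,M,\pi)$ is then a bona fide risk measurement regime. The only place requiring minor care is the identification of positivity in $L^p_d$ with positivity in $\R^d$ for the constant portfolios comprising $M$, which is immediate from the definition of the order on $L^p_d$.
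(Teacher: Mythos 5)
Your proof is correct and follows exactly the paper's approach: define $\pi(u):=\sum_{i=1}^d \xi_i u_i$ and observe that the two infima coincide by unwinding definitions. The additional verification that $\pi$ is strictly positive on $M$ (so that $(A,M,\pi)$ is a genuine risk measurement regime) is a sensible supplement that the paper leaves implicit via its earlier remark that nonzero consistent pricing systems have strictly positive components.
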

\begin{proof}
Define $\pi:M\to\R$ by $\pi(u):=\sum^{d}_{i=1}\xi_iu_i$. Then it is immediate to see that
\begin{equation}
\varphi_{R_A,\xi}(X)=\inf\{\,\pi(u): \, u\in M, \;\, X+u\in A\,\}=\rho_{A,M,\pi}(X)
\end{equation}
holds for every $X\in L^p_d$.
\end{proof}

\medskip

A converse of the previous result is also possible if we consider multi-asset risk measures on $L^p$ with respect to finite-dimensional eligible spaces.

\medskip

\begin{proposition}
Let $(\cA,\elig,\pi)$ be a risk measurement regime in $L^p$, $1\leq p\leq\infty$, and assume $\dim(\elig)=d$. Then there exist an acceptance set $A\subset L^p_{d+1}$, a linear space $M\subset\R^{d+1}$, a convex cone $K_0\subset\R^{d+1}$ and $\xi\in K_0^+$ such that
\begin{equation}
\label{from multi to set valued}
\rho_{\cA,\elig,\pi}(X)=\varphi_{R_A,\xi}\big((0,\dots,0,X)^t\big) \ \ \ \mbox{for every} \ X\in L^p\,.
\end{equation}
\end{proposition}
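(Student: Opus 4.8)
The plan is to construct the augmented objects explicitly so that the single scalar risk measure $\rho_{\cA,\elig,\pi}$ on $L^p$ becomes a scalarized set-valued risk measure on the larger space $L^p_{d+1}$. The extra coordinate is the ``scalar'' dimension carrying the original position $X$, while the first $d$ coordinates will encode the eligible directions. Concretely, fix a basis $Z_1,\dots,Z_d$ of $\elig$ and identify a portfolio $u=(u_1,\dots,u_d)\in\R^d$ with the eligible payoff $\sum_{i=1}^d u_i Z_i\in\elig\subset L^p$. First I would define the embedding $L^p\to L^p_{d+1}$, $X\mapsto(0,\dots,0,X)^t$, and set
\begin{equation}
A:=\Big\{\,(u,Y)^t\in L^p_{d+1}: \, Y+\sum_{i=1}^d u_i Z_i\in\cA\,\Big\}\,.
\end{equation}
One checks directly that $A$ is monotone (since $\cA$ is) and proper, hence an acceptance set; it is convex whenever $\cA$ is.

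Next I would choose the eligible space $M\subset\R^{d+1}$ to be $M:=\R^d\times\{0\}$, identified as usual with deterministic portfolios in $L^p_{d+1}$. For $\xi\in\R^{d+1}$ I would set $\xi_i:=\pi(Z_i)$ for $i=1,\dots,d$ and $\xi_{d+1}:=1$, and define $K_0$ so that $\xi\in K_0^+$; the simplest choice is to let $K_0$ be the half-space $\{\,w\in\R^{d+1}: \, \sum_{i=1}^{d+1}\xi_i w_i\ge 0\,\}$ intersected with whatever is needed to make it a convex cone, noting $K_0^+$ then contains $\xi$ by construction. The scalarization then reads
\begin{equation}
\varphi_{R_A,\xi}\big((0,\dots,0,X)^t\big)=\inf\Big\{\,\sum_{i=1}^{d}\pi(Z_i)u_i: \, u\in M, \ (u,X)^t\in A\,\Big\}\,,
\end{equation}
since the $(d+1)$-st component of any $u\in M$ vanishes. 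By the definition of $A$, the constraint $(u,X)^t\in A$ is exactly $X+\sum_{i=1}^d u_i Z_i\in\cA$, and $\sum_{i=1}^d\pi(Z_i)u_i=\pi\big(\sum_{i=1}^d u_i Z_i\big)$ by linearity of $\pi$. As $u$ ranges over $\R^d$, the payoff $\sum_{i=1}^d u_iZ_i$ ranges over all of $\elig$, so the infimum equals $\rho_{\cA,\elig,\pi}(X)$, which is~\eqref{from multi to set valued}.

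The main obstacle I expect is verifying the structural requirements on the constructed objects rather than the identity itself, which is essentially bookkeeping. Specifically, one must ensure that $A$ qualifies as a genuine acceptance set in $L^p_{d+1}$ (nontriviality and monotonicity with respect to the product order inherited from $\R^{d+1}$), that $M$ contains a nonzero positive element so that $(A,M)$ is a risk measurement regime in the sense required for $R_A$ to be well defined, and most delicately that $K_0$ can be taken to be a bona fide solvency cone, i.e. the dual cone $K_0^+$ of a genuine cone with $\xi\in K_0^+$. Since the consistent pricing systems must have strictly positive components, I would need the no-arbitrage hypothesis that $\pi$ is strictly positive on $\elig$, guaranteeing $\pi(Z_i)$ can be arranged positive after a suitable choice of basis, so that $\xi$ has strictly positive entries and legitimately lies in $K_0^+$; this is where the assumptions of the regime are genuinely used.
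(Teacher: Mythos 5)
Your construction coincides with the paper's own proof: the same acceptance set $A$ (positions whose $Z$-weighted combination plus the last coordinate lies in $\cA$), the same eligible space $M=\{u\in\R^{d+1}:u_{d+1}=0\}$, the same pricing vector $\xi=(\pi(Z_1),\dots,\pi(Z_d),1)^t$, and the same half-space cone $K_0=\{u:\sum_{i=1}^d\pi(Z_i)u_i+u_{d+1}\ge0\}$, followed by the same bookkeeping identity. The additional structural checks you flag are reasonable diligence but not needed beyond what the statement literally asserts, and the paper likewise treats them as immediate.
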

\begin{proof}
Let $\elig$ be the span of $Z_1,\dots,Z_d\in L^p$, and define the acceptance set
\begin{equation}
A:=\Big\{\,X\in L^p_{d+1}: \, \sum^{d}_{i=1}X_iZ_i+X_{d+1}\in\cA\,\Big\}\,.
\end{equation}
Moreover, set
\begin{equation}
M:=\{\,u\in\R^{d+1}: \, u_{d+1}=0\,\} \ \ \ \mbox{and} \ \ \ K_0:=\Big\{\,u\in\R^{d+1}: \, \sum^{d}_{i=1}\pi(Z_i)u_i+u_{d+1}\geq0\,\Big\}\,.
\end{equation}
Taking $\xi:=(\pi(Z_1),\dots,\pi(Z_d),1)^t\in K_0^+$, it is easy to see that~\eqref{from multi to set valued} holds for every $X\in L^p$.
\end{proof}

\medskip

\begin{remark}
The transition from a scalarized set-valued risk measure to a multi-asset risk measure in~\eqref{equation: from set valued to multi} is fairly natural in that the original underlying framework can be retained: The underlying space, the acceptance set, and the eligible space do not change. By contrast, the transition from a multi-asset risk measure to a scalarized set-valued risk measure seems to be more ``formal'' in character since we need to artificially enlarge the dimension of the eligible space by a ``cash'' asset and define a new acceptance set. Moreover, the new cash asset essentially plays the role of a num\'{e}raire asset since all positions $X\in L^p$ are expressed as random vectors $(0,\dots, 0,X)^t$.
\end{remark}

\medskip

Next, we provide a dual representation for scalarized set-valued risk measures based on the multi-asset dual representation~\eqref{representation formula eqn 2 multiple}. As usual, we identify the dual of $L^p_d$ with the space $L^q_d$ where $1\leq q\leq\infty$ satisfies $1/p+1/q=1$. Moreover, we denote by $\mathcal{P}^q_d$ the set of all $d$-dimensional vectors $\probq$ whose components $\probq_i$ are probability measures on $(\Omega,\cF)$ that are absolutely continuous with respect to $\probp$ and such that $\frac{d\probq_i}{d\probp}\in L^q$. In order to highlight the link with the dual representations for set-valued risk measures obtained in~\cite{HamelHeydeRudloff2011}, we adopt the same set of dual variables.

\medskip

\begin{proposition}
\label{dual repr for scal set valued}
Assume $A\subset L^p_d$ is convex. Let $\xi\in K_0^+$ and define $\pi(u):=\sum^{d}_{i=1}\xi_i u_i$ on $M$. If $B(A)\cap\mathscr{E}_\pi(M)$ is nonempty and $\varphi_{R_A,\xi}$ is lower semicontinuous at $X$, then
\begin{equation}
\label{dual repr set valued}
\varphi_{R_A,\xi}(X)=\sup_{(\probq,w)\in\mathcal{D}^q}\Big\{\sum^{d}_{i=1}w_i\E_{\probq_i}[-X_i]+\inf_{Y\in A}\sum^{d}_{i=1}w_i\E_{\probq_i}[Y_i]\Big\}
\end{equation}
where
\begin{equation}
\mathcal{D}^q:=\Big\{\,(\probq,w)\in\mathcal{P}^q_d\times\R^d_+: \, \sum^{d}_{i=1}w_iu_i=\sum^{d}_{i=1}\xi_i u_i, \;\, \forall \ u\in M\,\Big\}\,.
\end{equation}
\end{proposition}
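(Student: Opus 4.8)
The plan is to specialize the abstract dual representation formula~\eqref{representation formula eqn 2 multiple} from Theorem~\ref{pointwise repr formula for multiple assets} to the concrete setting of $L^p_d$. By Proposition~\ref{from set-valued to multi} we already know that $\varphi_{R_A,\xi}=\rho_{A,M,\pi}$ for the linear functional $\pi(u):=\sum_{i=1}^d\xi_i u_i$. Since we assume that $B(A)\cap\mathscr{E}_\pi(M)$ is nonempty and $\varphi_{R_A,\xi}$ is lower semicontinuous at $X$, all hypotheses of Theorem~\ref{pointwise repr formula for multiple assets} are met, so that
\begin{equation}
\varphi_{R_A,\xi}(X)=\rho_{A,M,\pi}(X)=\sup_{\psi\in\mathscr{E}_\pi(M)}\{\sigma_A(\psi)-\psi(X)\}\,.
\end{equation}
The entire task therefore reduces to rewriting the optimization domain $\mathscr{E}_\pi(M)$ and the two terms $\sigma_A(\psi)$ and $\psi(X)$ in terms of the explicit dual variables $(\probq,w)\in\mathcal{D}^q$.

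First I would identify $\psi\in (L^p_d)'$ with an element $\zeta=(\zeta_1,\dots,\zeta_d)\in L^q_d$ via the pairing $\psi(Y)=\sum_{i=1}^d\E_{\probp}[\zeta_i Y_i]$. Positivity of $\psi$ forces each $\zeta_i\geq0$ almost surely. The natural change of variables is $w_i:=\E_{\probp}[\zeta_i]$ together with the normalized density $\frac{d\probq_i}{d\probp}:=\zeta_i/w_i$ whenever $w_i>0$, so that $\E_{\probp}[\zeta_i Y_i]=w_i\E_{\probq_i}[Y_i]$. Under this substitution one has $\psi(X)=\sum_{i=1}^d w_i\E_{\probq_i}[X_i]$ and $\sigma_A(\psi)=\inf_{Y\in A}\sum_{i=1}^d w_i\E_{\probq_i}[Y_i]$, which matches the two summands in~\eqref{dual repr set valued} after sending $\psi$ to $-\psi$ via the sign inside $\E_{\probq_i}[-X_i]$. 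The remaining point is to match the constraint sets: the condition $\psi\in\mathscr{E}_\pi(M)$ means precisely $\psi(u)=\pi(u)$ for all $u\in M$, and since $u\in M\subset\R^{d}$ is deterministic this reads $\sum_{i=1}^d w_i u_i=\sum_{i=1}^d\xi_i u_i$ for all $u\in M$, which is exactly the defining relation for $(\probq,w)\in\mathcal{D}^q$.

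The one genuine subtlety, and the step I expect to require the most care, is the treatment of coordinates $i$ with $w_i=0$, where the density $\frac{d\probq_i}{d\probp}$ is undefined and $\probq_i$ cannot be recovered. When $w_i=0$ the corresponding functional component contributes nothing to either $\psi(X)$ or $\sigma_A(\psi)$, so one may freely assign any $\probq_i\in\mathcal{P}^q_d$ (for instance $\probp$ itself) to keep $(\probq,w)$ a legitimate element of $\mathcal{D}^q$ without altering the value of the supremand. This shows the correspondence $\psi\leftrightarrow(\probq,w)$ is onto the relevant set of variables and value-preserving, so the two suprema coincide. A small additional check is that the map $\psi\mapsto(\probq,w)$ lands in $\mathcal{P}^q_d\times\R^d_+$, i.e. that each $\probq_i$ is genuinely a probability measure absolutely continuous with respect to $\probp$ with density in $L^q$; this is immediate from $\E_{\probp}[\zeta_i/w_i]=1$ and $\zeta_i\in L^q$. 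With these identifications the representation~\eqref{dual repr set valued} follows directly from~\eqref{representation formula eqn 2 multiple}.
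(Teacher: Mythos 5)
Your proposal is correct and follows essentially the same route as the paper: identify $\varphi_{R_A,\xi}$ with $\rho_{A,M,\pi}$ via Proposition~\ref{from set-valued to multi}, invoke the dual representation~\eqref{representation formula eqn 2 multiple}, and then translate $\psi\in\mathscr{E}_\pi(M)$ into the pair $(\probq,w)$ through the density $\zeta_i/w_i$ with the convention $\probq_i:=\probp$ when $w_i=0$ — exactly the identification the paper uses. (Only a cosmetic slip: a single $\probq_i$ is a probability measure, not an element of $\mathcal{P}^q_d$.)
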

\begin{proof}
By~\eqref{representation formula eqn 2 multiple} and~\eqref{equation: from set valued to multi} we immediately obtain
\begin{equation}
\label{dual repr set valued, auxiliary}
\varphi_{R_A,\xi}(X)=\sup_{\psi\in\mathscr{E}_\pi(M)}\{\sigma_A(\psi)-\psi(X)\}\,.
\end{equation}
Note that every functional $\psi$ in $\mathscr{E}_\pi(M)$ can be identified with a positive element $W\in L^q_d$ such that $\psi(X')=\sum^{d}_{i=1}\E[W_iX'_i]$ for every $X'\in L^p_d$ and $\sum^{d}_{i=1}u_i\E[W_i]=\sum^{d}_{i=1}\xi_iu_i$ for all $u\in M$. In turn, every $W$ of this form can be identified with $(\probq,w)\in\mathcal{D}^q$ by setting $w_i:=\E[W_i]$ and $\frac{d\probq_i}{d\probp}:=\frac{1}{w_i}W_i$, or $\probq_i:=\probp$ if $w_i=0$, for any $i=1,\dots,d$. This concludes the proof.
\end{proof}

\medskip

The random character of the market at time $t=T$ is reflected by the fact that the bid-ask matrix $(\pi_{ij}^T)$ at time $T$ is random. When defining the corresponding ``random'' solvency cone $K_T$ at time $T$ we may proceed as in~\eqref{solvency cone eq} but requiring that the ``transition'' matrix $(\alpha_{ij})$ is also random. The convex cone $K_T^+$ is defined analogously. Using the notation of~\cite{HamelHeydeRudloff2011}, we denote by $L^p_d(K_T)$, respectively $L^p_d(K_T^+)$, the convex cone of all $X\in L^p_d$ such that $X\in K_T$, respectively $X\in K_T^+$, almost surely.

\medskip

Consider an acceptance set $A\subset L^p_d$, and let $X\in L^p_d$ be a random portfolio of assets. From a capital adequacy perspective, if we want to account for the possibility of trading at time $t=T$ it is natural to consider the set
\begin{equation}
\label{KT compatibility}
R_{A+L^p_d(K_T)}(X)=\{\,u\in M: \, X+u\in A+L^p_d(K_T)\,\}\,.
\end{equation}
Indeed, the condition $X+u\in A+L^p_d(K_T)$ means that $X+u$ will be exchangeable at time $t=T$ into an acceptable portfolio, after paying the transaction costs defined by $K_T$. For this reason, the authors in~\cite{HamelHeydeRudloff2011} have considered acceptance sets $A$ which are \textit{$K_T$-compatible}, i.e. such that $A=A+L^p_d(K_T)$. We conclude this section providing a dual representation for scalarized set-valued risk measures satisfying this compatibility condition.

\medskip

\begin{corollary}
\label{dual repr scal compatible set valued}
Assume $A\subset L^p_d$ is convex and $K_T$-compatible. For $\xi\in K_0^+$ define $\pi(u):=\sum^{d}_{i=1}\xi_i u_i$ on $M$. If $B(A)\cap\mathscr{E}_\pi(M)$ is nonempty and $\varphi_{R_A,\pi}$ is lower semicontinuous at $X$, then
\begin{equation}
\varphi_{R_A,\xi}(X)=\sup_{(\probq,w)\in\mathcal{D}^q(K_T)}\Big\{\sum^{d}_{i=1}w_i\E_{\probq_i}[-X_i]+\inf_{Y\in A}\sum^{d}_{i=1}w_i\E_{\probq_i}[Y_i]\Big\}
\end{equation}
where
\begin{equation}
\mathcal{D}^q(K_T):=\Big\{\,(\probq,w)\in\mathcal{D}^q: \, \left(w_1\frac{d\probq_1}{d\probp},\dots,w_d\frac{d\probq_d}{d\probp}\right)^t\in L^p_d(K^+_T)\,\Big\}\,.
\end{equation}
\end{corollary}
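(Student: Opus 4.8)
The plan is to derive this corollary directly from Proposition~\ref{dual repr for scal set valued} by showing that, under the additional $K_T$-compatibility assumption, the optimization domain $\mathcal{D}^q$ collapses to the smaller set $\mathcal{D}^q(K_T)$. Since all the hard analytic work---the lower semicontinuity, the nondegeneracy condition $B(A)\cap\mathscr{E}_\pi(M)\neq\emptyset$, and the passage from functionals $\psi\in\mathscr{E}_\pi(M)$ to pairs $(\probq,w)$---has already been carried out, the only genuinely new content is a dual characterization of the cone constraint $A=A+L^p_d(K_T)$. First I would simply invoke Proposition~\ref{dual repr for scal set valued}, which already gives the representation~\eqref{dual repr set valued} with supremum taken over $\mathcal{D}^q$; it remains to argue that the compatibility of $A$ forces every $(\probq,w)\in\mathcal{D}^q$ that contributes a finite value to the objective to lie in $\mathcal{D}^q(K_T)$, and conversely.

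The key step is the following equivalence. Given $(\probq,w)\in\mathcal{D}^q$, let $\psi$ be the associated positive functional on $L^p_d$, identified with $W=(w_1\frac{d\probq_1}{d\probp},\dots,w_d\frac{d\probq_d}{d\probp})^t\in L^q_d$ via $\psi(X')=\sum_{i=1}^d\E[W_iX'_i]$. I claim that $\psi\in B(A)$ (equivalently, $\sigma_A(\psi)>-\infty$) if and only if $W\in L^p_d(K_T^+)$ almost surely. The direction that matters is: if $A=A+L^p_d(K_T)$, then $\sigma_A(\psi)>-\infty$ forces $\psi$ to be nonnegative on $L^p_d(K_T)$, i.e. $\sum_{i=1}^d\E[W_iV_i]\geq0$ for all $V\in L^p_d(K_T)$. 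Because $K_T$ is a cone and $L^p_d(K_T)$ consists of portfolios lying in $K_T$ almost surely, a localization argument (testing against $V=\mathbf{1}_B\,v$ for measurable $B$ and $v\in K_T$ on $B$) converts this global positivity into the pointwise condition $W(\omega)\in K_T^+(\omega)$ for almost every $\omega$, which is exactly $W\in L^p_d(K_T^+)$. Conversely, if $W\in L^p_d(K_T^+)$ then $\psi$ is nonnegative on $L^p_d(K_T)$, so that $\sigma_{A+L^p_d(K_T)}(\psi)=\sigma_A(\psi)$ and the compatibility $A=A+L^p_d(K_T)$ is consistent with finiteness of the support function.

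With this equivalence in hand, I would observe that the supremum in~\eqref{dual repr set valued} is effectively taken only over those $(\probq,w)$ for which $\sigma_A(\psi)>-\infty$, since pairs with $\sigma_A(\psi)=-\infty$ contribute $-\infty$ to the objective and can be discarded. By the equivalence just established, this restricted domain is precisely $\mathcal{D}^q(K_T)=\{(\probq,w)\in\mathcal{D}^q:W\in L^p_d(K_T^+)\}$. Substituting this smaller domain into~\eqref{dual repr set valued} yields the stated formula, with the inner infimum $\inf_{Y\in A}\sum_{i=1}^d w_i\E_{\probq_i}[Y_i]$ unchanged because the objective function itself is untouched---only the feasible set shrinks.

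The main obstacle is the localization argument in the key step: passing from the global dual inequality $\sum_{i=1}^d\E[W_iV_i]\geq0$ for all $V\in L^p_d(K_T)$ to the almost-sure pointwise membership $W(\omega)\in K_T^+(\omega)$. This is a standard but delicate measurable-selection-type argument for random cones, requiring that $K_T$ be a (measurable) closed convex cone so that $L^p_d(K_T)$ is rich enough to separate points of $K_T^+(\omega)$ fiberwise; one typically uses that $K_T^+$ is the fiberwise dual of $K_T$ together with a suitable countable generating family or a measurable selection theorem to rule out a positive-measure set on which $W(\omega)\notin K_T^+(\omega)$. Everything else is a direct specialization of the already-proved Proposition~\ref{dual repr for scal set valued}.
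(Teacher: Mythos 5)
Your proof is correct and takes essentially the same route as the paper: $K_T$-compatibility gives $\sigma_A=\sigma_A+\sigma_{L^p_d(K_T)}$, the support function of the cone $L^p_d(K_T)$ is either $0$ or $-\infty$, so every pair with $W\notin L^p_d(K^+_T)$ contributes $-\infty$ to the objective in Proposition~\ref{dual repr for scal set valued} and can be discarded, leaving exactly the domain $\mathcal{D}^q(K_T)$. The only quibble is that your claimed equivalence ``$\sigma_A(\psi)>-\infty$ if and only if $W\in L^p_d(K^+_T)$'' is really only an implication (the converse can fail, since $\sigma_A(\psi)$ may be $-\infty$ for other reasons), but the direction you actually use is the valid one and the supremum is unaffected, so the argument goes through.
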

\begin{proof}
Since $A$ is $K_T$-compatible, we have $\sigma_A=\sigma_A+\sigma_{L^p_d(K_T)}$. Hence, we can restrict the optimization domain in~\eqref{dual repr set valued, auxiliary} to all functionals $\psi\in B(L^p_d(K_T))\cap\mathscr{E}_\pi(M)$. But $L^p_d(K_T)$ is a cone and therefore $\psi\in B(L^p_d(K_T))$ if and only if $\sigma_{L^p_d(K_T)}(\psi)=0$, or equivalently $\psi(Z)\geq0$ for every $Z\in L^p_d(K_T)$. Using the dual variables in~\eqref{dual repr set valued}, we see that this is equivalent to replacing $\mathcal{D}^q$ with $\mathcal{D}^q(K_T)$.
\end{proof}


\subsection{Shortfall risk measures and superhedging price}
\label{shortfall section}

In this section we focus on the superhedging problem studied by Arai in~\cite{Arai2011}. Based on our approach to dual representations, we provide a sharper dual representation of the superhedging price defined in that paper. For the background on Orlicz hearts and Orlicz spaces we refer to~\cite{EdgarSucheston1992}.

\medskip

Fix a filtered probability space $(\Omega,\cF,(\cF_t),\probp)$, for $0\leq t\leq T$, and let $S=(S_t)$ be a $d$-dimensional semimartingale representing the price dynamics of $d$ given assets. We assume $(\cF_t)$ satisfies the usual conditions and $\cF_T=\cF$. Denote by $\Theta$ a convex set of $d$-dimensional, predictable, $S$-integrable processes $\vartheta=(\vartheta_t)$. The elements of $\Theta$ will be called \textit{admissible strategies}. In addition, consider a \textit{loss} function $\ell:\R\to\R$, which is assumed to be nonconstant, increasing, convex, and such that $\ell(0)=0$. The reference space is taken to be the Orlicz heart $H^{\Phi}$ associated to the Orlicz function $\Phi$ defined by $\Phi(x):=\ell(\left|x\right|)$ for $x\in\R$. From now on, we assume that $\int^{T}_{0}\vartheta_t dS_t\in H^\Phi$ for every admissible strategy $\vartheta\in\Theta$.

\medskip

The main goal in Arai~\cite{Arai2011} is to provide dual representations, under specific assumptions on the class $\Theta$, for the map $\rho_\ell:H^{\Phi}\to\overline{\R}$ defined by
\begin{equation}
\rho_\ell(X):=\inf\bigg\{\,m\in\R: \, \exists \ \vartheta\in\Theta, \;\, \E\Big[\ell\big(-X-m-\int^{T}_{0}\vartheta_t dS_t\big)\Big]\leq\alpha\,\bigg\}
\end{equation}
where $\alpha>0$ is a pre-specified loss level. Note that the map $\rho_\ell$ is a standard cash-additive risk measure. Indeed, if we introduce the convex acceptance set
\begin{equation}
\cA_\ell:=\{\,X\in H^{\Phi}: \, \E[\ell(-X)]\leq\alpha\,\}
\end{equation}
and the convex set
\begin{equation}
\label{set terminal integrals}
\cC:=\Big\{\,\int^{T}_{0}\vartheta_t dS_t: \, \vartheta\in\Theta\,\Big\}\,,
\end{equation}
it is easy to see that $\rho_\ell=\rho_{\cA_\ell-\cC,U,\pi}$ with $U:=1_\Omega$ and $\pi(U):=1$.

\medskip

The financial motivation for studying the map $\rho_\ell$ is given by the fact that the quantities $-\rho_\ell(X)$ and $\rho_\ell(-X)$ can be interpreted as pricing bounds for the claim $X\in H^{\Phi}$ which are compatible with the absence of ``good deals''. We refer to~\cite{Arai2011} for a detailed explanation.

\medskip

The main dual representation provided under the standing assumption $\rho_\ell(0)>-\infty$ is Proposition 3.5 in~\cite{Arai2011}. This representation is then specified to various situations including the case where $\Theta$ is a linear space (Section 4 in~\cite{Arai2011}) and $\Theta$ is the convex cone of $W$-admissible strategies (Section 5 in~\cite{Arai2011}). The key ingredient for the dual representation in the $W$-admissible case is Lemma 5.1 in that paper.

\medskip

Our objective is to show that this key lemma holds for \textit{any} choice of the class of admissible strategies $\Theta$. As a result, we provide a general dual representation for $\rho_\ell$ sharpening Proposition 3.5 in~\cite{Arai2011}.

\medskip

Given a map $f:\R\to\R$, we denote by $f^\ast:\R\to\R\cup\{\infty\}$ the Fenchel conjugate of $f$ defined by
\begin{equation}
f^\ast(y):=\sup_{x\in\R}\{xy-f(x)\}\,.
\end{equation}
Recall that the dual space $(H^\Phi)'$ can be identified with the Orlicz space $L^{\Phi^\ast}$. Moreover, we denote by $\cP^\Phi$ the set of all probability measures $\probq$ on $(\Omega,\cF)$ which are absolutely continuous with respect to $\probp$ and such that $\frac{d\probq}{d\probp}\in L^{\Phi^\ast}$.

\medskip

\begin{proposition}
\label{general dual repr arai}
If $\rho_\ell(0)>-\infty$, then $\rho_\ell$ is finitely valued and continuous on $H^\Phi$. Moreover, for every $X\in H^\Phi$ we have
\begin{equation}
\label{dual repr loss risk measure}
\rho_\ell(X)=\max_{\probq\in\cP^\Phi}\bigg\{\E_\probq[-X]-\sup_{\vartheta\in\Theta}\E_\probq\Big[\int^{T}_{0}\vartheta_t dS_t\Big]-\inf_{\lambda>0}\frac{1}{\lambda}\Big\{\alpha+\E\Big[\ell^\ast\Big(\lambda\frac{d\probq}{d\probp}\Big)\Big]\Big\}\bigg\}\,.
\end{equation}
\end{proposition}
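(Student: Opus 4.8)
The plan is to exploit the identification $\rho_\ell=\rho_{\cA_\ell-\cC,U,\pi}$ with $U:=1_\Omega$ and $\pi(U):=1$ recorded above, and then run the single-asset machinery of Sections~\ref{finiteness continuity section} and~\ref{dual repr section}. First I would check that $\cA_\ell$ is convex (since $\ell$ is convex) and monotone (since $\ell$ is increasing, $Y\ge X$ forces $\E[\ell(-Y)]\le\E[\ell(-X)]$), and that $0\in\Interior(\cA_\ell)$: because the modular $X\mapsto\E[\Phi(X)]$ is norm-continuous on the Orlicz heart $H^\Phi$ and $\ell(-X)\le\Phi(X)$ pointwise, the open set $\{\,X:\E[\Phi(X)]<\alpha\,\}$ is a neighbourhood of $0$ contained in $\cA_\ell$. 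Consequently $\cA:=\cA_\ell-\cC$ is convex, monotone and has nonempty interior, while properness of $\cA$ follows from the hypothesis $\rho_\ell(0)>-\infty$ (if $\cA=H^\Phi$ then $mU\in\cA$ for all $m$, forcing $\rho_\ell(0)=-\infty$). Thus $(\cA,\R U,\pi)$ is a convex risk measurement regime whose eligible space contains the strictly positive element $U$, and Proposition~\ref{convex strictly positive} yields at once that $\rho_\ell$ is finite-valued and continuous.

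For the dual representation I would invoke Corollary~\ref{corollary on pointwise representation}, which—since $\rho_\ell(0)<\infty$ gives $\cA\cap\R U\neq\emptyset$ and $\rho_\ell(0)>-\infty$ is assumed—guarantees $\extens\cap B(\cA)\neq\emptyset$, and then the single-asset formula~\eqref{representation formula eqn 2 multiple, single} of Theorem~\ref{pointwise repr formula for multiple assets}. Identifying $(H^\Phi)'$ with $L^{\Phi^\ast}$, a positive functional $\psi$ with $\psi(U)=1$ corresponds exactly to a density $d\probq/d\probp\in L^{\Phi^\ast}$ with $\probq\in\cP^\Phi$, and $\psi(\cdot)=\E_\probq[\cdot]$. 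Hence
\begin{equation*}
\rho_\ell(X)=\sup_{\probq\in\cP^\Phi}\bigl\{\sigma_{\cA_\ell-\cC}(\probq)-\E_\probq[X]\bigr\}\,.
\end{equation*}

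The core of the argument is the evaluation of $\sigma_{\cA_\ell-\cC}$. Additivity of support functions over Minkowski sums gives $\sigma_{\cA_\ell-\cC}(\probq)=\sigma_{\cA_\ell}(\probq)-\sup_{\vartheta\in\Theta}\E_\probq[\int_0^T\vartheta_t\,dS_t]$, which already isolates the $\Theta$-dependence in a single term and makes the extension to arbitrary convex $\Theta$ transparent. It then remains to show $\sigma_{\cA_\ell}(\probq)=-\inf_{\lambda>0}\tfrac1\lambda\{\alpha+\E[\ell^\ast(\lambda\,d\probq/d\probp)]\}$. Writing $g:=d\probq/d\probp$ and substituting $Y=-A$, this is the constrained problem $\sup\{\,\E_\probq[Y]:\E[\ell(Y)]\le\alpha\,\}$. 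Weak duality is immediate from Young's inequality $Yg\le\tfrac1\lambda(\ell(Y)+\ell^\ast(\lambda g))$ integrated against $\probp$; the reverse inequality is the genuine obstacle and is where I expect the real work to lie. One must establish strong Lagrangian duality for this Orlicz-space problem (Slater's condition holds, since $Y=0$ gives $\E[\ell(0)]=0<\alpha$) and justify the pointwise passage $\sup_Y\E[gY-\lambda\ell(Y)]=\E[\lambda\ell^\ast(g/\lambda)]$ via an interchange-of-supremum-and-integral theorem for normal integrands, after which the substitution $\lambda\mapsto1/\lambda$ recovers the stated form. This computation, being entirely independent of $\Theta$, is precisely the content that sharpens Lemma~5.1 of~\cite{Arai2011}.

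Finally, to upgrade the supremum to a maximum I would appeal to Remark~\ref{remark: attainability}: since $\rho_\ell$ is finite and continuous at every $X$, the supremum over $\extens$—equivalently over $\probq\in\cP^\Phi$—is attained, yielding the $\max$ in~\eqref{dual repr loss risk measure}, while the inner infimum over $\lambda$ remains an infimum.
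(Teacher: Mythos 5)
Your proposal is correct and follows essentially the same route as the paper: identify $\rho_\ell$ with the single-asset risk measure for the augmented set $\cA_\ell-\cC$, get finiteness and continuity from Proposition~\ref{convex strictly positive} via the strictly positive element $1_\Omega$, split the support function over the Minkowski sum, and obtain attainment from Remark~\ref{remark: attainability}. The only difference is cosmetic: where you sketch a Lagrangian-duality derivation of $\sigma_{\cA_\ell}(\probq)=-\inf_{\lambda>0}\tfrac1\lambda\{\alpha+\E[\ell^\ast(\lambda\,d\probq/d\probp)]\}$, the paper simply cites Theorem~10 of~\cite{FoellmerSchied2002} (extended to $H^\Phi$), so the "real work" you flag is outsourced rather than redone.
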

\begin{proof}
Since $\alpha>0$, it follows from Lemma 4.5 in~\cite{FarkasKochMunari2012a} that $\cA_\ell$ has nonempty interior, hence the convex acceptance set $\cA_\ell-\cC$ also has nonempty interior. Moreover, note that $1_\Omega$ is a strictly positive element in $H^\Phi$. Since $\rho_\ell(0)>-\infty$, we can apply Proposition~\ref{convex strictly positive} to ensure that $\rho_\ell$ is finitely valued and continuous. Furthermore, by the dual representation in Theorem~\ref{pointwise repr formula for multiple assets} we have for all $X\in H^\Phi$
\begin{eqnarray*}
\rho_\ell(X)
&=&
\sup_{\psi\in(H^\Phi)'_+, \psi(1_\Omega)=1}\{\sigma_{\cA_\ell}(\psi)+\sigma_{-\cC}(\psi)-\psi(X)\} \\
&=&
\sup_{\probq\in\cP^\Phi}\bigg\{-\inf_{\lambda>0}\frac{1}{\lambda}\Big\{\alpha+\E\Big[\ell^\ast\Big(\lambda\frac{d\probq}{d\probp}\Big)\Big]\Big\}+\inf_{\vartheta\in\Theta}\E_\probq\Big[-\int^{T}_{0}\vartheta_t dS_t\Big]-\E_\probq[X]\bigg\}\,.
\end{eqnarray*}
Indeed, every functional $\psi\in(H^\Phi)'_+$ with $\psi(1_\Omega)=1$ can be identified with some $\probq\in\cP^\Phi$ such that $\psi(X')=\E_\probq[X']$ for all $X'\in H^\Phi$. The equivalent formulation for the term $\sigma_{\cA_\ell}$ follows from Theorem~10 in~\cite{FoellmerSchied2002}, whose proof extends easily to the space $H^\Phi$. The attainability of the supremum in the above representation is a consequence of the finiteness and continuity of $\rho_\ell$, see Remark~\ref{remark: attainability}.
\end{proof}

\medskip

\begin{remark}
\label{advantage geometrical approach with arai}
The above result shows that, even in the familiar context of cash-additive risk measures, an approach based on support functions provides an enhanced insight into dual representations. Indeed, the dual representation for $\rho_\ell$ established in Proposition~3.5 in~\cite{Arai2011} requires introducing the two auxiliary sets $\widetilde\cA$ and $\cA^1$ which make the structure of the corresponding dual representation unnecessarily involved compared to~\eqref{dual repr loss risk measure}. Our treatment seems to be more efficient mainly because we focus from the start on the structure of $\cA_\ell-\cC$ rather than on conjugate functions.
\end{remark}


\subsection{Optimal risk sharing with multi-asset risk measures}
\label{inf conv section}

In this brief section we show how multi-asset risk measures arise naturally in the context of optimal risk sharing amongst several business lines. We focus on two business lines for ease of notation, the extension to a general number of business lines being straightforward.

\medskip

Throughout this section $\cX$ is a locally convex ordered topological vector space, $\cM$ the marketed space and $\pi:\cM\to\R$ the pricing functional. As above, we assume that the market is free of arbitrage by requiring $\pi$ to be a strictly positive functional. Assume two business lines have different types of capital requirements represented respectively by $\rho_{\cA,U,\pi}$ and $\rho_{\cB,V,\pi}$ where $\cA$ and $\cB$ are acceptance sets in $\cX$ and $U$ and $V$ are nonzero positive payoffs in $\cM$. We assume a ``risk'' $X\in\cX$ can be shared amongst the two business lines, i.e. for any $Y\in\cX$ we can assign $Y$ to the first and $X-Y$ to the second business line. The total required capital is then $\rho_{\cA,U,\pi}(Y)+\rho_{\cB,V,\pi}(X-Y)$. Optimal risk sharing is about finding the optimal $Y$. Hence, one naturally arrives at the \textit{infimal convolution} between $\rho_{\cA,U,\pi}$ and $\rho_{\cB,V,\pi}$ at $X$ given by
\begin{equation}
\rho_{\cA,U,\pi}\square\,\rho_{\cB,V,\pi}(X):=\inf\{\,\rho_{\cA,U,\pi}(Y)+\rho_{\cB,V,\pi}(X-Y): \, Y\in\cX\,\}\,.
\end{equation}
This quantity represents the ``minimum'' total required  capital across all possible allocations $(Y,X-Y)$ of the aggregated position $X$. For more details on the applications of infimal convolutions in the theory of risk measures we refer to Barrieu and El Karoui~\cite{BarrieuElKaroui2005}.

\medskip

First, we show that the infimal convolution of single-asset risk measures can be expressed as a multi-asset risk measure. Recall that a map $\rho:\cX\to\overline{\R}$ is said to be proper if it cannot assume the value $-\infty$ and its effective domain is nonempty.

\medskip

\begin{proposition}
\label{infimal theorem} Assume $\rho_{\cA,U,\pi}$ and $\rho_{\cB,V,\pi}$ are proper, and let $\elig\subset\cM$ be spanned by $U$ and $V$. Then for all $X\in\cX$
\begin{equation}
\label{aux inf}
\rho_{\cA,U,\pi}\square\,\rho_{\cB,V,\pi}(X)=\rho_{\cA+\cB,\elig,\pi}(X)\,.
\end{equation}
\end{proposition}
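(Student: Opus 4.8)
The plan is to prove the identity purely by unwinding the definitions and reorganizing a triple infimum. First I would rewrite each single-asset risk measure as an infimum over scalar multiples of its eligible payoff: for any $W\in\cX$,
\[
\rho_{\cA,U,\pi}(W)=\inf\{\,\lambda\pi(U): \, \lambda\in\R, \;\, W+\lambda U\in\cA\,\}\,,
\]
and symmetrically $\rho_{\cB,V,\pi}(W)=\inf\{\,\mu\pi(V): \, \mu\in\R, \;\, W+\mu V\in\cB\,\}$. Substituting these into the definition of $\rho_{\cA,U,\pi}\square\,\rho_{\cB,V,\pi}(X)$ yields a nested infimum over $Y\in\cX$, $\lambda\in\R$ and $\mu\in\R$ of the objective $\lambda\pi(U)+\mu\pi(V)$, subject to the two constraints $Y+\lambda U\in\cA$ and $X-Y+\mu V\in\cB$.

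Next I would collapse this into a single joint infimum over the triple $(Y,\lambda,\mu)$ and then eliminate $Y$. The elimination uses the substitution $A:=Y+\lambda U$ and $B:=X-Y+\mu V$, which sets up a correspondence between triples $(Y,\lambda,\mu)$ feasible for the two constraints and pairs $(\lambda,\mu)$ with $X+\lambda U+\mu V=A+B\in\cA+\cB$; conversely, given such a decomposition one recovers $Y=A-\lambda U$. Hence the joint infimum reduces to $\inf\{\,\lambda\pi(U)+\mu\pi(V): \, \lambda,\mu\in\R, \;\, X+\lambda U+\mu V\in\cA+\cB\,\}$. Since $\elig=\Span\{U,V\}$ and $\pi$ is linear, every $Z\in\elig$ is of the form $\lambda U+\mu V$ with $\pi(Z)=\lambda\pi(U)+\mu\pi(V)$, so this last quantity is precisely $\rho_{\cA+\cB,\elig,\pi}(X)$, giving the claim.

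The argument is essentially bookkeeping, and the only genuinely delicate point is the merging of the infima in the second step. Here properness is essential: because neither $\rho_{\cA,U,\pi}$ nor $\rho_{\cB,V,\pi}$ ever equals $-\infty$, the sum $\rho_{\cA,U,\pi}(Y)+\rho_{\cB,V,\pi}(X-Y)$ is well-defined in $(-\infty,+\infty]$ for every $Y$, so no indeterminate form $\infty-\infty$ can occur and the identities $\inf_\lambda(\cdot)+\inf_\mu(\cdot)=\inf_{\lambda,\mu}(\cdot)$ and $\inf_Y\inf_{\lambda,\mu}(\cdot)=\inf_{Y,\lambda,\mu}(\cdot)$ are both legitimate. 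Without properness the infimal convolution itself need not be well-defined, which is exactly why the hypothesis is imposed.
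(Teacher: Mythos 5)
Your proof is correct and follows essentially the same route as the paper: the correspondence $(Y,\lambda,\mu)\leftrightarrow(A,B,\lambda,\mu)$ with $A=Y+\lambda U$, $B=X-Y+\mu V$ is exactly the substitution the paper uses in its two-inequality argument, and your observation that properness is what makes the sum and the merging of infima well-defined is the right justification for that hypothesis. The only difference is presentational — you phrase it as one chain of infimum identities rather than two separate inequalities.
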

\begin{proof}
Take $X\in\cX$. To show that $\rho_{\cA,U,\pi}\square\,\rho_{\cB,V,\pi}(X)\leq\rho_{\cA+\cB,\elig,\pi}(X)$, assume $X+Z\in\cA+\cB$ for some $Z\in\elig$. Let $A\in\cA$ and $B\in\cB$ such that $X+Z=A+B$. Since $Z=\alpha U+\beta V$ for some $\alpha,\beta\in\R$, we obtain
\begin{equation}
\rho_{\cA,U,\pi}\square\,\rho_{\cB,V,\pi}(X)\leq\rho_{\cA,U,\pi}(A-\alpha U)+\rho_{\cB,V,\pi}(B-\beta V)\leq\pi(\alpha U)+\pi(\beta V)=\pi(Z)\;.
\end{equation}
The inequality follows by taking the infimum over all $Z\in\elig$ with $X+Z\in\cA+\cB$. To show the converse inequality, assume $Y+\alpha U\in\cA$ and $X-Y+\beta V\in\cB$ for some $Y\in\cX$ and $\alpha,\beta\in\R$. Then, clearly, $X+\alpha U+\beta V\in\cA+\cB$ and, therefore, $\rho_{\cA+\cB,\elig,\pi}(X)\leq\pi(\alpha U)+\pi(\beta V)$. Taking the infimum over all $\alpha$ and $\beta$ such $Y+\alpha U\in\cA$ and $X-Y+\beta V\in\cB$ yields $\rho_{\cA+\cB,\elig,\pi}(X)\le\rho_{\cA,U,\pi}(Y)+\rho_{\cB,V,\pi}(X-Y)$. We obtain the desired inequality after taking the infimum over all $Y\in\cX$.
\end{proof}

\medskip

The following corollary follows immediately from the preceding result and shows that every multi-asset risk measure with respect to a finite dimensional eligible space is in fact an infimal convolution of single-asset risk measures.

\medskip

\begin{corollary}
\label{infimal corollary}
Assume $\rho_{\cA,U,\pi}$ and $\rho_{\cA,V,\pi}$ are proper, and let $\elig$ be the span of $U$ and $V$. Then for every $X\in\cX$
\begin{equation}
\rho_{\cA,\elig,\pi}(X)=\rho_{\cA,U,\pi}\square\,\rho_{\cX_+,V,\pi}(X)\,.
\end{equation}
If $\cA$ is coherent, then for all $X\in\cX$
\begin{equation}
\rho_{\cA,\elig,\pi}(X)=\rho_{\cA,U,\pi}\square\,\rho_{\cA,V,\pi}(X)\,.
\end{equation}
\end{corollary}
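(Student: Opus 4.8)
The plan is to derive both claims as direct consequences of Proposition~\ref{infimal theorem}, so the main work is reducing each identity to an application of that result. For the first identity, I would observe that $\elig=\Span\{U,V\}$ and that the multi-asset risk measure $\rho_{\cA,\elig,\pi}$ can be rewritten by splitting any eligible payoff $Z\in\elig$ as $Z=\alpha U+\beta V$. The key idea is that adding $V$ against the trivial acceptance set $\cX_+$ contributes nothing to acceptability—since $\cX_+$ contains only the ``costless'' monotonicity—but does contribute to the cost through $\pi$. Concretely, I would first check that $\cA+\cX_+=\cA$ holds because $\cA$ is an acceptance set (monotonicity axiom~\eqref{monotonicity axiom} in Definition~\ref{acceptance set definition}). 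Applying Proposition~\ref{infimal theorem} with $\cB:=\cX_+$ then yields
\begin{equation}
\rho_{\cA,U,\pi}\square\,\rho_{\cX_+,V,\pi}(X)=\rho_{\cA+\cX_+,\elig,\pi}(X)=\rho_{\cA,\elig,\pi}(X)\,,
\end{equation}
which is exactly the first claim. The only prerequisite is that both $\rho_{\cA,U,\pi}$ and $\rho_{\cX_+,V,\pi}$ be proper: properness of the former is assumed, and properness of the latter follows since $\cX_+$ is a coherent acceptance set with $\rho_{\cX_+,V,\pi}(0)=0$ (because $V$ is a nonzero positive payoff, so $0\in\cX_+$ and $\pi(V)>0$ makes the cost bounded below by monotonicity and no-arbitrage).

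For the second identity, assuming $\cA$ is coherent, the essential observation is that a coherent acceptance set is a convex cone, hence satisfies $\cA+\cA=\cA$. I would verify this directly: the inclusion $\cA\subset\cA+\cA$ holds because $0\in\cA$ (a nonempty convex cone contains the origin) so $A=A+0$, while $\cA+\cA\subset\cA$ follows from convexity combined with positive homogeneity, since $A_1+A_2=2\cdot\tfrac{1}{2}(A_1+A_2)\in 2\cA=\cA$. With $\cA+\cA=\cA$ established, Proposition~\ref{infimal theorem} applied with $\cB:=\cA$ gives
\begin{equation}
\rho_{\cA,U,\pi}\square\,\rho_{\cA,V,\pi}(X)=\rho_{\cA+\cA,\elig,\pi}(X)=\rho_{\cA,\elig,\pi}(X)\,,
\end{equation}
completing the proof. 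Properness of both constituent risk measures is again assumed in the hypothesis.

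I do not expect any genuine obstacle here, since the corollary is designed to follow mechanically from the preceding proposition. The only points requiring care are the two set-theoretic identities $\cA+\cX_+=\cA$ and $\cA+\cA=\cA$, which encode, respectively, the monotonicity of an acceptance set and the cone structure of a coherent acceptance set; once these are in hand, each claim is a one-line substitution into Proposition~\ref{infimal theorem}. The mildest subtlety is confirming that the auxiliary risk measures built from $\cX_+$ and from the coherent $\cA$ are proper so that Proposition~\ref{infimal theorem} applies, but this is immediate from no-arbitrage ($\pi$ strictly positive) together with the fact that these acceptance sets contain $0$.
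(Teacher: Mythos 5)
Your proof is correct and follows exactly the route the paper intends: the corollary is stated there without proof precisely because it is meant to be read off from Proposition~\ref{infimal theorem} via the two set identities $\cA+\cX_+=\cA$ (monotonicity plus $0\in\cX_+$) and, in the coherent case, $\cA+\cA=\cA$ (convex cone), both of which you verify correctly.

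The one step I would tighten is your justification that $\rho_{\cX_+,V,\pi}$ is proper. Knowing $\rho_{\cX_+,V,\pi}(0)=0$ and that $\pi$ is strictly positive does not by itself rule out the value $-\infty$ at other points: a convex function can be finite at one point and $-\infty$ elsewhere, and one can build a pointed convex cone $\cX_+$ (necessarily non-closed) containing a full line $X+\R V$ with $V\in\cX_+\setminus\{0\}$, in which case $\rho_{\cX_+,V,\pi}(X)=-\infty$ while $\rho_{\cX_+,V,\pi}(0)=0$. The hypothesis of the corollary does rescue you, but by a different mechanism: if $X+\lambda V\in\cX_+$ for all $\lambda\in\R$, then for any $A\in\cA$ one has $A+X+\lambda V\in\cA+\cX_+\subset\cA$ for all $\lambda$, so $\rho_{\cA,V,\pi}(A+X)=-\infty$, contradicting the assumed properness of $\rho_{\cA,V,\pi}$. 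With that one-line substitution your argument is complete.
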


\medskip

As a final result, we provide a dual representation for infimal convolutions of convex single-asset risk measures.

\medskip

\begin{proposition}
\label{dual repr inf conv of risk measures}
Assume $\cA$ and $\cB$ are convex with $\Interior(\cA)$ nonempty. Let $\rho_{\cA,U,\pi}$ and $\rho_{\cB,V,\pi}$ be proper. Moreover, assume the span $\elig$ of $U$ and $V$ contains a strictly positive element. If $\rho_{\cA+\cB,\elig,\pi}(0)>-\infty$, then for all $X\in\cX$
\begin{equation}
\label{dual repr inf conv}
\rho_{\cA,U,\pi}\square\,\rho_{\cB,V,\pi}(X)=\max_{\psi\in\extens}\{\sigma_{\cA}(\psi)+\sigma_{\cB}(\psi)-\psi(X)\}\,.
\end{equation}
In particular, if $\cA$ and $\cB$ are coherent then for all $X\in\cX$
\begin{equation}
\label{dual repr inf conv, coherent}
\rho_{\cA,U,\pi}\square\,\rho_{\cB,V,\pi}(X)=\max_{\psi\in\mathscr{D}}\psi(-X)
\end{equation}
where
\begin{equation}
\mathscr{D}:=\{\,\psi\in\extens: \, \psi(Y)\geq0, \;\, \forall \ Y\in\cA\cup\cB\,\}\,.
\end{equation}
\end{proposition}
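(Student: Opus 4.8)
The plan is to reduce the infimal convolution to a single multi-asset risk measure and then invoke the dual representation of Theorem~\ref{pointwise repr formula for multiple assets}. First I would apply Proposition~\ref{infimal theorem}, whose hypotheses are met because $\rho_{\cA,U,\pi}$ and $\rho_{\cB,V,\pi}$ are proper and $\elig$ is the span of $U$ and $V$, to obtain $\rho_{\cA,U,\pi}\square\,\rho_{\cB,V,\pi}=\rho_{\cA+\cB,\elig,\pi}$. It then suffices to establish the dual representation for $\rho_{\cA+\cB,\elig,\pi}$.

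Next I would verify that $(\cA+\cB,\elig,\pi)$ is a convex risk measurement regime carrying the features needed for the finiteness/continuity and duality machinery. The set $\cA+\cB$ is convex and monotone, and its interior is nonempty since $\Interior(\cA)+B_0$ is open and contained in $\cA+\cB$ for any fixed $B_0\in\cB$. Properness, i.e. $\cA+\cB\neq\cX$, follows from the standing hypothesis $\rho_{\cA+\cB,\elig,\pi}(0)>-\infty$: were $\cA+\cB=\cX$, then $\{\pi(Z):Z\in\elig\}=\pi(\elig)=\R$, because $\pi$ is nonzero on $\elig$ (which contains a strictly positive element priced strictly positively by no arbitrage), forcing $\rho_{\cA+\cB,\elig,\pi}(0)=-\infty$. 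Since $\elig$ contains a strictly positive element and $\cA+\cB$ has nonempty interior, Proposition~\ref{convex strictly positive} combined with $\rho_{\cA+\cB,\elig,\pi}(0)>-\infty$ yields that $\rho_{\cA+\cB,\elig,\pi}$ is finite-valued and continuous on $\cX$.

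With global continuity in hand, the nondegeneracy condition $\extens\cap B(\cA+\cB)\neq\emptyset$ follows from the contrapositive of Proposition~\ref{sound theory}: if the intersection were empty, $\rho_{\cA+\cB,\elig,\pi}$ could not be finite at any point of lower semicontinuity, contradicting its finiteness and continuity everywhere. I would then apply Theorem~\ref{pointwise repr formula for multiple assets} and use the additivity $\sigma_{\cA+\cB}(\psi)=\sigma_\cA(\psi)+\sigma_\cB(\psi)$, valid for every $\psi\in\cX'$ under the usual convention on $-\infty$, to reach $\rho_{\cA+\cB,\elig,\pi}(X)=\sup_{\psi\in\extens}\{\sigma_\cA(\psi)+\sigma_\cB(\psi)-\psi(X)\}$. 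The supremum is in fact attained, hence a maximum, by the attainability argument of Remark~\ref{remark: attainability}, since $\rho_{\cA+\cB,\elig,\pi}$ is finite and continuous at every $X$. For the coherent case I would specialize: when $\cA$ and $\cB$ are cones, $\sigma_\cA(\psi)+\sigma_\cB(\psi)$ equals $0$ if $\psi(Y)\geq0$ for all $Y\in\cA\cup\cB$ and $-\infty$ otherwise, so the objective collapses to $\psi(-X)$ over the set $\mathscr{D}$ and the representation~\eqref{dual repr inf conv, coherent} follows.

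The main obstacle I anticipate is not a single deep step but assembling the chain of auxiliary conditions correctly, in particular deducing properness of $\cA+\cB$ purely from $\rho_{\cA+\cB,\elig,\pi}(0)>-\infty$ and confirming that the hypotheses of Proposition~\ref{convex strictly positive}, Proposition~\ref{sound theory}, Theorem~\ref{pointwise repr formula for multiple assets} and Remark~\ref{remark: attainability} are genuinely available for the augmented regime. The support-function additivity and the coherent specialization are then routine once the representation over $\extens$ is secured.
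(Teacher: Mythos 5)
Your proposal is correct and follows essentially the same route as the paper's proof: reduce to $\rho_{\cA+\cB,\elig,\pi}$ via Proposition~\ref{infimal theorem}, check that $\cA+\cB$ is a convex acceptance set with nonempty interior and is proper because $\rho_{\cA+\cB,\elig,\pi}(0)>-\infty$, obtain continuity from Proposition~\ref{convex strictly positive}, and then apply Theorem~\ref{pointwise repr formula for multiple assets} (and its coherent specialization) together with the additivity of support functions and the attainability argument of Remark~\ref{remark: attainability}. You merely make explicit a few steps the paper leaves implicit, such as deducing $\extens\cap B(\cA+\cB)\neq\emptyset$ from Proposition~\ref{sound theory}.
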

\begin{proof}
By Proposition~\ref{infimal theorem} we know that $\rho_{\cA,U,\pi}\square\,\rho_{\cB,V,\pi}$ coincides with the multi-asset risk measure $\rho_{\cA+\cB,\elig,\pi}$. Note that $\cA+\cB$ is a convex acceptance set with nonempty interior. In particular, $\cA+\cB$ is a proper subset of $\cX$ since otherwise $\rho_{\cA+\cB,\elig,\pi}(0)=-\infty$. Then $\rho_{\cA,U,\pi}\square\,\rho_{\cB,V,\pi}$ is continuous by virtue of Proposition~\ref{convex strictly positive}. As a result, the representations~\eqref{dual repr inf conv} and~\eqref{dual repr inf conv, coherent} follow from Theorem~\ref{pointwise repr formula for multiple assets} and Corollary~\ref{representation of coherent}, respectively. In particular, the expression for the domain $\mathscr{\cD}$ is a consequence of $B(\cA+\cB)=B(\cA)\cap B(\cB)$. In both cases, the attainability of the supremum follows from Remark~\ref{remark: attainability}.
\end{proof}

\end{document}